\def\eps{{\varepsilon}}
\def\Prob{{\mathbb{P}}}
\def\prob{{\mathbb{P}}}
\def\reals{\mathbb{R}}
\def\integers{\mathbb{Z}}
\def\bA{\mathbf{A}}
\def\bB{\mathbf{B}}
\def\bG{\mathbf{G}}
\def\bS{\mathbf{S}}
\def\bT{\mathbf{T}}
\def\bZ{\mathbf{Z}}
\def\bt{\mathbf{t}}
\def\bx{\mathbf{x}}
\def\brL{{\bar L}}
\def\brc{{\bar c}}
\def\cA{\mathcal{A}}
\def\cB{\mathcal{B}}
\def\cC{\mathcal{C}}
\def\cD{\mathcal{D}}
\def\cI{\mathcal{I}}
\def\cJ{\mathcal{J}}
\def\cF{\mathcal{F}}
\def\cG{\mathcal{G}}
\def\cE{\mathcal{E}}
\def\cL{\mathcal{L}}
\def\cM{\mathcal{M}}
\def\cN{\mathcal{N}}
\def\cP{\mathcal{P}}
\def\cS{\mathcal{S}}
\def\cT{\mathcal{T}}
\def\cW{\mathcal{W}}
\def\cX{\mathcal{X}}
\def\cY{\mathcal{Y}}
\def\cZ{\mathcal{Z}}
\def\fM{\mathfrak{M}}
\def\fN{\mathfrak{N}}
\def\fg{\mathfrak{g}}
\def\fl{\mathfrak{l}}
\def\tF{{\tilde F}}
\def\tL{{\tilde L}}
\def\tk{{\tilde k}}
\def\beq{\begin{equation}}
\def\eeq{\end{equation}}
\newtheoremstyle{case}{}{}{}{}{}{:}{ }{}
\theoremstyle{case}
\theoremstyle{definition}
\newtheorem{definition}{Definition}[section]
\newtheorem{theorem}[definition]{Theorem}
\newtheorem{lemma}[definition]{Lemma}
\newtheorem{proposition}[definition]{Proposition}
\newtheorem{remark}[definition]{Remark}
\newtheorem{example}[definition]{Example}
\numberwithin{equation}{section}
\pgfmathsetmacro{\xcoord}{cos(60)}
\pgfmathsetmacro{\ycoord}{sin(60)}
\newcommand{\boundellipse}[3]
{(#1) ellipse (#2 and #3)
}
\tikzset{
>=stealth',
  punktchain/.style={
    rectangle, 
    rounded corners, 
    draw=black, very thick,
    text width=10em, 
    minimum height=3em, 
    text centered, 
    on chain},
  line/.style={draw, thick, <-},
  element/.style={
    tape,
    top color=white,
    bottom color=blue!50!black!60!,
    minimum width=8em,
    draw=blue!40!black!90, very thick,
    text width=10em, 
    minimum height=3.5em, 
    text centered, 
    on chain},
  every join/.style={->, thick,shorten >=1pt},
  decoration={brace},
  tuborg/.style={decorate},
  tubnode/.style={midway, right=2pt},
}
\author{P\'eter N\'andori}
\address{Department of Mathematical Sciences, Yeshiva University, New York, NY, USA}
\email{peter.nandori@yu.edu }
\author{Trevor Teolis}
\address{Department of Mathematics, University of Illinois
at Chicago, Chicago, IL, USA}
\email{tteoli2@uic.edu}
\title{Local equilibrium in planar 
non interacting particle systems}
\date{}							
\begin{document}
\maketitle

\begin{abstract}
Particles are injected to a large planar rectangle through the boundary. Assuming
that the particles move independently from one another
and the boundary is also absorbing, 
we identify a set of abstract conditions which imply the local equilibrium of 
the particle density in 
diffusive scaling limit. We verify that our abstract conditions hold in two examples:
iid random walks and the periodic Lorentz process. 
\end{abstract}


\section{Introduction}

A major open problem in mathematical statistical mechanics is to rigorously 
derive macroscopic
laws of physics, such as Fourier's law of heat conduction, from underlying microscopic
principles \cite{BLRB00}.
A realistic microscopic model should consist of a macrosopic domain inside which
the microscopic particles are subject to some bulk dynamics and interact with a heat bath
on the boundary. If the temperature of the heat bath varies along the boundary,
then one would like to study the emergence of local equilibrium (i.e. the existence of a well 
defined temperature at microscopic or mesoscopic locations inside the domain).

There are two separate classes of models for the particle dynamics. The first class
is stochastic, namely Markov processes. Because of the Markov property, the future
of the system can equally be described no matter what happened in the past and 
so Markov processes provide an excellent opportunity to derive beautiful mathematical
results. Indeed, the results oftentimes go much beyond the derivation of the
heat equations (such as second order fluctuations or other PDEs).
We don't attempt to review the literature of such Markov models here so we refer the reader 
to the classical surveys \cite{S91, KL}.

The second class of models are realistic (Hamiltonian) deterministic dynamical 
systems. Proving that the bulk 
dynamics obey the heat equation becomes considerably harder for such deterministic systems.  
However, a notable realistic Hamiltonian system for which rigorous results are available is the Sinai 
billiard \cite{S70}. 
In Sinai billiards, point particles fly freely among fixed convex bodies and elastically 
collide on their boundaries. In this case, a rigorous study of a variant
of the problem prescribed
in the first paragraph is possible when 'temperature' is replaced by 'particle density' 
and
'the heat bath' is replaced by 'varying chemical potential'. Indeed, the point 
particles do not interact with one another and so there is no exchange of energies. Furthermore, the trajectory of each particle satisfies the central limit theorem
 \cite{BS81, 
BSCh90, BSC91} leading to the heat equation in the bulk. 
However, even in this case of a 'non-interacting particle system', 
a better understanding of the boundary phenomena is desirable. 

We now describe the problem to be studied here. Let $D \subset \reals^2$ be a bounded domain with piece-wise smooth
boundary and
let particles be injected to the large domain $L D$ 
for $L \gg 1$
through
its boundary.
The particles will then perform some independent motion $\cZ$ on a lattice inside $L D$.
The boundary is also absorbing so most particles
are killed (i.e. absorbed) shortly after injection. However, some will survive for 
a long time and find their way deep into the interior of $L D$. 
The problem now is to show that the
limiting density profile of particles is governed
by the heat equation when time is rescaled by $L^2$
and by the Laplace
equation when time is infinite, where, in both cases, the boundary conditions are given by the injection rate.
We will refer to the first case as hydrodynamic and the second
one as hydrostatic limit.
This terminology is somewhat unusual since there is no energy exchange here, but we find it natural since we are studying the scaling 
limits of particle systems.
Specifically, we look at the problem
of proving local equilibrium of the particle density profile in systems forced
out of equilibrium when the particle injection rate varies along
the boundary of the domain. 

In this paper, we identify an abstract framework for which we can solve the problem presented in the previous paragraph, that is, we can prove the local equilibrium in both the hydrodynamic 
and the hydrostatic limit in the case $D$ is a rectangle.  This framework is general enough to include two basic examples: (1) when $\cZ$ is an
iid random walk, and (2) when $\cZ$ is given by the 
spatially periodic extension of the 
Sinai billiard (called periodic Lorentz process).
The abstract
framework is given by some hypotheses (H1)-(H3)
(see Section \ref{sec:def1}). The main hypothesis
is (H2), which is a conditional local invariance principle conditioned on 
survival of the particle. In case of the periodic Lorentz gas, our
result provides a natural extension of \cite{DN16} from one dimensional 
domains (i.e. line segments) to two dimensional rectangles. 
See Figure \ref{fig2} for the case of periodic Lorentz gas: particles, 
indicated by blue dots, are 
injected from the left ("West") side of a large rectangle while the entire
boundary of the rectangle is absorbing.

\begin{figure}
\begin{center}
\begin{tikzpicture}[scale=1.2]
\centering


\foreach \j in {0,1,2,3,4,5}
{
\foreach \i in { -3, -2, -1,0,1, 2,3,4,5,6}
{
\draw[fill=gray!50] \boundellipse{\i+0.5,\j + 0.5}{0.3}{0.4};
\path[fill=gray!50] (\i+ 0.7,\j + 0) to [out=90, in=180] (\i+1,\j + 0.3)  --  (\i+1,\j +  0)--  (\i+ 0.7,\j + 0);
\path[fill=gray!50] (\i+ 0.7,\j + 1) to [out=270, in=180] (\i+1,\j + 0.7) --  (\i+1, \j + 1) -- (\i+ 0.7,\j + 1);
\path[fill=gray!50] (\i+ 0.3,\j + 0) to [out=90, in=0] (\i,\j + 0.3)  --  (\i, \j + 0)--  (\i+ 0.3,\j + 0);
\path[fill=gray!50] (\i+ 0.3,\j + 1) to [out=270, in=0] (\i,\j + 0.7) --  (\i,\j +  1) -- (\i+ 0.3,\j + 1);

\draw (\i,\j) -- (\i+1,\j) -- (\i +1 ,\j + 1) -- (\i,\j + 1) --(\i,\j);

\draw (\i,\j + 0.7) arc (270:360:0.3);
\draw (\i+0.7,\j + 1) arc (180:270:0.3);
\draw (\i+1,\j + 0.3) arc (90:180:0.3);
\draw (\i+0.3,\j + 0) arc (0:90:0.3);
}
}


\foreach \j in {0,1,2,3,4,5}
{
\foreach \i in { -3, -2, -1}
{
\draw (\i + 0.1 + rand*0.1,\j  + 0.5 + rand*0.2) node[draw,circle,fill=blue,blue,inner sep=0,minimum size=0.6mm] (W) {};
}
}

\foreach \j in {1,2,3,4}
{
\foreach \i in { -3, -2, -1,0,1}
{
\draw (\i + 0.9 + rand*0.1,\j  + 0.5 + rand*0.2) node[draw,circle,fill=blue,blue,inner sep=0,minimum size=0.6mm] (W) {};
}
}

\foreach \j in {2,3}
{
\foreach \i in { -3, -2, -1,0,1,2,3}
{
\draw (\i + 0.9 + rand*0.1,\j  + 0.5 + rand*0.2) node[draw,circle,fill=blue,blue,inner sep=0,minimum size=0.6mm] (W) {};
}
}

\foreach \j in {2,4}
{
\foreach \i in { -1,1}
{
\draw (\i + 0.5 + rand*0.2,\j  + 1 + rand*0.07) node[draw,circle,fill=blue,blue,inner sep=0,minimum size=0.6mm] (W) {};
}
}

\foreach \j in {2,3,4,5}
{
\foreach \i in { -3, -2, -1,0}
{
\draw (\i + 0.1 + rand*0.1,\j  + 0.5 + rand*0.2) node[draw,circle,fill=blue,blue,inner sep=0,minimum size=0.6mm] (W) {};
}
}

\draw (4+ 0.1 + rand*0.1,4  + 0.5 + rand*0.2) node[draw,circle,fill=blue,blue,inner sep=0,minimum size=0.6mm] (W) {};

\draw (5+ 0.9 + rand*0.1, 1  + 0.5 + rand*0.2) node[draw,circle,fill=blue,blue,inner sep=0,minimum size=0.6mm] (W) {};

\end{tikzpicture}
\caption{
Particle configuration in a large rectangle
(point particles are enlarged for better visibility)
} \label{fig2}
\end{center}
\end{figure}
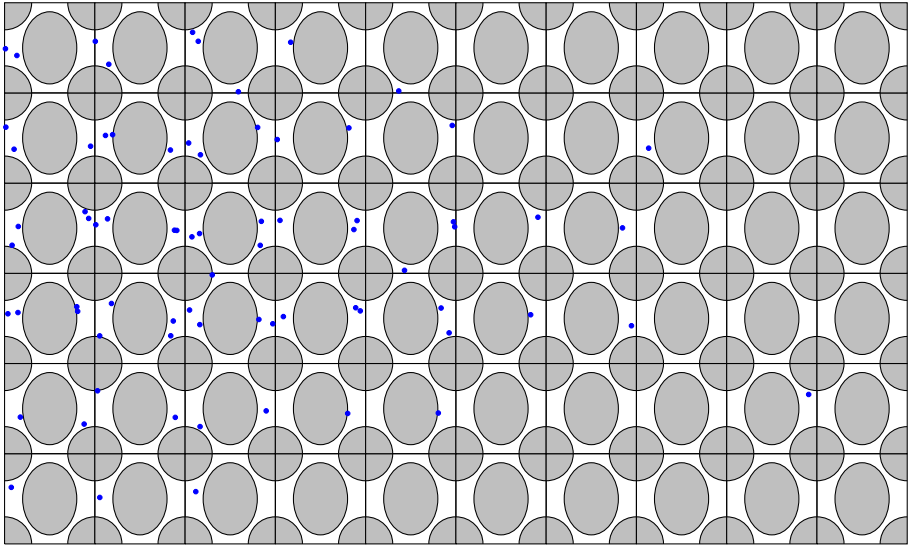

Complementary to our approach, there is a classical proof based
on the idea of duality. If the particle motion in the bulk has a nice
dual process and the injection on the boundary is chosen very carefully,
then the problem can be reformulated in terms of the hitting of $\partial D$ by 
the dual process starting from the bulk.
The approach by duality thus gives similar results with two major differences:
it is more
general in the sense that $D$ can be any domain with piece-wise smooth 
boundary, but it is more restrictive in the sense that requires both the 
existence of a nice
dual process and a very specific way of injection on the boundary.

In case of our two basic examples, the dual process 
(in fact a certain reverted process) is essentially the
same as the original process.
We present the proof by duality in 
Section \ref{sec:dual}.
In case of Markov processes, the proof by duality is very well
known even for some interacting particle systems, see e.g.
\cite{KMP} (we do not 
review the literature and do not
study any of these stochastic interacting particle systems here).
The proof by duality is not surprising for the Lorentz gas either, but it
was not observed in \cite{DN16} and so our Proposition
\ref{prop:dualbil} (with trivial changes to include
a $1$ dimensional macroscopic domain)
gives a simple new proof of the main results of \cite{DN16} in case of a
very special injection mechanism, which is essentially given by the Lebesque measure.
The utility of this special injection mechanism is limited, however, since no reasonable heat bath is 
likely to preserve the invariant measure of the bulk dynamics (see e.g. \cite{B..12}).

Because the proof by duality requires a very rigid structure of both 
the bulk dynamics and the injection, it is essential to develop other tools which do not require
such a rigid structure. Such tools are exemplified by our main results in Sections \ref{sec2} and \ref{sec3}.
Indeed, our injection procedure \eqref{eq:Poi} is quite general:
besides the dependence on the macroscopic position we allow 
the injection rate to depend on
the microscopic geometry through some function $\bA$
and on time through another function
$\bB$.
In case of deterministic systems, the only source of randomness is the
choice of the initial condition
according to an initial probability measure. Once the initial condition is fixed, $\cZ$ is deterministic.
We allow a lot of initial measures. For example any 
"standard pair" \cite{CD09} in case of Sinai billiards. In our context a useful way of thinking about standard pairs is
that they are conditional measures corresponding 
to a given past symbolic trajectory of the particle. Except for the special
choice of $\bA$ and $\bB$ as in Proposition \ref{prop:RW}, 
we believe that our results are new even in case of random walks.
Finally, we believe that some ideas presented here could be of use for studying deterministic interacting particle
systems as well.

The rest of this paper is organized as follows.
In Section \ref{sec2}, we provide the basic definitions 
and the main result Theorem \ref{thm1} in our abstract
framework. In Section \ref{sec3}, we present our two basic examples
-- namely, the random walk and the Lorentz gas.
In Section \ref{sec:dual}, we discuss the approach by duality. 
In Section \ref{sec5}, we prove Theorem \ref{thm1}.
The most technical part of this work is 
the verification of the conditional local invariance principle
(H2) for the Lorentz gas, which is presented in Section \ref{sec6}. This section
is heavily built upon the 
standard pair technique of Chernov and Dolgopyat 
\cite{CD09} and tools from \cite{DN16}, such as the mixing local limit theorem.
The necessary background is summarized in 
Section \ref{sec:billb}.

\section{Abstract setup}
\label{sec2}

\subsection{Non-interacting particle systems}
\label{sec:def1}

Let $\cL \subset \reals^2$ be a lattice of dimension $2$. 
We consider the graph $\bG$ with vertices $\cL$ and edges joining $l$
with $l+w_j$ for all $l \in \cL$ and $j=1,...,J$ for a fixed set 
$\{ w_1,...,w_J \} \subset \cL$.
For $z \in \reals^2$, let $\langle z \rangle$ be the closest $\fl \in \cL$ to $z$
with the property that $\fl_1 \geq z_1$
(if there is more than one such lattice points,
then choose the smallest in lexicographic order).

Let 
$(\bS, \mathbb P)$ be a probability space and 
$\cZ_t$  ($t \geq 0$) be an $\cL$ valued stochastic
process. That is, $\cZ_t: \bS  \rightarrow \cL$ for every $t \geq 0$. We assume that $\cZ$ is 
continuous from the right and has left limits. In other words, $\cZ$ is
a c\`adl\`ag function (i.e. for almost every
$\bf s \in \bS$ fixed, $\cZ$ jumps at random
times $\bf t$ from a lattice point $\cZ_{\bf t-}$ to another lattice point $\cZ_{\bf t}$).
We do not assume that $\cZ$ is Markovian.

Now let $ D= [0,A] \times [0,1]$ for a fixed positive real $A$.
Fix a non-negative continuous
functions $f: [0,1] \to \reals$
and write
$F: \partial D \to \reals$, 
$$
F(z) = 
\begin{cases}
f(y) \text{ if } z = (0,y)\\
0 \text{ otherwise. }
\end{cases}
$$ 
We will consider the following Dirichlet problems
\begin{equation}
\label{eq:laplace}
\Delta u = 0, \quad u|_{\partial D} = \varsigma F,
\end{equation}
\begin{equation}
\label{eq:heat}
v_t = \frac12 \left[ v_{xx} + v_{yy} \right], \quad
v(t,x,y)|_{(x,y) \in \partial D} = \varsigma F,
v(0,x,y) = 0.
\end{equation}

We are actually interested in the Dirichlet problem 
where $F$ is permitted to be nonzero for all boundary points (as in \eqref{eq:laplace2}),
but this case follows from linearity.
By classical theory, there is a unique solution to both the Laplace
equation
\eqref{eq:laplace} and the heat equation \eqref{eq:heat}
and furthermore $\lim_{t \to \infty}v(t,x,y) = u(x,y)$.
Of course this is true for much more general domains $D$,
e.g. when $\partial D$ is piecewise-smooth with no cusps.


For $L \gg 1$, let $ D_L = (LD) \cap \cL$, 
$$\partial D_L = \{ \fl \in D_L: \fl \text{ is connected to a point outside of $D_L$} \},$$
and
$$\partial_W D_L = \{ \fl \in D_L: \fl \text{ is connected to a point $\fl'$ with $\fl'_1 <0$} \}.$$
Here $\partial_W$ stands for West boundary as points in $\partial_W D_L$ are close
to the "West" side of the rectangle $D_L$.
Given $\fl \in \partial D_L$, let 
$$\cJ(\fl)
= \{ j=1,...,J: \fl + w_j \notin D_L \}
$$
We consider the following process for $L \gg 1$.
First, for some $t \in \mathbb R_+ \cup \{ \infty\}$, 
let $\Theta_t$ be a Poisson point 
process on $(-t, 0] \times \partial_W D_L$
with intensity measure 
\begin{equation}
\label{eq:Poi}
\bA(\cJ(\fl)) \bB(s) f(\fl_2/L) d \mbox{Leb} (s) 
d \mbox{counting} (\fl),
\end{equation}
where $\fl \in \partial D_L$
and $\bA: 2^{\{1,...,J \}}  \to \reals_+$ and $\bB: \reals \to \reals_+$ are fixed functions.
We assume that $\bB$ is continuous, periodic 
with period $1$, and $\int_0^1 \bB = 1$. 
One example is $\bA(\cJ)  = |\cJ|$ and $\bB = 1$. However, we want
to allow more general functions to accommodate for more general behavior
of the heat bath.

For each point $(T,\fl) \in \Theta$, we start an iid copy
of $\mathcal Z$ at time $T$ from position $\fl$ and we kill
it at 
\begin{equation}
\label{eq:tau*}
\tau^* = \inf \{ t > T : \cZ_t \notin D_L \},
\end{equation}
the first exit from $D_L$. 
In the case $\cZ$ is not Markovian, the initial condition $\cZ_T = \fl$
may not define the distribution of $\cZ_{T+t}$ for $t > 0$ uniquely. 
In this case, we allow 
multiple choices of this distribution but we require that 
$\cZ_{T+t} -\fl $ only depends on $\fl$ through 
$\cJ(\fl)$. That is, if $\fl, \fl' \in \partial D_L$,  with 
$\cJ(\fl) = \cJ(\fl')$ and $(T, \fl), (T', \fl') \in \Theta$, then we require
that for all $t \geq 0$, and for all $\tilde{\fl} \in \cL$,
$$\Prob(\cZ_{T+t} = \fl + \tilde{\fl} | \cZ_T = \fl)
= \Prob(\cZ_{T'+t} = \fl' + \tilde{\fl} | \cZ_{T'} = \fl').
$$
This procedure is to be interpreted
as injecting a particle to the domain $D_L$ at time $T$ 
through an edge $(\fl_-, \fl)$ of the graph $\bG$, where $\fl_- \notin D_L$,
$\fl \in D_L$
and letting particles evolve independently from one another 
until coming back to the absorbing boundary. The specific mechanism of injection through $(\fl_-, \fl)$ only depends on $j = 1,...,J$, where
$\fl - \fl_- = w_j$.
Let $\Lambda_t (\fl)$ be the number of particles at site $\fl$ at time $T=0$. We start with the 
following abstract result.

\begin{theorem}
\label{thm1}
Assume that (H1) - (H3) are satisfied. Then 
for any $z$ in the interior of $D$
\begin{equation}
\label{eq:thm11}
\lim_{L \to \infty}\mathbb E (\Lambda_{\infty} (\langle z L  \rangle)) = u(z)
\end{equation}
and
\begin{equation}
\label{eq:thm12}
\lim_{L \to \infty}\mathbb 
E (\Lambda_{tL^2} (\langle z L  \rangle)) = v(t,z)
\end{equation}
where $u$ and $v$ are defined by 
\eqref{eq:laplace} and \eqref{eq:heat}
with some $\varsigma$. 
\end{theorem}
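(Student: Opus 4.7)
The plan is to prove both \eqref{eq:thm11} and \eqref{eq:thm12} by expressing $\mathbb{E}(\Lambda_t(\fl_*))$ as a Campbell integral against the Poisson injection, applying the conditional local invariance principle (H2) to identify the transition kernel in the diffusive scaling limit, and then recognizing the resulting expression as the probabilistic representation of $v$ or $u$. Since the particle trajectories are iid and the injection points form a Poisson process with intensity \eqref{eq:Poi}, Campbell's formula gives, for any $t \in (0, \infty]$,
\begin{equation*}
\mathbb{E}(\Lambda_t(\fl_*)) \;=\; \int_0^t \sum_{\fl \in \partial_W D_L} \bA(\cJ(\fl))\, \bB(-s)\, f(\fl_2/L)\, p_L(s, \fl; \fl_*)\, ds,
\end{equation*}
where $p_L(s,\fl;\fl_*) := \Prob(\cZ_s^\fl = \fl_*,\ \tau^* > s)$ is the probability that a particle started from $\fl$ is still alive and located at $\fl_*$ at time $s$.

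Setting $\fl_* = \langle zL \rangle$ and $s = L^2 \sigma$, hypothesis (H2) should give, uniformly for $\sigma$ in compact subsets of $(0,\infty)$ and $\fl \in \partial_W D_L$,
\begin{equation*}
p_L(L^2 \sigma,\fl;\fl_*) \;=\; \frac{\kappa(\cJ(\fl))}{L^3}\, q^D_\sigma\!\bigl(z,\fl/L\bigr) \,+\, o\!\left(L^{-3}\right),
\end{equation*}
where $q^D_\sigma(z,\xi)$ is the exit density at boundary point $\xi$ at time $\sigma$ of the limiting Brownian motion on $D$ started at $z$ (with the diffusion matrix determined by $\cZ$), and $\kappa$ is a geometric factor arising from the microscopic injection mechanism. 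The extra $L^{-1}$ relative to the bulk local-limit scaling $L^{-2}$ reflects the ballot-type suppression due to $\fl$ lying on the absorbing boundary.

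Substituting this back and changing variables $s \mapsto L^2 \sigma$, the boundary sum $\sum_{\fl \in \partial_W D_L}$ becomes, after absorbing one factor of $L^{-1}$ as a Riemann measure, a Riemann sum over $\{0\} \times [0,1]$, while the periodic factor $\bB(-s)$ averages on each unit $s$-interval to $\int_0^1 \bB = 1$. Collecting all dimensional constants into a single $\varsigma > 0$ independent of $z$, $t$, and $f$, one obtains
\begin{equation*}
\lim_{L \to \infty}\mathbb{E}\!\left(\Lambda_{tL^2}(\langle zL \rangle)\right) \;=\; \varsigma \int_0^t \!\!\int_0^1 f(y)\, q^D_\sigma\!\bigl(z,(0,y)\bigr)\, dy\, d\sigma,
\end{equation*}
which is the classical probabilistic representation of $v(t,z)$ defined by \eqref{eq:heat}, yielding \eqref{eq:thm12}. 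For \eqref{eq:thm11} one passes $t \to \infty$: by the a.s.\ finiteness of the exit time from $D$ the double integral converges to $u(z)$ (the harmonic-measure representation of \eqref{eq:laplace}), and (H3) supplies the dominated bound justifying the interchange of this limit with $L \to \infty$.

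The hard part will be applying (H2) with enough uniformity in $\fl$ and $\sigma$ to justify the interchange of the sum over $\fl$ with the limit $L \to \infty$, together with dominated bounds for the two ranges outside the reach of (H2): the short times $s$ (on a scale below $L$) where the invariance principle has not yet activated, and, in the hydrostatic case, the long tails $s \gg L^2$ where survival is exponentially rare. Both of these regimes should be absorbed using the survival-tail estimate in (H3). Verifying (H2) itself in the deterministic example (the Lorentz gas), where conditioning on survival must be carried out through the standard-pair formalism, is the deepest technical point and is presumably the content of Section \ref{sec6}.
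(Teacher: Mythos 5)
Your outline has the right skeleton --- expressing $\mathbb E(\Lambda_t(\langle zL\rangle))$ as a Campbell integral, applying (H2) on the diffusive window $\delta L^2 \le s \le L^2/\delta$ to extract the limiting kernel, and using (H3) for the short and long exceptional time ranges --- and this matches the paper's Step 1 (the decomposition into $I_1+I_2+I_3$). But two steps are genuinely missing.

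First, (H2) does not give uniformity over all of $\partial_W D_L$, as your bookkeeping assumes. Its uniformity clause requires $\eps < \eta < \eta+\eps < \xi < 1/\eps$, which after the change of variables $T = sL^2$, $\eta = \fl_2/(L\sqrt s)$, $\xi = 1/\sqrt s$ amounts to $\fl_2/L$ staying an $\eps\sqrt s$-margin away from $0$ and $1$. Near the two corners of the West side the hypothesis is silent, and (H3) does not rescue you here: (H3) controls anomalous \emph{times}, not anomalous injection \emph{locations}. The paper deals with this by truncating $f$ to $f_{\delta'}$, supported away from the corners (Proposition \ref{prop1} is proved for $f_{\delta'}$), and then bounding the remainder $I_{12}$ by comparison with the Dirichlet problem on the enlarged strip $(0,1)\times(-1,2)$ --- for which the corners of $D$ are interior, so (H2) applies --- followed by $\delta'\to 0$. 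Some substitute for this argument is needed.

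Second, and more substantively, the step you describe as ``recognizing the resulting expression as the probabilistic representation of $v$ or $u$'' is the heart of the matter and not a formality. What (H2) hands you is the explicit product $c_k\,\psi(\alpha,\beta)\,\phi(\eta,\gamma,\xi)$; that the Campbell integral of this product against the boundary data solves \eqref{eq:heat} (resp.\ \eqref{eq:laplace}) with the correct boundary condition, and with the constant $\varsigma$ independent of $z,t,f$, is exactly Proposition \ref{prop1}. The paper proves it in two ways, neither of which is a matter of recognition: (Proof 1) a uniqueness/comparison argument --- (H1)--(H3) determine the iterated limit uniquely, the simple symmetric random walk satisfies (H1)--(H3), and for it the conclusion is established independently via duality (Proposition \ref{prop:RW}, cf.\ Remark \ref{rem}), so the limit must coincide with the PDE solution for any process in the class; or (Proof 2) an explicit verification of harmonicity and boundary conditions from the double sums in \eqref{eq:phi}, \eqref{eq:psi}, including a nontrivial convergence lemma. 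Your proposal asserts the identification of the kernel with the exit density but supplies neither route, so as written it elides the central analytic content.
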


To define our hypotheses (H1) - (H3), we need some definitions.

\noindent Let $W_t$ be a standard Brownian motion. 
Let 
$$\phi(\eta, \gamma, \xi)
= \lim_{dt \to 0} \frac{1}{dt} 
\prob (W_1 \in [\gamma, \gamma + dt],
\min_{t \in [0,1]} W_t >0,\max_{t \in [0,1]} W_t < \xi | W_0 = \eta).
$$ 
It is known (see e.g. \cite{F71})
that for any $0<\gamma, \eta < \xi$, the following formula holds
\begin{equation}
\label{eq:phi}
\phi(\eta, \gamma, \xi) = \sum_{n=-\infty}^{\infty} \frac{1}{\sqrt{2\pi}} \Bigg( \exp \Big(-\frac{(\gamma - \eta - 2n\xi)^2}{2} \Big) - \exp \Big(-\frac{(\gamma + \eta + 2n\xi)^2}{2}\Big) \Bigg).
\end{equation}

Recall that the Brownian meander is a stochastic
process on $[0,1]$ obtained by conditioning a standard Brownian 
motion to stay positive on $[0,1]$ (which has zero probability 
but the definition still makes sense by conditioning on 
staying above $-\eps$, letting $\eps \to 0$ and taking weak limit, see e.g. \cite{DIM77}).
Let $\mathfrak X(t)$ be a Brownian meander and $\mathfrak M(t) = \max_{0\leq s \leq t} \mathfrak X(s)$ its maximum. Then it is proven in \cite[Theorem 5]{Ch76} that the function
$$
\psi(\alpha, \beta) = \lim_{dt \to 0} \frac{1}{dt} \Prob (\mathfrak X(1) \in [\alpha, \alpha+ dt], \mathfrak
M(1) < \beta)
$$
for any $0 < \alpha < \beta$ satisfies 
\begin{equation}
\label{eq:psi}
\psi(\alpha, \beta) = \sum_{k=-\infty}^{\infty} (2k\beta + \alpha)\exp \Big(-\frac{(2k\beta + \alpha)^2}{2} \Big).
\end{equation}
Note that the formulas \eqref{eq:phi} and \eqref{eq:psi}
are closely related as the Brownian meander is closely related to
the Brownian motion. Indeed, by the definition of Brownian meander,
$\psi(\alpha, \beta) = 
\lim_{\eta \to 0} 
\phi(\eta, \alpha,\beta) / \int_0^\beta \phi(\eta, \alpha',\beta) d\alpha'$. We refer to \cite{Ch76} for more details.

Let us write $\cZ_t = (\cX_t, \cY_t)$. Denote
$$
\tau_{x}^{\cX} = 
\begin{cases}
\min\{ 
t>0: \cX_t > x \} & \text{ if } x >0\\
\min\{ 
t>0: \cX_t < x \} & \text{ if } x \leq 0.
\end{cases}
$$
We define $\tau_{y}^{\cY}$ analogously.

Now we make the following assumptions:
\begin{enumerate}
\item[(H1)] {\bf Vertical rational dependence}
{\it There is some $\fl \in \cL$, $\fl \neq 0$ so that $\fl_1 = 0$.}
\end{enumerate}

Let $(0, 0) = \fl^{(0)}, \fl^{(1)},\fl^{(2)}...$ be the enumeration of points $\fl \in \partial D_L$
which are connected to lattice points with negative first coordinate
in increasing order of second coordinate (that is 
$\fl_2^{(j)} \leq \fl_{2}^{(j+1)}$). If there are points 
$\fl^{(j)}, \fl^{(j+1)}$ with the same second coordinate, then we order
them in increasing order of the first coordinate. 
Let $K$ be the smallest positive integer so that 
\begin{equation}
\label{eq:defK}
\fl^{(K)}_1 = 0.
\end{equation}
By condition
(H1), $K$ exists. Now we say that the the lattice point 
$\fl \in \partial D_L$ is of type $k$ with $k=1,...,K$ if there 
exists an integer $m$ so that $\fl = \fl^{(mK+k)}$.

\begin{enumerate}
\item[(H2)]
{\bf Conditional local invariance principle}

{\it 
There are constants
$c_1,...,c_K$ so that
for any $0 < \alpha < \beta$ and for any 
$0 < \eta, \gamma < \xi$ the following holds. If 
$\fl \in \partial D_L$
is of type $k$, and $\fl_2 = \eta \sqrt T$, then

$$
\lim_{T \to \infty}  T^{3/2}
\prob \left( 
\cZ_T = \langle (\alpha, \gamma) \sqrt T \rangle,
\min \{
 \tau_0^{\cY},
\tau_{\xi \sqrt T }^{\cY}, 
\tau_0^{\cX},
\tau_{\beta \sqrt T }^{\cX} \} > T
| \cZ_0 = \fl \right)
$$
$$
= c_k \psi(\alpha, \beta) \phi(\eta, \gamma, \xi).
$$
Furthermore, for any $\eps >0$, the convergence is uniform 
for $\eps < \alpha < \alpha + \eps < \beta < 1/\eps$ and
$\eps < \eta < \eta + \eps < \xi < 1/\eps$,
$\eps < \gamma < \gamma + \eps < \xi$.
}

\item[(H3)] {\bf Moderate deviation bounds}
For any $x \in (0,1)$ and $y \in (-1,1)$, and for any $\fl = \fl^{(0)}, ...,  \fl^{(K-1)}$
$$
\lim_{\delta \to 0} \lim_{L \to \infty} \int_{[0, \delta L^2] \cup [L^2/\delta, \infty)}
L \Prob(
\cZ_t = \langle (xL,yL)  \rangle,
\min \{ 
\tau_0^{\cX},
\tau_{L }^{\cX} \} > t
| \cZ_0 = \fl )dt = 0
$$

\end{enumerate}


\subsection{Local equilibrium}

Consider now the Dirichlet problems
\begin{equation}
\label{eq:laplace2}
\Delta \tilde u = 0, \tilde u|_{\partial D} =  \tilde F,
\end{equation}
\begin{equation}
\label{eq:heat2}
\tilde v_t = \frac12 \left[ \tilde v_{xx} + \tilde v_{yy} \right], \quad
\tilde v(t,x,y)|_{(x,y) \in \partial D} =  \tilde F,
\tilde v(0,x,y) = 0.
\end{equation}
Here $\tF$ is defined by
$\tilde F: \partial D \to \reals$, 
$$
\tF(z) = 
\begin{cases}
\varsigma_W f_W(y) \text{ if }z = (0,y)\\
\varsigma_S f_S(x) \text{ if }z = (x,0)\\
\varsigma_E f_E(y) \text{ if }z = (A,y)\\
\varsigma_N f_N(y) \text{ if }z = (x,1),\\
\end{cases}
$$

where $f_E, f_W : [0,1] \to \reals$, $f_N, f_S : [0,A] \to \reals$
are given non-negative 
continuous functions and $\varsigma_{W/S/E/N}$ are non-negative real numbers
($W,S,E, N$ stand for West, South, 
North and East).
We perform the same procedure of injecting particles and absorbing
them on the boundary as before, but now we inject from all 4 sides of
the rectangle. Let $\tilde \Lambda_t$ denote the resulting measure 
defined as $\Lambda_t$.

We say that $\cZ$ satisfies that 
{\bf local equilibrium} (LE)
if 
for any $ t \in \reals_+ \cup \{ \infty\}$, for any
$k \in \mathbb Z_+$, for any $z_1,...,z_k$ distinct points in the interior $ D$
and for any distinct lattice points $\fl_1, ..., \fl_k \in \cL$, the joint distribution of 
$$
\mathfrak W_{t,i,j,L} := \tilde \Lambda_{t L^2}  (\langle z_i L \rangle + \fl_j), \quad  i,j=1,...,k
$$
converge weakly as $L \to \infty$ to independent Poisson random variables 
$\mathfrak W_{t,i,j,\infty}$
with expectation $\tilde v(t,z_i)$
(or 
$\tilde u(z_i)$ in case $t = \infty$),
where $\tilde v$ is defined by \eqref{eq:heat2} 
(and $\tilde u$ is defined by \eqref{eq:laplace2})
with some constants 
$\varsigma_{W/S/E/N}$. The points $\langle z_i L \rangle + l_j$, 
$j=1,...,k$
can be thought of as lying in a microscopic region near $\langle z_i L \rangle$.  
In particular, each point $\langle z_i L \rangle + l_j$ is a finite distance from $\langle z_i L \rangle$ so that it is in a "local" region of $z_i$ as $L$ becomes large. Indeed, the term {\it local equilibrium} refers to the fact that the limiting
distribution does not depend on $j$. We call the case $t \in \reals_+$
{\it local equilibrium in the hydrodynamic
limit} and the case $t = \infty$ {\it local equilibrium in the hydrostatic limit}. Since
in our case both hold at the same time, we simply refer
to these properties as local equilibrium.

Finally, we say that a lattice $\cL$ is 
{\bf rational}
if there are non-zero lattice points
$\fl^{(K_1),1}, \fl^{(K_2),2}$ in $\cL$ so that 
$\fl^{(K_1),1}_1 = \fl^{(K_2),2}_2 = 0$. Without loss of generality,
we assume that $\fl^{(K_1),1}_2 > 0$ and  $\fl^{(K_1),1}_2$ is the smallest among
such vectors with respect to the ordering introduced right after (H1) (and likewise 
for $\fl^{(K_2),2}$, except that in the ordering, the role of the first and second coordinates
are swapped). Clearly, if $\cL$ is rational, then (H1) holds with $K=K_1$ (and likewise,
a variant of (H1), where the two coordinates are swapped, holds with $K=K_2$).

Next, we show some examples, where we can verify conditions 
(H1) - (H3) and also prove (LE).

\section{Basic examples}
\label{sec3}
\subsection{Random walks}
\label{sec:rw}

Let $\tilde \cL \subset \reals^2$ be a 2 dimensional lattice. Let $\tilde \cP$ be a finitely
supported probability measure
on $\tilde \cL$ with zero expectation. We assume that there are finitely
many lattice points $\tilde w_1,...,\tilde w_J$ so that $\tilde \cP(\tilde w_j) > 0$ and 
$\sum \tilde \cP(\tilde w_j) = 1$. To avoid degeneracy, we assume that the group generated by $\tilde w_j$'s is $\tilde \cL$.

Let $\tilde \cZ$ be a homogeneous Markov
process: at exponential distributed times, $\tilde \cZ$ jumps with a 
jump distribution given by $\tilde \cP$. That is, the generator $\tilde G$ of
$\tilde \cZ$ is defined by
\begin{equation}
\label{eq:generator}
(\tilde Gf)(\fl) = \sum_{j=1}^J \tilde \cP
(\tilde w_j) [f(\tilde w_j + \fl) - f(\fl) ]
\end{equation}
for test functions $f: \tilde \cL \to \reals$.
By the central limit theorem, $\tilde \cZ_t / \sqrt t$ converges weakly to a
Gaussian distribution with mean zero and some 
covariance matrix $\Sigma$. Furthermore,
the non-degeneracy assumption ensures that $\Sigma$ is positive definite. 
Now we define $\cL =  \Sigma^{-1/2}\tilde \cL$, $w_j = \Sigma^{-1/2} \tilde w_j$,
$\cP (w_j) = \tilde \cP (\tilde w_j)$, $\cZ = \Sigma^{-1/2} \tilde \cZ$. 
\begin{proposition}
\label{thm:rw}
If $\cL$ is a rational
lattice, then in the above model (H1) - (H3) hold. 
\end{proposition}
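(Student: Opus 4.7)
My plan is to verify each of the three hypotheses in turn: (H1) is essentially automatic, (H2) is the main substantive step, and (H3) follows from standard moderate-deviation estimates. First, (H1) is immediate from the definition of rationality: the hypothesis gives a nonzero $\fl^{(K_1),1} \in \cL$ with first coordinate zero, so one takes $K = K_1$.

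For (H2), the plan is to establish a joint conditional local central limit theorem for the rescaled random walk. Since $\cZ = \Sigma^{-1/2}\tilde\cZ$ is normalized so that the limiting 2D Brownian motion has identity covariance, its components are independent, and this independence is what produces the product structure $\psi(\alpha,\beta)\phi(\eta,\gamma,\xi)$ in the target asymptotic. I would proceed in two stages. First, prove an unconditional local CLT of Gnedenko type: uniformly for $|m| \leq C\sqrt{T}$, $\Prob(\cZ_T = m \mid \cZ_0 = 0) = \det(\cL)/(2\pi T) \cdot \exp(-|m|^2/(2T)) + o(T^{-1})$, via Fourier inversion of the characteristic function of $\cP$ (non-degeneracy of $\{w_j\}$ in generating $\cL$ ensures the characteristic function is bounded below $1$ in absolute value away from the origin of the dual torus). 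Second, upgrade to the conditional version by splitting $[0, T]$ at an intermediate time $T_1 = T^{1-\kappa}$ for small $\kappa > 0$, applying the Markov property, and on the long initial segment invoking Donsker's functional invariance principle together with the continuous mapping theorem (valid because Brownian motion almost surely does not touch a fixed strip boundary tangentially) to convert the discrete survival probabilities into their Brownian counterparts. The local CLT then pins down the exact endpoint on the short terminal segment. The $\cY$ coordinate starts at macroscopic distance $\fl_2 = \eta\sqrt{T}$ from both walls, giving the Dirichlet Brownian density $\phi$; the $\cX$ coordinate starts at microscopic distance $\fl^{(k)}_1$ from the left wall, giving a Brownian meander density $\psi$. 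The constants $c_k$ combine the lattice covolume $\det(\cL)$ with a type-dependent prefactor encoding the initial height $\fl^{(k)}_1$, extracted from the linear leading term of $\phi(\eta,\cdot,\cdot)$ as $\eta\to 0$ (visible from \eqref{eq:phi}).

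For (H3), I would bound the integrand on $[0, \delta L^2]$ and $[L^2/\delta, \infty)$ separately. Because $\cP$ has bounded support, Cram\'er-type exponential moment bounds give Gaussian large-deviation control, and the reflection principle (for small $t$) or the spectral decomposition of the $\cX$-generator restricted to $[0, L]$ with absorbing boundaries (for large $t$) supplies the survival decay. Concretely this gives $\Prob(\cZ_t = \langle(xL, yL)\rangle, \min\{\tau_0^\cX, \tau_L^\cX\} > t) \leq C \fl^{(k)}_1 t^{-3/2} e^{-cL^2/t}$ in the small-$t$ regime and $\leq C \fl^{(k)}_1 L^{-2} t^{-1/2} e^{-c t/L^2}$ in the large-$t$ regime, the latter rate being the smallest Dirichlet eigenvalue. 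The substitutions $s = L^2/t$ and $s = t/L^2$ reduce the respective $L$-weighted integrals to a common bound of the form $C \fl^{(k)}_1 \int_{1/\delta}^\infty s^{-1/2} e^{-cs}\, ds$, which vanishes as $\delta \to 0$.

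The hardest part will be (H2), specifically matching the discrete random walk to the Brownian meander near the left wall with the correct type-$k$ normalization. The $\cX$ coordinate starts at fixed microscopic distance from the absorbing wall, so the boundary layer is genuinely discrete at this scale; identifying the prefactor $c_k$ requires careful comparison of the discrete generator to its Brownian limit on a microscopic neighborhood of the wall, on top of the standard invariance principle that handles the macroscopic behavior of the path.
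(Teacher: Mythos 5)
Your outline is organized around the right objects: (H1) is indeed immediate, the core of (H2) is a conditional local limit theorem giving the meander--times--Brownian-motion product structure, and (H3) reduces to moderate-deviation tail estimates whose scaling you correctly identify. However, there is a genuine gap in the way you propose to prove (H2), and one supporting claim is false.

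The false claim first: you write that since $\cZ = \Sigma^{-1/2}\tilde\cZ$ has identity limiting covariance, ``its components are independent.'' That is true of the limiting Brownian motion but not of the discrete walk. For instance, if $\tilde\cP$ is the simple random walk on $\integers^2$ then after the coordinate change the jump is uniformly one of $(\pm\sqrt2,0),(0,\pm\sqrt2)$, so the increment coordinates are uncorrelated yet maximally dependent. The product structure $\psi\cdot\phi$ in (H2) comes from asymptotic independence in the Brownian limit, not from exact independence of the walk, so the derivation must survive this dependence.

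The gap: your plan is an unconditional Gnedenko LCLT via Fourier inversion, followed by a time-split and an appeal to ``Donsker's invariance principle together with the continuous mapping theorem.'' Donsker plus continuous mapping gives convergence of probabilities of \emph{fixed} Borel events of the path; it cannot by itself produce the asymptotics of $\Prob(\tau_0^{\cX} > T)\sim c_k/\sqrt T$ or the associated conditional meander limit, because the survival event has vanishing probability. What you actually need is a \emph{conditional} invariance principle: conditioned on $\{\tau_0^{\cX} > T\}$, the rescaled process $\cZ_{tT}/\sqrt T$ converges to the pair (Brownian meander, Brownian motion). In one dimension this is Caravenna--Chaumont, which the paper cites. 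The genuinely two-dimensional content --- and the part your proposal does not address at all --- is showing that conditioning on $\cX$-survival does not push $\cY$ out to infinity, i.e.\ controlling $\tau^{|\cY|}$ while $\cX$ is being conditioned. This is precisely what the paper's Lemmas~\ref{lemma:highrec}, \ref{lem:11.1}, and \ref{lem:lastH2} do in the billiard case (and the authors state that Proposition~\ref{thm:rw} ``follows from a much simplified version'' of that argument). Without a random-walk analogue of those lemmas you cannot justify applying the Markov property at your intermediate time $T_1$ and then feeding both coordinates into an invariance principle, since the conditioning entangles them. You flag the microscopic boundary layer in $\cX$ as the hard part, which is a real (one-dimensional) difficulty, but you miss that the paper's actual bottleneck is the $\cX$--$\cY$ coupling under conditioning, not the boundary layer.

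Your (H3) sketch (small-$t$ Gaussian tail, large-$t$ Dirichlet spectral gap, substitution $s = t/L^2$) has the correct scaling and is the natural route; the paper handles it by the same kind of moderate-deviation estimate (via \cite{DN16}, Lemma~7.2). Your identification of $c_k$ through $\det(\cL)$ and $\lim\sqrt T\,\Prob(\tau_0^{\cX}>T\mid\cZ_0=\fl^{(k)})$ is also the right answer, but that limit is itself part of the 1D conditional theory, not something extracted from the linear behaviour of $\phi(\eta,\cdot,\cdot)$ as $\eta\to0$.
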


We do not give a proof of Proposition \ref{thm:rw} as it follows
from a much simplified version of our proof of Theorem 
\ref{thm2}. In fact, the one dimensional version of (H2)
and (H3) is known 
for random walks, see \cite{C05,CC08}. We find it likely that
the two dimensional version is also known but we could not
find a reference. 

\subsection{Lorentz gas}
\label{sec:bil}

\subsubsection{Definitions}


We start with the definition of Sinai billiards \cite{S70}.
Consider a finite collection of strictly convex disjoint subsets $B_1,...,B_{\mathfrak k}$ of
the 2-torus
with $C^3$ boundary. The complement of these sets is denoted by
$\mathcal D_0 = \mathbb T^2 \setminus \cup_{i=1}^{\mathfrak k} B_i$
and is called the {\it configuration space}.
A point particle flies with constant speed inside 
$\mathcal D_0$ and undergoes specular reflection upon reaching $\partial \cD_0$
(i.e. angle of incidence equals the angle of reflection). Since the speed is
conserved, we obtain a continuous time dynamical system $\Phi_0^t$, $t \in \mathbb R$
on the {\it phase space} $\Omega_0 =  \cD_0 \times \mathcal S^1$.
The Sinai billiard flow $\Phi_0$ preserves the Lebesgue measure on $\Omega_0$
(denoted by $\mu_0$). We assume the finite horizon condition, i.e. that the sets 
$B_i$ are chosen in such a way that the free flight time is bounded.
Similarly, we define the periodic Lorentz gas when the phase space
is lifted to the universal cover. That is, the configuration space is
$\cD = \mathbb R^2 \setminus \cup_{(m,n) \in \mathbb Z^2} 
\cup_{i=1}^{\mathfrak k} (B_i + (m,n))$,
where we identify $\cD_0$ with $\cD \cap [-1/2,1/2)^2$. 
We choose this identification in such a way that
$(-1/2, -1/2) \notin \cD$.
The phase space is $\Omega = \cD \times \cS^1$ and the billiard flow is denoted by $\Phi^t$. It preserves
the $\sigma$-finite measure $\mu$, which is $\mu_0$ times the counting measure on $\mathbb Z^2$.

Now we construct the stochastic process which is the projection of the billiard flow, $\Phi^t$ onto $\mathbb Z^2$.  
Given $(q,v) \in \Omega$, let $\Pi_{\integers^2}(q,v) = (k,l) \in \mathbb Z^2$ 
if $q \in (k,l) + [-1/2,1/2)^2$ and let 
$\Pi_{\cD_0}(q,v) = q_0$,
$\Pi_{\Omega_0}(q,v) = (q_0,v)$ if $q = q_0 + \Pi_{\integers^2}(q,v)$.
We also put $\tilde \cZ_t(q,v) = \Pi_{\integers^2}(\Phi^t(q,v))$. Thus any probability measure on $\cD_0$ 
induces a stochastic process $\tilde \cZ_t$. 
It is important to note that here the
randomness only appears in the 
initial condition. Once $(q,v)$ is fixed, then $\tilde \cZ_t$ is uniquely defined for every $t$. 

We will also need the billiard map $\mathcal F_0$, which is defined as the Poincar\'e
section corresponding to the collisions, that is $\cF_0: \cM_0 \to \cM_0$, where
$$
\cM_0 = \{ (q,v) \in \partial \mathcal D_0 \times \cS^1: \langle v,n \rangle \geq 0\},
$$
where $n$ is normal to $\partial \cD_0$ at $q$ pointing inside $\cD_0$. 
The phase space
of the billiard map, $\cM_0$, thus corresponds to collisions where by convention we
use the post-collisional velocity $v$. $\cF_0$ preserves the probability 
measure $\nu_0$ defined by $d \nu_0 = c \cos \phi dr d \phi$, where $(r,\phi)$
are coordinates on $\cM_0$: $r$ is arclength parameter and $\phi \in [-\pi/2, \pi/2]$
is the angle between $v$ and $n$. The definitions of $\cM, \cF, \nu$ are analogous.

Fix a measure given by an arbitrary proper standard family 
(The exact definition will be given in Section \ref{sec:billb}. One example is the 
invariant measure $\nu$).
This measure induces a
stochastic process $\tilde \cZ_t$. Furthermore, $\tilde \cZ_t$ 
satisfies the central limit theorem with a 
covariance matrix which is independent of the standard family. 
That is, there exists a positive definite $2\times 2$
matrix $\Sigma$ so that $\tilde \cZ_T /\sqrt T $ converges weakly 
as $T \to \infty$ to the Gaussian distribution with 
mean zero and covariance matrix 
$\Sigma$ (see e.g. 
\cite{BSC91}). 
Now let
$\cL = \Sigma^{-1/2} \integers^2$, $\cZ_t = \Sigma^{- 1/2} \tilde \cZ_t$.
The {\it invariance principle} holds as well. That
is, $\cZ_{tT} /\sqrt T$, $t \in [0,1]$ converges weakly
to a standard Brownian motion (see e.g. \cite{C06}).

Recalling that  $(-1/2, 1/2) \notin \cD$, the graph $\bG$ 
on $\cL$ induced
by $\cZ$ satisfies that $(\fl, \fl')$
is an edge in $\bG$ if and only if $\Sigma^{1/2} \fl$
and $\Sigma^{1/2} \fl'$
are nearest neighbors in $\integers^2$. We will assume this in the sequel.

\begin{theorem}
\label{thm2} In the setting described above, assume that $\cL$ is a rational lattice. Then (H1)-(H3) hold.
\end{theorem}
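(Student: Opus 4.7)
The proof has three parts. Condition (H1) is immediate from the definition of rational lattice, which directly posits a non-zero lattice vector with vanishing first coordinate. For (H3), I would split the integral at $t = \delta L^2$ and $t = L^2/\delta$ and rescale by $t = sL^2$. The tail at $s > 1/\delta$ is controlled by the principal Dirichlet eigenvalue of the Laplacian on $(0,1)$, which forces the survival probability in the strip to decay like $e^{-\pi^2 s/2}$; the mixing local limit theorem from \cite{DN16} upgrades this to a pointwise Lorentz-gas estimate, so the rescaled integrand becomes an integrable function of $s$ alone whose tail vanishes as $\delta \to 0$. The small-$s$ contribution is controlled by moderate deviation bounds as in \cite{CD09}: reaching displacement of order $L$ in time $sL^2$ with $s$ small has probability with Gaussian decay in $1/s$, and the corresponding integral against $L\, dt$ vanishes as $\delta \to 0$.

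The main work is (H2), and the plan follows the standard-pair methodology of Chernov and Dolgopyat \cite{CD09} combined with the mixing local limit theorem (MLLT) of \cite{DN16}. I would partition $[0,T]$ into three stages $[0, T^\kappa]$, $[T^\kappa, T - T^\kappa]$, $[T - T^\kappa, T]$ for a small $\kappa > 0$. In the first stage, the trajectories emerging from $\fl$ are decomposed into proper standard families by tracking the local billiard geometry around the injection point; since the set of outgoing edges is encoded in $\cJ(\fl)$, which is an invariant of the type $k$, the resulting total mass depends on $\fl$ only through $k$, supplying the constant $c_k$. In the bulk stage, the invariance principle for the Lorentz process, conditioned on survival in the rectangle $[0,\beta\sqrt T] \times [0,\xi\sqrt T]$, gives convergence to a two-dimensional Brownian motion with analogous conditioning. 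Since the limiting Brownian has independent components and the conditioning is a product constraint, the joint density factorizes into the meander density $\psi(\alpha, \beta)$ in $\cX$ (starting from the boundary, conditioned to stay in $(0, \beta\sqrt T)$) and the conditioned-bridge density $\phi(\eta, \gamma, \xi)$ in $\cY$ (starting from $\eta\sqrt T$, pinned near $\gamma\sqrt T$, conditioned to stay in $(0, \xi\sqrt T)$). In the final stage, the MLLT upgrades this distributional convergence to a pointwise statement at the lattice site $\langle(\alpha, \gamma)\sqrt T\rangle$, producing the $T^{3/2}$ normalization as a factor $T^{-1}$ from the 2D local density and a factor $T^{-1/2}$ from the meander's boundary normalization.

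The principal obstacle is that the rectangular survival event has probability of order $T^{-3/2}$, while the standard-pair estimates of \cite{CD09, DN16} are designed to handle unconditioned or mildly conditioned observables; such severe conditioning biases trajectories away from the absorbing boundary and threatens the uniform distortion bounds that make standard families proper. The resolution I envision is to treat the conditioned Lorentz density as a small perturbation of an inhomogeneous two-dimensional Brownian heat kernel (solving the conditioned heat equation in the rectangle) and, at each intermediate time, to approximate the conditioned measure by a weighted proper standard family whose weights reflect this inhomogeneous Brownian profile. The error terms are then controlled using mixing rates and the regularity of the conditioned heat kernel, with uniformity over the parameter range following from the quantitative character of the underlying estimates. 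These considerations form the content of Section \ref{sec6}.
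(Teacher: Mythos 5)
The high-level outline for (H2) --- decompose trajectories using standard pairs, apply the MLLT at the end, and factor the conditioned survival density into $\psi(\alpha,\beta)$ (meander, in $\cX$) times $\phi(\eta,\gamma,\xi)$ (conditioned bridge, in $\cY$) --- is the same template the paper uses, and your account of how the $T^{3/2}$ normalization arises and your dispatch of (H1) and the shape of the (H3) argument are all fine. But there are two genuine gaps.

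First, you skip the change of coordinates to the enlarged fundamental domain that the paper sets up before anything else. The type-$k$ structure in (H2) is not something you can read off the local geometry of $\fl$ directly: the paper introduces the torus $T'$ built from the primitive vectors $\fM = \Sigma^{1/2}\fl^{(K_1)}$ and $\fN = \Sigma^{1/2}\fl^{(K_2)}$, redefines $\cZ'$ on the corresponding coarser $\integers^2$-structure via $[[\,\cdot\,]]$, and reformulates (H2) as the statement (H2') about $\cZ'$. The type $k$ of $\fl$ then corresponds exactly to the cell $z' \in Z'$ in the enlarged domain, and the constants $c_k$ come from $C_{\cG'_{z'}}$ in \eqref{eq:defCl}. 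Without this reduction you cannot make sense of why different boundary points of the same type give the same constant, nor can you invoke the one-dimensional results of \cite{DN16} (Propositions 3.8, 3.9), which are proved precisely in these enlarged coordinates.

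Second, the ``principal obstacle'' you identify and your proposed resolution do not match what is actually hard here. The issue is not that the conditioned measure fails to be well-approximated by proper standard families --- the Growth lemma handles that. The issue is that \cite{DN16} gives \eqref{Prop3.8} and \eqref{Prop3.9} as two \emph{separate} one-dimensional local limit statements, and one cannot multiply them because both events have vanishing probability and the asymptotic independence of $\cX'$ and $\cY'$ at the invariance-principle level does not persist under such severe conditioning. The paper's resolution is not to perturb around an inhomogeneous heat kernel (which would be very hard to control with the tools available), but rather Lemmas \ref{lemma:highrec}, \ref{lem:11.1} and \ref{lem:lastH2}: a dyadic inductive argument, using the Growth lemma plus the invariance principle at each scale $2^k$, that shows that conditional on $\tau_0^{\check\cX'} > N$ the $\check\cY'$-excursion stays of order $\sqrt N$ with high probability (with explicit control on the constant $\brL$). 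That is the genuinely new technical content of Section \ref{sec6}, and it is what lets the paper pass from the one-dimensional DN16 statements to the two-dimensional Proposition \ref{prop:3.5} and then to (H2'). Your sketch would not produce this step, and without it the factorization $\psi \cdot \phi$ cannot be justified.

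Minor point: the paper uses a two-stage split $[0, N_1]$, $[N_1, T]$ with $N_1 = (1-\delta_t)N$ (a constant fraction), not a three-stage polynomial-cutoff split $[0,T^\kappa] \cup [T^\kappa, T-T^\kappa] \cup [T-T^\kappa, T]$. The initial transient is absorbed into the statement of Lemma \ref{lem:lastH2} rather than handled as a separate stage.
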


Theorem \ref{thm2} does not claim that $\varsigma > 0$. In fact, there
are standard families for which $\varsigma = 0$. This is not surprising 
since in $\cZ_t$ can be deterministic for a bounded time. In particular,
we can choose a standard family so that $\cZ_t < 0 $ almost surely
for a fixed $t$ and so all particles will be absorbed within bounded amount
of time. However, there are standard families for which $\varsigma >0$
(e.g., for the invariant measure $\nu$). In general, we cannot compute
$\varsigma$ even if it is positive. 

Note that we assumed that $\cL$ is a rational lattice, which immediately gives (H1)
and the variant of (H1) when the vertical and the horizontal coordinates are swapped. 
This is a highly non trivial assumption and we expect this not to
hold for a typical billiard table. 
However, we have some examples when it does hold
due to some extra symmetry. We discuss these examples in Section \ref{sec:sym}. The proof of (H2) and (H3) will be given in Section
\ref{sec6}.

In case of deterministic systems like the Lorentz gas, a natural extension of (LE) is a finer
counting problem: that is to only count particles in a given nice subset of $\Omega_0$ (for
example, those that are close to a given scatterer). 
Let us fix and open set 
\begin{equation}
\label{def:A}
A \subset \Omega_0 \text{ with } \mu_0 (\partial A) = 0
\end{equation}
and update the definition of $\tilde \Lambda_t$ so as we only count particles at phase $(q,v)$ that
satisfies
$\Pi_{\Omega_0}(q,v) \in A$. Let the resulting measure be $\tilde \Lambda^A_t$ and let us say
that
detailed local equilibrium (DLE) holds if there is some $\varsigma$
so that for every $A$ as in \eqref{def:A}, the definition of (LE) with $\tilde \Lambda$ replaced
by $\tilde \Lambda_A$ holds with the constant $\varsigma \mu_0(A)$.

\begin{theorem}
\label{Cor1}
 Under the assumptions of Theorem \ref{thm2}, (LE) and (DLE) hold.
\end{theorem}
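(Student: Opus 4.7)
The plan is to combine three ingredients: an exact Poissonization that reduces (LE) and (DLE) to the computation of means, an extension of Theorem \ref{thm1} from one-sided to four-sided injection, and a phase-space refinement of the local limit theorem underlying (H2).

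First, observe that because particles are injected via a Poisson point process on $(-t,0] \times \partial D_L$ and evolve independently once injected (with independent initial conditions drawn from the standard family), the marking theorem for Poisson processes implies that for every fixed time the set of particle positions forms a Poisson point process on $\cL$; lifting to phase coordinates, the collection of images $\Pi_{\Omega_0} \Phi^{tL^2}$ is a PPP on $\Omega_0 \times \cL$. Consequently, the variables $\mathfrak W_{t,i,j,L}$ at distinct sites $\langle z_i L\rangle + \fl_j$ (and, for (DLE), their restrictions to a set $A$ with $\mu_0(\partial A)=0$) are \emph{exactly} independent Poissons. Weak convergence to independent Poissons with the prescribed means therefore reduces to convergence of the individual expectations.

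Second, to obtain the (LE) means I would extend Theorem \ref{thm1} to injection on all four sides. Rationality of $\cL$ provides nonzero lattice vectors with vanishing first coordinate and with vanishing second coordinate, so both (H1) and its horizontal analogue hold. The arguments underlying Theorem \ref{thm2} apply verbatim to each boundary piece after reflection or coordinate swap, and linearity of the problems \eqref{eq:laplace2}, \eqref{eq:heat2} lets us sum the four boundary contributions to obtain $\tilde v(t,z_i)$ (resp.\ $\tilde u(z_i)$). To handle the microscopic shifts $\fl_j$, use uniformity of the convergence in (H2) on compact subsets of the domain: since $\fl_j \in \cL$ is bounded as $L \to \infty$, the points $\langle z_i L\rangle + \fl_j$ and $\langle z_i L\rangle$ have a common macroscopic limit $z_i$, which yields the same asymptotic intensity $\tilde v(t,z_i)$, independent of $j$.

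Third, for (DLE) I would establish the following phase-space strengthening: conditional on $\cZ_{tL^2} = \langle z_i L\rangle$ and on survival inside $D_L$, the phase coordinate $\Pi_{\Omega_0}\Phi^{tL^2}$ equidistributes on $\Omega_0$ with respect to $\mu_0$. Granting this, the intensity at a phase point inside the cell $\langle z_i L\rangle$ acquires a factor $\mu_0(A)/\mu_0(\Omega_0)$ upon integration over $A$, and the denominator is absorbed into the constant $\varsigma$. This equidistribution is exactly the content of the mixing local limit theorem of \cite{DN16} combined with the Chernov--Dolgopyat standard pair technique recalled in Section \ref{sec:billb}, and it is already the engine driving the proof of (H2) in Section \ref{sec6}; the required statement is obtained by taking conditional expectations of smooth test functions on $\Omega_0$ inside the same computation.

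The main obstacle is the phase-space mixing LLT with survival conditioning: conditioning on $\min\{\tau_0^{\cX},\tau_L^{\cX},\tau_0^{\cY},\tau_L^{\cY}\} > tL^2$ does not commute with the pure local limit theorem, so one must propagate the standard-pair description through the survival event and verify asymptotic equidistribution of the family of surviving standard pairs on $\Omega_0$. Once this is in hand, the remainder of the proof is mechanical: Poissonization plus the four-sided extension of Theorem \ref{thm1} deliver both (LE) and (DLE).
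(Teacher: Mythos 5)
Your proposal is correct and follows essentially the same route as the paper: the paper likewise reduces (LE) and (DLE) to convergence of intensity measures via the Poisson point process mapping and restriction theorems (your ``marking'' step is the paper's passage to the PPP on $(-tL^2,0]\times\partial D_L\times\Omega_0$), handles all four sides via rationality of $\cL$ and the coordinate-swapped variant of Theorem \ref{thm1}, and obtains (DLE) by carrying the test set $A\subset\Omega_0$ through the mixing local limit theorem underlying (H2), exactly as you describe. The one step you flag as an ``obstacle'' --- propagating the phase-space equidistribution through the survival conditioning --- is already built into the paper's Theorem \ref{thm:MLLT}, which is stated for general $A$ precisely so that the proof of (H2) yields the (DLE) refinement with only the minor change noted after that theorem.
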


\begin{proof} [Proof of Theorem \ref{Cor1} assuming Theorem \ref{thm2}]

As observed in \cite{DN16}, the derivation of (LE) from \eqref{eq:thm11}
and \eqref{eq:thm12} is straightforward. 
Let $\mathbb M = (-tL^2, 0) \times \partial D_L \times \Omega_0$.
Let $\mathbb G : \mathbb M \to D_L \times \Omega_0 \cup\{ \infty \}$, where
$\mathbb G (s, \fl, (q,v)) = \Phi^s(q + \Sigma^{1/2} \fl, v)$
if the particle has not been absorbed by time $s$ and 
$\mathbb G (s, \fl, (q,v)) = \infty$ otherwise.
Since the initial conditions of particles is given by a Poisson point process
(PPP)
on $\mathbb M$, the mapping and restriction theorems for PPP (see e.g. sections 2.2 and 2.3 in \cite{Ki93}) give that
$\{ \mathbb G(s_i, \fl_i, x_i)\}_{\mathbb G(s_i, \fl_i, x_i) \neq \infty}$
forms a PPP on 
$D_L \times \Omega_0$. Letting $L \to \infty$, the intensity
measure of this PPP converges by Theorem \ref{thm1} and Theorem \ref{thm2} (for particles injected on the West or on the East this is immediate.
For particles injected on the North or the South, this follows from the
variant of Theorem \ref{thm1} when the role of $x$ and $y$ are swapped.
Since $\cL$ is assumed to be rational, (H1) holds even in this case). 
Thus in the limit $L \to \infty$, we obtain a PPP with intensity measure 
as on the right hand side of \eqref{eq:thm11}
and \eqref{eq:thm12}. This implies (LE). The proof of (DLE) is
analogous, 
except that when we verify (H2), we only need to take into account
particles at phase
$(q,v)$ that
satisfies
$\Pi_{\Omega_0}(q,v) \in A$. This requires a very minor change in the proof
(see the remark after Theorem \ref{thm:MLLT}). 
\end{proof}

\subsubsection{Symmetry Conditions }
\label{sec:sym}
\begin{example}
Assume that $\mathcal D_0$ is invariant under a $90$ degree rotation
or a vertical or horizontal reflection of the unit square. 
Then $\cL$ is rational.
\end{example}

\begin{proof}
Let us assume that $\mathcal D_0$ is invariant under a rotation by $90$ degrees. Then the probability
density function (pdf) of the limiting distribution of $\tilde \cZ_t / \sqrt t$ also needs to be invariant
under the rotation by $90$ degrees. Since this is a normal distribution, the isocontours of the pdf are
ellipses. The only ellipses invariant under the rotation by $90$ degrees are circles. This means that there is a
positive real number $\sigma$ so that $\Sigma = \sigma^2 I_2$.
Similarly, if $\mathcal D_0$ is invariant under reflection to vertical or horizontal axis, 
then the isocontours of limiting normal distribution are ellipses with semi axes parallel to the coordinate
axes and so $\Sigma$ is diagonal.
\end{proof}
In the above examples, $\cL$ is generated by $\sigma_1^{-1}[1,0]^T$ and 
$\sigma_2^{-1}[0,1]^T$. Consequently, $K_1=K_2=1$. In this
sense, these examples are the simplest possible ones
(Figure \ref{fig2} shows a configuration which symmetric with respect to the vertical axis,
and is repeated over a $10 \times 6$ rectangle).
Our next example is less trivial as
$K_2=2$.

\begin{example}
Consider
a scatterer configuration on the regular hexagon that is invariant under the rotation by $120$ degrees
and satisfy all other assumptions (that is, the scatterers are smooth, disjoint, strictly convex
and the configuration has finite horizon). One such example is only one scatterer which is a disc,
centered at the center of the hexagon and with radius large enough to ensure that 
$\mathcal D$ is of finite horizon.
By tiling the plane with regular hexagons, we obtain the Lorentz gas as before. As in 
the previous example, the isocontours of the limiting normal distribution are invariant under the 
rotation by $120$ degrees; hence they are circles and $\Sigma = \sigma^2 I_2$. In this case, $\tilde{\cZ}_t$
for any $t$ takes values in the set of tiles of the hexagonal tiling. Let ${\cL}$ be the lattice generated
by the vectors $\sigma^{-1}[0,1]^T$ and $\sigma^{-1}[ \sqrt 3 /2, 1/2]^T$ and $\bG$ be the graph with 
vertices $\cL$ and edges between points at distance $\sigma^{-1}$. That is, $\bG$ forms the triangular
grid, dual to the hexagonal tiling (see Figure \ref{fig:hex}, the edges of $\bG$ are denoted by dotted lines). 
In this example, $K_1=1$ and $K_2=2$. Indeed, on the 
horizontal boundary, we see an alternating sequence of two kinds of hexagons (ignoring the very first and the 
very last one): one of them has $5$ neighbors in $D_L$ and the other one only has $3$. 
A particle injected
to a uniform random location on the first type hexagon has higher chance of staying in $D_L$
than in case of the second type hexagon. 
Thus we expect that $c_1 \neq c_2$ in the variant of (H2), when the 
vertical and horizontal coordinates are swapped.
\end{example}

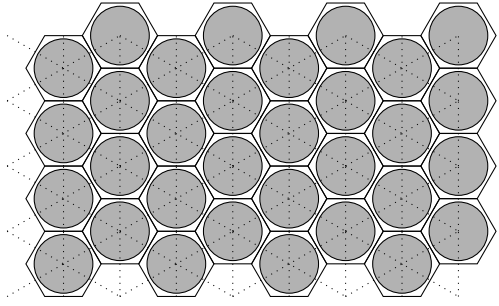
\begin{figure}
\begin{center}
\begin{tikzpicture}[x=7.5mm,y=4.34mm]
  \tikzset{
    box/.style={
      regular polygon,
      regular polygon sides=6,
      minimum size=10mm,
      inner sep=0mm,
      outer sep=0mm,
      rotate=0,
    draw
    }
  }

\foreach \i in {0,...,3} 
    \foreach \j in {0,...,3} {
            \node[box] at (2*\i,2*\j) {};
            \node[box] at (2*\i+1,2*\j+1) {};
            \draw[fill=gray!60!white] (2*\i,2*\j) circle (0.39 cm);
            \draw[fill=gray!60!white] (2*\i+1,2*\j+1) circle (0.39 cm);
            \draw[dotted] (2*\i,2*\j-1) -- (2*\i,2*\j+1);
            \draw[dotted] (2*\i-1,2*\j-1) -- (2*\i+1,2*\j+1);
            \draw[dotted] (2*\i-1,2*\j+1) -- (2*\i+1,2*\j-1);
            \draw[dotted] (2*\i+1,2*\j-1) -- (2*\i+1,2*\j+1);
       }

\end{tikzpicture}
\caption{Billiard configuration on a hexagonal tiling} \label{fig:hex}
\end{center}
\end{figure}








%


\section{Duality}
\label{sec:dual}

\subsection{Random walks}

The definitions given in Section \ref{sec:def1} easily extend to more 
general domains $D$ with piece-wise smooth boundary.
One minor difference is that in \eqref{eq:Poi} instead of $f(\fl_2/L)$
we need to choose a slightly different argument of $f$ as 
$\fl_2/L$ may not be on $\partial D$ and $f$ may not be defined 
(e.g. one can choose the closest point on $\partial D$ to $\fl /L$).
Since $f$ is continuous, the exact choice is irrelevant as long as it
is a bounded distance from $\fl/L$. To keep notations simple, 
we will write $f(\fl/L)$, where $f$ is a continuous function defined on 
$\partial D$ (there is no need to introduce $F$). 

\begin{proposition}
\label{prop:RW}
Consider a random walk as in Section \ref{sec:rw} and 
let
\begin{equation}
\label{eq:specA}
\bA(\cJ) = \sum_{j \in \cJ} \cP (w_j)
\end{equation}
and $\bB = 1$. 
Then the conclusion of Theorem \ref{thm1} and (LE) hold 
with $\varsigma =1$ without assuming the 
rationality of $\cL$ and for general
bounded domains $D$ with piece-wise smooth boundary and no cusps.
\end{proposition}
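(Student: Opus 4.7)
The plan is to use the duality (time--reversal) approach sketched in the introduction: express $\EX[\Lambda_t(\fl^*)]$ via the Green's function of the killed walk, pass to the time-reversed walk $\cZ^*$, and then invoke the invariance principle. By the Poisson nature of the injection $\Theta_t$ and the mutual independence of particle trajectories,
\[
\EX[\Lambda_t(\fl^*)]=\sum_{\fl\in\partial_W D_L}\bA(\cJ(\fl))\,f(\fl_2/L)\int_0^t p^{D_L}_\sigma(\fl\to\fl^*)\,d\sigma,
\]
where $p^{D_L}_\sigma$ is the transition kernel of the walk killed upon exiting $D_L$. For a random walk one has the sample--path reversal identity $p^{D_L}_\sigma(\fl\to\fl^*)=p^{D_L,*}_\sigma(\fl^*\to\fl)$, where $\cZ^*$ denotes the dual walk whose jumps are $-w_j$ with the same probabilities $\cP(w_j)$. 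Since negating the jumps preserves the mean--zero condition and the covariance matrix $I_2$, the dual walk $\cZ^*$ also satisfies the invariance principle, with standard planar Brownian motion as its diffusive scaling limit.

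The crucial point is that the prescribed $\bA(\cJ(\fl))=\sum_{j\in\cJ(\fl)}\cP(w_j)$ is exactly the total rate at which the forward walk jumps out of $D_L$ from $\fl$; after the time reversal this is also precisely the total rate at which $\cZ^*$ exits $D_L$ at $\fl$ (in the symmetric case this is immediate, and in general it follows from reindexing exit edges $j\leftrightarrow j'$ with $w_{j'}=-w_j$). Consequently
\[
\bA(\cJ(\fl))\int_0^\infty p^{D_L,*}_\sigma(\fl^*\to\fl)\,d\sigma \;=\; \Prob^{\cZ^*,\fl^*}(\cZ^*_{\tau^*-}=\fl),
\]
and, since $\bB\equiv 1$ and $f$ is supported on the west side (extended by zero to $F$ on all of $\partial D$), the formula above collapses to
\[
\EX[\Lambda_\infty(\fl^*)]=\EX^{\cZ^*,\fl^*}\!\bigl[F(\cZ^*_{\tau^*-}/L)\bigr],\qquad \EX[\Lambda_{tL^2}(\fl^*)]=\EX^{\cZ^*,\fl^*}\!\bigl[F(\cZ^*_{\tau^*-}/L)\,\mathbf 1\{\tau^*\le tL^2\}\bigr].
\]

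The final step is to pass to the limit $L\to\infty$. The invariance principle for $\cZ^*$ gives that the rescaled killed walk started at $\langle zL\rangle$ converges to killed planar Brownian motion from $z$, and the corresponding discrete harmonic measure on $\partial D_L$ converges weakly to the Brownian harmonic measure on $\partial D$; the assumption that $\partial D$ is piecewise smooth with no cusps is used exactly here, to guarantee that every boundary point is regular in the Wiener sense. Therefore $\EX[\Lambda_\infty(\langle zL\rangle)]\to\EX^z[F(B_\tau)]=u(z)$, and similarly $\EX[\Lambda_{tL^2}(\langle zL\rangle)]\to v(t,z)$, yielding Theorem \ref{thm1} with $\varsigma=1$. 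Local equilibrium then follows from the PPP mapping argument already used to deduce Theorem \ref{Cor1} from Theorem \ref{thm2}, because the trajectories emanating from distinct points of $\Theta$ are independent by construction. The main obstacle is precisely the uniform convergence of the discrete harmonic measure to the continuum Poisson kernel near boundary singularities of a general piecewise--smooth $D$; the duality rewriting itself is essentially bookkeeping once the reversal identity is in hand.
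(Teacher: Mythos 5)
Your proposal follows essentially the same duality route as the paper: express $\EXP[\Lambda_t]$ via the killed transition kernel, time-reverse to transfer it to the dual walk, observe that the special $\bA$ turns the expression into the exit functional $\EXP^{\fl^*}\bigl[F(\cZ^*_{\tau^*-}/L)\,\mathbf 1\{\tau^*\le tL^2\}\bigr]$, and then invoke the invariance principle (the paper spells out this last passage via Donsker's theorem plus Dynkin's formula). One small but real point deserves care: your parenthetical claim that the forward and dual exit rates agree ``in general by reindexing $j\leftrightarrow j'$ with $w_{j'}=-w_j$'' is only valid when the jump law is symmetric, because the reindexing sends $\cP(w_j)$ to $\cP(-w_j)$; for an asymmetric walk the index sets $\{j:\fl+w_j\notin D_L\}$ and $\{j:\fl-w_j\notin D_L\}$ and their $\cP$-weights do not match. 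The identity the argument actually needs, and the one the paper's trajectory-level bridge computation verifies directly (see the step passing from \eqref{eq:dual4} to \eqref{eq:dual5}), is
\[
\bA(\cJ(\fl))\;=\!\!\sum_{\substack{\fl_{-1}\notin D_L\\ (\fl_{-1},\fl)\in\bG}}\!\!\cP(\fl-\fl_{-1}),
\]
i.e.\ $\bA$ equals the total weight with which the \emph{dual} walk exits $D_L$ through $\fl$. You should check this equality directly against the definition of $\cJ(\fl)$ rather than detouring through the forward exit rate and an invalid reindexing.
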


\begin{proof}
We are only going to prove
\eqref{eq:thm12}. A proof of \eqref{eq:thm11} can be obtained
by replacing $t$ by $\infty$ in the proof below and
(LE) can be proved as in Theorem \ref{Cor1}.

The key idea of the proof is duality.
Specifically, we use the fact that the reversed
Markov process is also Markovian. 
Let $\check{\cZ}$ be the discretized version of $\cZ$. That is,
$\check{\cZ}_0 = \cZ_0$, $\check{\cZ}_n = \cZ_{t_n}$ where $t_n$ is the 
time of the $n$th jump of $\cZ$. 
The reverted random walk $\cZ'$ is defined by
the generator
$$
 (G' f)(\fl) = \sum_{j=1}^J \cP
(w_j) [f(-w_j + \fl) - f(\fl) ]
$$
and $\check{\cZ}'$ is the discretized version of $\cZ'$ (defined analogously to $\check{\cZ}$).

Note that for any $N$, $\check{\cZ}$ induces a measure 
$\prob_{\check{\cZ}}$ on 
$\cL^N$ by 
$$
\prob_{\check{\cZ}} (\fl_0,...,\fl_{N-1})
= 
\prob (\check{\cZ}_1 = \fl_1, ..., \check{\cZ}_{N-1} = \fl_{N-1} |
\check{\cZ}_0 = \fl_0).
$$
Let us define $\prob_{\check{\cZ}'}$ analogously. Then by definition
of $\check{\cZ}'$, for any sequence $\fl_0,...,\fl_{M} \in \cL$,
\begin{equation}
\label{eq:dual1}
\prob_{\check{\cZ}} (\fl_0,...,\fl_{M}) = 
\prob_{\check{\cZ}'} (\fl_M,...,\fl_{0}) .
\end{equation}
For fixed $L$, $z \in D$, $t \in \mathbb R_+$, $\fl \in \partial D_L$ and $M$, let
$\cA = \cA_{L,z,t,\fl,M}$
be the set of length $M$ trajectories from $\fl$ to $\langle zL \rangle$ staying
inside $D_L$, i.e.
$$
\cA =
\{ 
(\fl_0,...,\fl_M): \fl_0 = \fl, \forall i=0,...,M-1: \exists j=1,...,J:
\fl_{i+1} - \fl_i = w_j, \fl_i \in D_L, \fl_M = \langle zL \rangle
\} .
$$
For a subset $\cB \subset \cL^{M+1}$ and a lattice point $\hat{\fl}$, let 
$$
\cB' = \{ 
(\fl_M,...,\fl_0): (\fl_0,...,\fl_M) \in \cB
\},
$$
and
$$
\hat{\fl} \cB = \{ (\hat{\fl},\fl_0,...,\fl_M): (\fl_0,...,\fl_M) \in \cB\},\quad
 \cB \hat{\fl}= \{ (\fl_0,...,\fl_M,\hat{\fl}): (\fl_0,...,\fl_M) \in \cB\}.
$$
Then by \eqref{eq:dual1}, we have
$$
\prob_{\check{\cZ}} (\cA_{L,z,t,\fl,M}) = \prob_{\check{\cZ}'} (\cA'_{L,z,t,\fl,M}),
$$
Furthermore, for any $\fl_{-1} \notin D_L$, which is connected to $\fl$ in $\bG$,
\begin{equation}
\label{eq:dual2}
\prob_{\check{\cZ}} (\fl_{-1}\cA_{L,z,t,\fl,M}) = \prob_{\check{\cZ}'} (\cA'_{L,z,t,\fl,M} \fl_{-1}). 
\end{equation}
Let $\bT$ be the first hitting time of $\cL \setminus D_L$ by
$\check{\cZ}'$. Then \eqref{eq:dual2} is equal to 
$$
\prob (\bT = M+1, \check{\cZ}'_{M+1} = \fl_{-1}, \check{\cZ}'_{M} = \fl
| \check{\cZ}'_0 = \langle zL \rangle ).
$$
To turn to continuous time, let
${\tau'}^{*}$ is the first time $\cZ'$ leaves $D_L$. Then we have
\begin{equation}
\label{eq:dual2.5}
\prob( {\tau'}^{*} < tL^2,
\cZ'_{{\tau'}^{*}} = \fl_{-1}, \cZ'_{{\tau'}^{*}-} = \fl
|\cZ'_0 =  \langle zL \rangle
) = \sum_{M=0}^{\infty} F_{M+1}(tL^2)
\prob_{\check{\cZ}} (\fl_{-1}\cA_{L,z,t,\fl,M}),
\end{equation}
where $F_{N} (.)$ is the cumulative distribution function of the Gamma
distribution with shape parameter $N$
and scale parameter $1$ (that is, it is the sum of $N$ iid exponential random
variables, each with expectation $1$). Indeed, \eqref{eq:dual2.5} holds since
the time of jumps of the Markov process $\cZ'$ are independent of the location
of the jump.
On the other hand, we have
\begin{align}
&\sum_{M=0}^{\infty} F_{M+1}(tL^2)
\prob_{\check{\cZ}} (\fl_{-1}\cA_{L,z,t,\fl,M})
= 
\cP(\fl - \fl_{-1}) \sum_{M=0}^{\infty} F_{M+1}(tL^2)
\prob_{\check{\cZ}} (\cA_{L,z,t,\fl,M}) \nonumber \\
&= \cP(\fl - \fl_{-1}) \int_{0}^{tL^2}
\Prob (\cZ_{s} = \langle zL \rangle, \forall s' \in [0,s],
\cZ_{s'} \in D_L | \cZ_0 = \fl) ds. \label{eq:dual3}
\end{align}
Since $\bB = 1$, we have
\begin{equation}
\label{eq:dual4}
\Lambda_{tL^2}
(\langle zL \rangle) = 
\sum_{\fl \in \partial D_L} 
\bA(\cJ(\fl)) f(\fl/ L) \int_{0}^{tL^2} \Prob (\cZ_{s} = \langle zL \rangle, \forall s' \in [0,s],
\cZ_{s'} \in D_L | \cZ_0 = \fl) ds.
\end{equation}
Thus by \eqref{eq:specA} and \eqref{eq:dual3}, we have
$$
\Lambda_{tL^2}
(\langle zL \rangle) = 
\sum_{\fl: \in \partial D_L} 
\sum_{\fl_{-1} \in \cL \setminus D_L: (\fl_{-1}, \fl) \in \bG} 
f(\fl/ L) \sum_{M=0}^{\infty} F_{M+1}(tL^2)
\prob_{\check{\cZ}} (\fl_{-1}\cA_{L,z,t,\fl,M})
$$
and so by 
\eqref{eq:dual2.5},
\begin{equation}
\label{eq:dual5}
\Lambda_{tL^2} 
(\langle zL \rangle) =  \mathbb E \left(f 
\left(\frac{ \cZ'_{{\tau'}^{*}-}}{L} \right) 1_{{\tau'}^{*} < tL^2 } 
| {\cZ}'_0 = \langle zL \rangle
\right).
\end{equation}
Now the right hand side of \eqref{eq:dual5} converges, as $L \to \infty$ to
\begin{equation}
\label{eq:dual6}
\mathbb E \left(f 
\left( W_{\cT^*} \right) 1_{\cT^*< t} 
| W_0 = z
\right),
\end{equation}
where $W_t$ is a standard planar Brownian motion and $\cT^*$ is the hitting time of 
$\reals^2 \setminus D$ by $W$.
(This follows from Donsker's theorem and the continuous mapping theorem. 
A more detailed proof of \eqref{eq:dual6}
for the case $t = \infty$ can be found in
e.g. \cite[Proposition 3]{LNY}.)
Let $\cW$ be a diffusion
process whose first coordinate is deterministic with constant $1$ drift
and whose second and third coordinates are 
independent standard Brownian motions. 
Applying Dynkin's formula for $\cW$ with $\cW_0 = (-t, z)$, 
the stopping time $\bt$ as the first hitting time of 
$\reals^3 \setminus ([-t,0] \times D )$, and 
with the test function $v(-s,\tilde z)$, where
$v$ is defined by \eqref{eq:heat}, we conclude that
 \eqref{eq:dual6} satisfies \eqref{eq:thm12} with $\varsigma = 1$.

\end{proof}

We record a remark for later reference:

\begin{remark}
\label{rem}
Note that the proof of Proposition \ref{prop:RW} does not use Theorem \ref{thm1}.
Thus we already have an example (random walks), where
both the assumptions
and the conclusion
of Theorem \ref{thm1} are verified (by Proposition \ref{thm:rw} and Proposition \ref{prop:RW},
respectively).
\end{remark}

\subsection{Lorentz gas}

Let $D$ be a bounded domain with piece-wise smooth boundary and 
no cusps. In the setup of Section \ref{sec:bil},
given $D$, $L$, and
$\fl \in \partial D_L$, we consider the following
initial measure. For any $\fl_- \in \cL \setminus D_L$ connected to $\fl$
in $\bG$, $\tilde{\fl'} := \Sigma^{1/2} \fl'$ is a nearest neighbor of 
$\tilde{\fl} := \Sigma^{1/2} \fl$ in $\integers^2$
(that is, 
$\tilde{\fl} - \tilde{\fl'} \in \{ w_1 = (0,-1), w_2 = (0,1),w_3 = (-1,0),w_4 = (1,0)\})$ 
by our assumption in Section \ref{sec:bil}). Let $E 
= E_{\fl,\fl'} \subset \reals^2$ be the
line segment on the boundary of $\tilde{\fl} + [-1/2, 1/2)^2$ and
$\tilde{\fl'} + [-1/2, 1/2)^2$.
Define
$$
\cN = \cN_{\fl,\fl'}= \{ 
(q,v) \in \Omega: q \in E, \langle v,\tilde{\fl} - \tilde{\fl'}\rangle > 0
\}.
$$
Let $\mbox{type}(\fl, \fl') = j$ if
$\tilde{\fl} - \tilde{\fl'} = w_j$
and $\zeta_{j}: \cN_{0,\Sigma^{-1/2}w_j} \to \reals_+$ be the first return to $\cN_{0,\Sigma^{-1/2}w_j}$
in the compact Sinai billiard, that is 
$$
\zeta_j = \min \{ s: \Phi_0^s(q,v) \in \cN_{0,\Sigma^{-1/2}w_j} \}.
$$
Let us also write 
$$\bar \zeta_j = \int_{\cN_{0,\Sigma^{-1/2}w_j}}
\zeta_j d \varrho_{0,\Sigma^{-1/2}w_j}.
$$
Next, we define the finite measure 
$\varrho = \varrho_{\fl,\fl'}$ on $\cN$ by 
$$
d\varrho = \frac{1}{2 \bar \zeta_j} \cos(\langle v,\tilde{\fl} - \tilde{\fl'}\rangle) dq dv,$$
where $\mbox{type}(\fl, \fl') = j$. Note that 
$\varrho(\cN) = |E_{\fl,\fl - w_j}| / \bar \zeta_j$ 
and so it may not be a probability
measure.
Now 
the initial condition $\cG$ is given by the normalized 
sum of these measures for all
neighbors $\fl'$. That is, 
$$\nu_{\cG} = \frac{1}{\sum_{j \in \cJ(\fl)} 
\frac{|E_{\fl,\fl - w_j}|}{\zeta_j}}
\sum_{j \in \cJ(\fl)} 
\varrho_{\fl,\fl - w_j}.$$
By definition, $\nu_{\cG}$ is a probability measure. 
Next, we choose
$$
\bA(\cJ(\fl)) = \sum_{j \in \cJ(\fl)} \frac{|E_{\fl,\fl - w_j}|}{\zeta_j}
$$
(which clearly depends on $\fl$ only through $\cJ(\fl)$)
and $\bB = 1$. This choice guarantees that particles are being continuously
injected through the entire boundary of $D_L$ with a 
measure which is simply the projection of the invariant measure $\mu$
to the Poincar\'e section on the boundary of $D_L$.
Because of this very special choice of $\nu_{\cG}, \bA, \bB$, we have

\begin{proposition}
\label{prop:dualbil}
With the above choice, the conclusion of Theorem \ref{thm1},
(LE) and (DLE) hold with $\varsigma =1$ without assuming the 
rationality of $\cL$ and for general
bounded domains $D$ with piece-wise smooth boundary and no cusps.
\end{proposition}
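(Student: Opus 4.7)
The plan is to mirror the duality argument of Proposition \ref{prop:RW}, replacing ``the reversed Markov process is Markovian'' by the time--reversal symmetry $\Phi^{-t}=\iota\circ\Phi^t\circ\iota$ of the billiard flow, where $\iota(q,v)=(q,-v)$. Both $\mu$ and the Poincar\'e section construction on each edge of $\partial D_L$ are well behaved under $\iota$, with $\iota$ mapping the incoming subsection to the outgoing one. The crucial book--keeping observation is that, for each edge $(\fl_-,\fl)$, the injection rate prescribed by the statement,
$$
\bA(\cJ(\fl))\, f(\fl/L)\, d\nu_{\cG,\fl}\, ds \;=\; f(\fl/L)\cdot \frac{1}{2\bar\zeta_j}\cos\langle v,\tilde{\fl}-\tilde{\fl}_-\rangle\, dq\, dv\, ds,
$$
coincides (up to the slowly varying factor $f$) with the flux of the invariant flow measure $\mu\otimes ds$ through that Poincar\'e section. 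This is exactly what makes the duality computation close.

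First I would derive the billiard analogue of \eqref{eq:dual5}. Writing $\cZ'^{(q,v)}_s$ for the $\cL$--projection of $\Phi^s$ started at the phase point $\iota(q,v)$ placed above $\langle zL\rangle$, and $\tau'^*$ for the first exit of this trajectory from $D_L$, the claim is
\begin{equation}\label{eq:dualbilid}
\EX[\Lambda_{tL^2}(\langle zL\rangle)] \;=\; \int_{\Omega_0} f\!\left(\frac{\cZ'^{(q,v)}_{\tau'^*}}{L}\right) 1_{\tau'^* < tL^2}\, d\mu_0(q,v).
\end{equation}
This is a Kac/Liouville--type flow--box computation: a particle contributing to $\Lambda_{tL^2}(\langle zL\rangle)$ corresponds to a segment of flow of length $s\in(0,tL^2)$ lying in $D_L\times\Omega_0$, starting on the incoming part of the boundary section and ending above $\langle zL\rangle$. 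Because the injection rate equals the boundary $\mu$--flux edge by edge, the sum over entry edges combined with the integration in $s$ reassembles precisely $d\mu_0$ on the fibre above $\langle zL\rangle$, weighted by $f$ evaluated at the first exit of the time--reversed trajectory.

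Second, I would apply the invariance principle for the Lorentz process under $\mu_0$ (\cite{BSC91,C06}): the rescaled trajectory $\cZ'^{(q,v)}_{\cdot L^2}/L$ converges weakly to a standard planar Brownian motion started at $z$. Since $\partial D$ is piecewise smooth with no cusps, the first--exit functional is a.s.\ continuous at the Brownian limit; together with boundedness of $f$, the continuous mapping theorem gives that the right--hand side of \eqref{eq:dualbilid} converges to $\EX_z\!\bigl[f(W_{\cT^*})\, 1_{\cT^* < t}\bigr]$, with $W$ a standard $2$D Brownian motion and $\cT^*$ its first exit from $D$. The final identification of this quantity with $v(t,z)$ solving \eqref{eq:heat} (with $\varsigma=1$) is the Dynkin--formula argument at the end of the proof of Proposition \ref{prop:RW}. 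The hydrostatic case \eqref{eq:thm11} follows with $t=\infty$, and (LE)/(DLE) then follow from the Poisson mapping theorem as in the proof of Theorem \ref{Cor1}; for (DLE) one simply restricts the $\mu_0$--integral in \eqref{eq:dualbilid} to $\{(q,v):\Pi_{\Omega_0}(q,v)\in A\}$, which produces the additional factor $\mu_0(A)$.

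The main obstacle will be step one, the derivation of \eqref{eq:dualbilid}. The normalization constants $\bar\zeta_j$ in $\bA$ and in $\varrho_{\fl,\fl-w_j}$ are tuned so that the injection measure reproduces the $\mu$--flux across each boundary edge exactly, but verifying this cleanly requires careful tracking of incoming versus outgoing subsections of each edge section and of the cosine and sign factors that appear in the Poincar\'e--section formula for $\mu$. Once \eqref{eq:dualbilid} is in hand, the rest of the argument is a soft consequence of the invariance principle for the Lorentz gas and the classical probabilistic representation of the Dirichlet/heat problem.
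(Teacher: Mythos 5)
Your proposal is correct and follows essentially the same route as the paper: an exact duality identity (your \eqref{eq:dualbilid} is the paper's \eqref{eq:inv4}) established via a Kac/flow-box decomposition of the invariant measure $\mu$ over the Poincar\'e sections $\cN_{\fl,\fl'}$ together with the involution $\cI(q,v)=(q,-v)$, followed by the invariance principle, continuous mapping, Dynkin's formula, and the Poisson mapping theorem for (LE)/(DLE). The flux book-keeping you flag as the main obstacle is exactly what the paper handles via the suspension identity \eqref{eq:susp} and the choice of the time step $s^*$ bounded by the minimal free flight.
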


\begin{proof}
The proof is similar to that of Proposition \ref{prop:RW}. 
We use duality and it is sufficient to verify
\eqref{eq:thm12}.

We claim that there is some $s^* > 0$
so that for any $(q,v) \in \cN_{\fl, \fl'}$ and any 
$s \in [0, s^*]$, 
$\tilde{\cZ}_s(q,v) \in \{ \tilde{\fl}, \tilde{\fl'}\}$. Furthermore, 
if there is some $s \in [0, s^*]$ with $\tilde{\cZ}_s(q,v) = \fl'$, then 
$\tilde{\cZ}_{s^*}(q,v) = \fl'$.
Indeed, 
the first statement follows from the assumption 
that $(-1/2, 1/2) \notin \cD$ 
and the second follows from the fact that visiting $\fl$, then $\fl'$ and
then $\fl$ again requires at least $2$ collisions and so we 
choose $s^*$ shorter than the minimal free flight.

Next, for any $(\fl, \fl')$ as above, by the definition of $\varrho$
and by the fact that $s^* < \min \zeta$, we have for measurable sets 
$B \subset \cup_{s \in [0, s^*]} \Phi^s(\cN_{\fl, \fl'})$
\begin{equation}
\label{eq:susp}
\int B d \mu = 
\int_{0}^{s^*}  \left( \int B d \Phi^s_*(
\varrho_{\fl, \fl'}) \right) ds.
\end{equation}
By the definition of $\nu_{\cG}, \bA$ and $\bB$, we have
\begin{align*}
\Lambda_{tL^2} (\langle zL \rangle) = &\sum_{\fl: \in \partial D_L} 
\sum_{\fl' \in \cL \setminus D_L: (\fl', \fl) \in \bG} f(\fl / L)\\
&
\int_{0}^{tL^2}
\int_{\cN_{\fl,\fl'}} \{
(q,v): \forall s' \in [0,s], \cZ_{s'} (q,v) \in D_L,
\cZ_{s} (q,v)  = \langle zL \rangle 
\} d\varrho_{\fl, \fl'}(q,v) ds.
\end{align*}

For fixed $t$ and $L$, let $K \in \integers_+$ so that $K s^* \leq tL^2 < (K+1)s^* $.
To simplify formulas, let us assume that 
$K s^* = tL^2$ holds (it is easy to check 
that the contribution of $s \in [K s^*, tL^2]$ is negligible). Now for
$k = 1,...,K$ we apply
\eqref{eq:susp} with
$$
B_{\fl, \fl',k} = \{ (q,v) \in \cup_{s \in [0, s^*]} \Phi^s(\cN_{\fl, \fl'}): 
\forall s' \in [0, (k-1)s^*], 
\cZ_{s'}(q,v) \in D_L,
\cZ_{(k-1)s^*}(q,v) = \langle zL \rangle \} 
$$
and the definition of $s^*$ to conclude
$$
\int_{(k-1)s^*}^{ks^*}
\int_{\cN_{\fl,\fl'}} \{
(q,v): \forall s' \in [0,s], \cZ_{s'} (q,v) \in D_L,
\cZ_{s} (q,v)  = \langle zL \rangle 
\} d\varrho_{\fl, \fl'}(q,v) ds
= \int B_{\fl, \fl',k}  d \mu
$$
and so
\begin{equation}
\label{eq:inv1}
\Lambda_{tL^2} (\langle zL \rangle) = \sum_{\fl: \in \partial D_L} 
\sum_{\fl' \in \cL \setminus D_L: (\fl', \fl) \in \bG} f(\fl / L)
 \sum_{k=1}^K
\int (B_k) d \mu.
\end{equation}

Now we recall the involution (also known as 
time reversibility) property of billiards. For $(q,v) \in \Omega$,
let $\cI(q,v) = (q, -v)$. Then $\cI$ preserves $\mu$ and anticommutes 
with the flow. That is,
$$
\Phi^{-s} \circ \cI = \cI \circ \Phi^s.
$$
(see e.g. \cite[Section 2.14]{ChM}). Thus 
\begin{equation}
\label{eq:inv2}
\int B_{\fl, \fl',k}  d \mu
= \int B'_{\fl, \fl',k}  d \mu,
\end{equation}
where
\begin{align}
B'_{\fl, \fl',k} = & \{ (q,v) \in \Omega : \cZ_0(q,v) = \langle zL \rangle \nonumber \\
& \exists s \in [(k-1)s^*, ks^*]: \forall s' \in [0, s]:
\cZ_{s'} \in D_L, \Pi_{\cD} \Phi^s(q,v) \in E_{\fl, \fl'}
\}  .\label{eq:inv3}
\end{align}
Using the notation \eqref{eq:tau*} and combining 
\eqref{eq:inv1}, \eqref{eq:inv2} and \eqref{eq:inv3}, we conclude
\begin{equation}
\label{eq:inv4}
\Lambda_{tL^2} (\langle zL \rangle) =
\int_{(q,v): \cZ_0(q,v) = \langle zL \rangle }
f \left( \frac{\cZ_{\tau^*-}}{L}\right) 1_{\tau^* < tL^2}
d \mu
\end{equation}
By the invariance principle, the right hand side of
\eqref{eq:inv4} converges as $L \to \infty$ to 
\eqref{eq:dual6}. As in 
Proposition \ref{prop:RW},
\eqref{eq:thm12} follows.

\end{proof}


\section{Proof of Theorem \ref{thm1}}
\label{sec5}

The keep the notations simpler, we assume that $a=1$ (the proof extends to any $a>0$ with no new
ideas). 
We will prove \eqref{eq:thm11} first.
Let $z = (x,y)$ be a point in the interior of $D$.
By definition, we have
\begin{align}
\mathbb E (\Lambda (\langle zL \rangle ))
=& \int_0^{\infty} \sum_{\fl \in \partial_W D_L} 
\bA(\cJ(\fl)) \bB(t) 
f\left(\frac{\fl_2}{L} \right) \nonumber \\
&\mathbb P \left(
\cZ_t = \langle (x, y) L \rangle,
\min \{
 \tau_0^{\cY},
\tau_{L}^{\cY}, 
\tau_0^{\cX},
\tau_{L}^{\cX} \} > t
| \cZ_0 = \fl \right) dt \nonumber\\
=&  
\int_{\delta L^2}^{L^2 / \delta} ... dt + \int_{0}^{\delta L^2} ... dt + 
\int_{L^2 / \delta}^{\infty} ... dt  = : I_1 + I_2 + I_3  \label{eq:dec}
\end{align}
with $I_j = I_j(L,x,y,\delta)$ for $j=1,2,3$.
Noting that
\begin{equation}
\label{eq:I1,3}
\lim_{\delta \to 0} \lim_{L \to \infty} I_2 + I_3 = 0.
\end{equation} by (H3), it remains
 to prove
\begin{equation}
\label{eq:I2}
\lim_{\delta \to 0} \lim_{L \to \infty} I_1 = u(z).
\end{equation}

Let $\Psi_{\delta'}:[0,1] \to [0,1]$ be defined by
$$
\Psi_{\delta'}(y) = 
\begin{cases}
0 & \text{ if } y < \delta'\\
\frac{1}{\delta'}y  - 1& \text{ if } \delta' \leq y < 2\delta'\\
1& \text{ if } 2\delta' \leq y < 1 - 2\delta'\\
-\frac{1}{\delta'}y  - 1 + \frac{1}{\delta'} &\text{ if } 1-2 \delta' \leq y < 1 -\delta'\\
0 & \text{ if } y > 1 -\delta'\\
\end{cases}
$$
and write $f_{\delta'}(y) = f(y) \Psi_{\delta'}(y)$.

To prove \eqref{eq:I2}, we first write 
$I_1 = I_{11} + I_{12}
$ with $I_{1,k} = I_{1,k}(L,x,y,\delta,\delta')$ for $k=1,2$, where 
$I_{11}$ and $I_{12}$ are obtained from $I_1$ by replacing $f$ by $f_{\delta'}$ and
$f-f_{\delta'}$, respectively.
To verify \eqref{eq:I2}, it is sufficient to prove
\begin{equation}
\label{eq:I11}
\lim_{\delta' \to 0} \lim_{\delta \to 0} \lim_{L \to \infty} I_{11} = u(z)
\end{equation}
and
\begin{equation}
\label{eq:I12}
\lim_{\delta' \to 0} \lim_{\delta \to 0} \lim_{L \to \infty} I_{12} = 0
\end{equation}
To simplify notations, we will write $I_{11}^{\infty} = \lim_{L \to \infty} I_{11}$
and $I^{\infty,0}_{11} :=
\lim_{\delta \to 0} I^{\infty}_{11}$.

Let us consider the following truncated version of \eqref{eq:laplace}
\begin{equation}
\label{eq:laplace3}
\Delta \hat u = 0, \quad \hat u|_{\partial D} = \varsigma F_{\delta'},
\end{equation}
where $F_{\delta'}$ is defined as $F$ except that $f$ is replaced by $f_{\delta'}$.

\begin{proposition}
\label{prop1}
For any $\delta' \in (0, 1/4)$, $I^{\infty,0}_{11}$ is the solution of \eqref{eq:laplace3}.
\end{proposition}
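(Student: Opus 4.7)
The plan is to apply the conditional local invariance principle (H2) inside the integrand of $I_{11}$, convert the sum over injection points $\fl \in \partial_W D_L$ into a Riemann integral, absorb $\bB(sL^2)$ into its mean by equidistribution, and finally identify the resulting formula with $\hat u$ through the Brownian Poisson-kernel representation of the Dirichlet solution.

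First, substitute $t = sL^2$ so that $s$ ranges over the fixed interval $[\delta, 1/\delta]$. For $\fl$ of type $k$ with $\fl_2 = \tilde y L$, (H2) applied with $T = sL^2$, $\eta = \tilde y/\sqrt s$, $\alpha = x/\sqrt s$, $\gamma = y/\sqrt s$, and $\beta = \xi = 1/\sqrt s$ yields
\[
\Prob\!\left(\cZ_t = \langle (x,y)L \rangle,\, \min\{\tau_0^\cY,\tau_L^\cY,\tau_0^\cX,\tau_L^\cX\} > t \mid \cZ_0 = \fl\right) = L^{-3} s^{-3/2}\, c_k\, \psi\!\left(\tfrac{x}{\sqrt s}, \tfrac{1}{\sqrt s}\right)\phi\!\left(\tfrac{\tilde y}{\sqrt s}, \tfrac{y}{\sqrt s}, \tfrac{1}{\sqrt s}\right)(1 + o(1)),
\]
uniformly in $s \in [\delta, 1/\delta]$ and $\tilde y \in [\delta', 1-\delta']$ (the support of $f_{\delta'}$). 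This is exactly the regime covered by the uniform convergence clause of (H2); the cutoffs $\delta, \delta'$ also keep $\fl$ away from the corners, so $\cJ(\fl)$ and hence $\bA(\cJ(\fl))$ depend only on the type $k$.

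The $K$-periodic arrangement of type-$k$ points on $\partial_W D_L$ gives the Riemann sum limit $\frac{1}{L}\sum_{\fl \text{ of type } k} g(\fl_2/L) \to (\fl^{(K)}_2)^{-1}\int_0^1 g(\tilde y)\,d\tilde y$ for continuous $g$. After incorporating the Jacobian $dt = L^2\, ds$, the outer integrand in $s$ becomes a product of a bounded continuous function $G(s)$ and $\bB(sL^2)$; since $\bB$ is continuous, $1$-periodic with mean $1$, a standard equidistribution argument on the compact interval $[\delta,1/\delta]$ replaces $\bB(sL^2)$ by $1$ in the limit. Combining,
\[
I_{11}^\infty = \varsigma_0 \int_\delta^{1/\delta}\!\!\int_0^1 f_{\delta'}(\tilde y)\, s^{-3/2}\, \psi\!\left(\tfrac{x}{\sqrt s}, \tfrac{1}{\sqrt s}\right)\phi\!\left(\tfrac{\tilde y}{\sqrt s}, \tfrac{y}{\sqrt s}, \tfrac{1}{\sqrt s}\right) d\tilde y\, ds,
\]
where $\varsigma_0 := (\fl^{(K)}_2)^{-1}\sum_{k=1}^K c_k\, \bA(\cJ(\fl^{(k)}))$. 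Passing $\delta \to 0$ is then justified by integrability: Gaussian tails of $\phi$ at small $s$ and the diffusive decay of $s^{-3/2}\psi\phi$ at large $s$ (visible from \eqref{eq:phi}--\eqref{eq:psi}) give an $L^1$ integrand in $s$.

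The last step identifies $I_{11}^{\infty,0}$ with $\hat u(x,y)$. Factorizing the Dirichlet heat kernel on $[0,1]^2$ into the product of one-dimensional Dirichlet heat kernels, one recognizes $s^{-1/2}\phi(\tilde y/\sqrt s, y/\sqrt s, 1/\sqrt s)$ as the $y$-kernel, while the inward normal derivative at $x_0 = 0$ of the $x$-kernel is computed via the small-$\eta$ expansion $\phi(\eta,\alpha,\beta) = \sqrt{2/\pi}\,\eta\,\psi(\alpha,\beta) + O(\eta^2)$ (a direct calculation from \eqref{eq:phi} and \eqref{eq:psi}), giving $\sqrt{2/\pi}\, s^{-1}\psi(x/\sqrt s, 1/\sqrt s)$. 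Hence $\int_0^\infty s^{-3/2}\psi\phi\,ds$ is $\sqrt{\pi/2}$ times the Poisson kernel $P_W((x,y);\tilde y)$ of $D$ at the West boundary point $(0,\tilde y)$. Setting $\varsigma := \sqrt{2/\pi}\,\varsigma_0$, the classical probabilistic representation of the Dirichlet solution gives $I_{11}^{\infty,0}(x,y) = \varsigma \int_0^1 f_{\delta'}(\tilde y) P_W((x,y);\tilde y)\,d\tilde y = \hat u(x,y)$. I expect the main obstacle to be the bookkeeping here: pinning down the constant $\sqrt{2/\pi}$ and verifying that the resulting $\varsigma$ is independent of $(x,y)$ and of $\delta'$, so that it agrees with the constant later forced upon us by \eqref{eq:I12} and by the heat-equation counterpart, and thereby deserves to be the single flux constant $\varsigma$ promised by Theorem \ref{thm1}.
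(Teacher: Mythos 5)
Your Step~1 (obtaining, via (H2), the Riemann-sum argument, and the equidistribution of $\bB$, the integral representation
\[
I_{11}^\infty = \varsigma_0 \int_\delta^{1/\delta} s^{-3/2}\,\psi\!\Big(\tfrac{x}{\sqrt s}, \tfrac{1}{\sqrt s}\Big)\int_0^1 f_{\delta'}(\sigma)\,\phi\!\Big(\tfrac{\sigma}{\sqrt s}, \tfrac{y}{\sqrt s}, \tfrac{1}{\sqrt s}\Big)\,d\sigma\, ds
\]
with $\varsigma_0 = \brc K/B$ in the paper's notation) coincides with the paper's Step~1. Where you differ is in two places, one a shortcut and one a genuinely different argument.

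The shortcut: you pass $\delta\to 0$ by asserting $L^1$-integrability of the $s$-integrand on $(0,\infty)$. This is in fact true (the $k=0$ term of $\psi$ gives $\sim s^{-2}e^{-x^2/2s}$ near $s=0$, and Poisson summation shows $\psi(x/\sqrt s,1/\sqrt s)$ is exponentially small as $s\to\infty$), but it is not ``visible'' from \eqref{eq:phi}--\eqref{eq:psi} without at least the Poisson-summation observation, which you should spell out. The paper instead performs the $s$-integral first (via $\omega=(2s)^{-1/2}$) to land on a closed double series, and then justifies the interchange of $\lim_{\delta\to0}$ with the double sum by the uniform-convergence Lemma~\ref{lem:5.3}. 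Either route works; yours is shorter but needs the decay estimate made explicit.

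The genuine difference is Step~2. The paper gives two proofs: one abstract (Step~1 determines the limit uniquely for any process satisfying (H1)--(H3); by Remark~\ref{rem} the random walk already verifies the conclusion, and the constant is then pinned down by computing $c_1$ for the simple symmetric walk using \cite[Prop.\ 5.1.2]{LL}), and one direct (show the closed double series is harmonic and verify the boundary behavior on all four sides by explicit computation). You instead recognize the structure of the $s$-integrand: $s^{-1/2}\phi(\sigma/\sqrt s, y/\sqrt s, 1/\sqrt s)$ is the one-dimensional Dirichlet heat kernel $q_s(\sigma,y)$ on $(0,1)$, and $\sqrt{2/\pi}\,s^{-1}\psi(x/\sqrt s,1/\sqrt s)=\partial_{x_1} q_s(x_1,x)\big|_{x_1=0}$ (using the small-$\eta$ expansion $\phi(\eta,\gamma,\xi)=\sqrt{2/\pi}\,\eta\,\psi(\gamma,\xi)+O(\eta^3)$, which is correct), so that $\int_0^\infty s^{-3/2}\psi\phi\,ds$ is, up to a constant, the Poisson kernel of the unit square at $(0,\sigma)$. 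This is more conceptual than either of the paper's proofs and handles harmonicity and all four boundary conditions in one stroke. It also dovetails nicely with the proof-by-duality formula \eqref{eq:dual6}, which is exactly the Brownian hitting-measure representation you are invoking. One small caveat: your normalization is off by a factor of $2$ (you forgot that the generator here is $\tfrac12\Delta$, so $\Delta G=-2\delta$ and $P_W=\tfrac12\partial_{x_1}G$), which is why you get $\varsigma=\sqrt{2/\pi}\,\varsigma_0$ while the correct value (and the one obtained in the paper, \eqref{eq:varsigma}) is $\varsigma=\sqrt{2\pi}\,\varsigma_0$. Since the proposition only asserts the Dirichlet property for \emph{some} $\varsigma$, this does not affect its validity, but you should fix the constant to match the $t<\infty$ case.
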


\begin{proof}
The proof consists of two steps. First, we prove that $I^{\infty,0}_{11}$ exists; then we show that
it satisfies \eqref{eq:laplace3}.

{\bf Step 1: $I^{\infty,0}_{11}$ exists}

Let us define $B = \fl^{(K)}_2$, where $K$ is defined by \eqref{eq:defK}. 
To simplify formulas, let us write
$\bar \tau = \min \{
 \tau_0^{\cY},
\tau_{L}^{\cY}, 
\tau_0^{\cX},
\tau_{L}^{\cX} \}$.
Also observe that by transitivity of $\bG$, there are constants
$\bA_1,...,\bA_K$ so that for any $m \in \mathbb N$, for any $k=1,...,K$,
$\bA(\cJ(\fl^{(mK+k)})) = \bA_k$.  
Now, we compute 
\begin{align*}
I_{11} =& \sum_{\fl \in \partial_W D_L, \fl_2/L \in (\delta', 1-\delta')}
\bA(\cJ(\fl)) 
\int_{\delta L^2}^{L^2 / \delta} 
\bB(t)
f_{\delta'}\left(
\frac{\fl_2}{L} \right) \mathbb P (
\cZ_t = \langle (x, y) L \rangle,
\bar \tau > t
| \cZ_0 = \fl) dt\\
=&
\sum_{m = \delta' L /B}^{(1-\delta' ) L /B} \sum_{k=1}^{K} \bA_k
\int_{\delta L^2}^{L^2 / \delta} \bB(t)
f_{\delta'}\left(\frac{\fl^{(mK+k)}_2}{L} \right)
 \mathbb P (
\cZ_t = \langle (x, y) L \rangle,
\bar \tau  > t
| \cZ_0 = \fl^{(mK+k)}) dt\\
=& 
\sum_{m = \delta' L /B}^{(1-\delta' ) L /B} \sum_{k=1}^{K} \bA_k\\
&
\int_{\delta}^{1/ \delta} \bB(s L^2) 
f_{\delta'}\left(\frac{\fl^{(mK+k)}_2}{L} \right)
 \mathbb P (
\cZ_{sL^2} = \langle (x, y) L \rangle,
\bar \tau  > sL^2
| \cZ_0 = \fl^{(mK+k)}) L^2ds
\end{align*}
Now using (H2) with $T = sL^2$,
$\alpha = x/\sqrt s$,
$\beta = 1/\sqrt s$,
$\eta = \fl^{(mK+k)}_2/(L \sqrt s)$,
$\gamma = y/\sqrt s$,
$\xi = 1/\sqrt s$, we obtain
\begin{align*}
I_{11} \sim&
\sum_{m = \delta' L /B}^{(1-\delta' ) L /B} \sum_{k=1}^{K}
\bA_k
c_k\\
&
\int_{\delta}^{1/ \delta} \bB(s L^2) 
f_{\delta'}\left(\frac{\fl^{(mK+k)}_2}{L} \right)
s^{-3/2} L^{-1}
\psi  \left( \frac{x}{\sqrt s}, \frac{1}{\sqrt s} \right)  
\phi
\left( \frac{\fl^{(mK+k)}_2}{L \sqrt s}
, \frac{y}{\sqrt s}, \frac{1}{\sqrt s} \right)  ds
\end{align*}
by uniform convergence, where $a_L \sim b_L$ means that 
$\lim_{L \to \infty} a_L/b_L = 1$.
Let us write 
$$\brc = \frac{1}{K} \sum_{k=1}^K \bA_k c_k .$$ 
Then
$$
I_{11} \sim
\frac{\brc K}{B}
\int_{\delta}^{1/ \delta}  \bB(s L^2) 
s^{-3/2}
\psi  \left( \frac{x}{\sqrt s}, \frac{1}{\sqrt s} \right)  
\left[
\sum_{m = \delta' L /B}^{(1-\delta' ) L /B} \frac{B}{L}
f_{\delta'}\left(\frac{\fl^{(mK)}_2}{L} \right)
\phi
\left( \frac{\fl^{(mK)}_2}{L \sqrt s}, 
\frac{y}{\sqrt s}, \frac{1}{\sqrt s} \right)  
\right]ds.
$$
Replacing the Riemann sum with the corresponding 
Riemann integral, we obtain
$$
I_{11} \sim
\frac{\brc K}{B}
\int_{\delta}^{1/ \delta} \bB(s L^2) 
s^{-3/2}
\psi  \left( \frac{x}{\sqrt s}, \frac{1}{\sqrt s} \right)  
\left[
\int_{\delta'}^{1-\delta'} f_{\delta'}(\sigma) 
\phi
\left( \frac{\sigma}{\sqrt s}, 
\frac{y}{\sqrt s}, \frac{1}{\sqrt s} \right)  d \sigma
\right]ds
$$
(we are permitted to do this because of uniform convergence of the bracketed expression in 
$s$).
Since the the integrand in the last formula 
is uniformly continuous in $s$ and since $\bB$ is periodic with period $1$ and
$\int_0^1 \bB = 1$, we can take the
limit $L \to \infty$
to conclude that $I_{11}^{\infty}$ exists and
is equal to 
$$
\frac{\brc K}{B}
\int_{\delta}^{1/ \delta}
s^{-3/2}
\psi  \left( \frac{x}{\sqrt s}, \frac{1}{\sqrt s} \right)  
\left[
\int_{\delta'}^{1-\delta'} f_{\delta'}(\sigma) 
\phi
\left( \frac{\sigma}{\sqrt s}, 
\frac{y}{\sqrt s}, \frac{1}{\sqrt s} \right)  d \sigma
\right]ds.
$$
Now we substitute \eqref{eq:phi} and \eqref{eq:psi} to 
the above to conclude
\begin{align*}
I_{11}^{\infty} &=
\frac{\brc K}{B}  \int_{\delta'}^{1-\delta'}  \int_{\delta}^{\frac{1}{\delta}}  \sum_{k=-\infty}^{\infty}  \sum_{n=-\infty}^{\infty}  \Bigg(
		 \frac{1}{s^2}  (2k + x)\exp \Big( -\frac{(2k+x)^2}{2s} \Big)  \frac{1}{\sqrt{2\pi}} \\
	& f_{\delta'}(\sigma) \Big[
		\exp \Big( -\frac{(y-\sigma-2n)^2}{2s} \Big) - \exp \Big( -\frac{(y+\sigma+2n)^2}{2s} \Big) \Big] \Bigg) ds d\sigma.
\end{align*}
Clearly, the sum is absolutely and uniformly convergent and so we can write the
sums in front of the integrals. Thus
$$
I_{11}^{\infty} = \frac{\brc K}{B} \sum_{k=-\infty}^{\infty}  \sum_{n=-\infty}^{\infty}
\int_{\delta'}^{1-\delta'} R(k, n, \delta, \sigma, s, x, y) d\sigma,
$$
where

\begin{align*}
&	R(k,n,\delta, \sigma, s, x, y) 
	=   \frac{x + 2k}{\sqrt{2\pi}} f_{\delta'}(\sigma) \\
	& * \int_{\delta}^{\frac{1}{\delta}}
		 \frac{1}{s^2}  \Big[ \exp \Big( -\frac{(2k+x)^2 + (y-\sigma-2n)^2}{2s} \Big)  
		 - \exp \Big( -\frac{(2k+x)^2 + (y+\sigma+2n)^2}{2s} \Big) \Big]  ds .
\end{align*}

Making the substitution $\omega = (2s)^{-\frac{1}{2}}$ 
(and so $4 \omega d \omega = - ds/s^2$)
and letting 
$P_1 = (2k+x)^2 + (y-\sigma-2n)^2$ and $P_2 =  (2k+x)^2 + (y+\sigma+2n)^2$, 
we get:
$$
	R(k, n, \delta, \sigma, x, y) = \frac{4(x + 2k)}{\sqrt{2\pi}} f_{\delta'}(\sigma) \int_{\sqrt{ \delta /2}}^{\frac{1}{\sqrt{2 \delta}}}\
		\omega \Bigg[ \exp (-P_1 \omega^2 )  
		 - \exp (-P_2 \omega^2  ) \Bigg] d\omega 
$$
$$ 
	=   -\frac{2(x + 2k)}{\sqrt{2\pi}} f_{\delta'}(\sigma) \Bigg[  \frac{1}{P_1}\exp\left( -\frac{P_1}{2\delta} \right) - \frac{1}{P_2}\exp\left( -\frac{P_2}{2\delta} \right) 
	   \Bigg] $$
$$+ \frac{2(x + 2k)}{\sqrt{2\pi}} f_{\delta'}(\sigma) \Bigg[ 
	   \frac{1}{P_1}\exp\left( -\frac{P_1 \delta}{2} \right)  - \frac{1}{P_2}\exp\left( -\frac{P_2 \delta}{2} \right) 
	   \Bigg] 
$$
$$ 
	= : R_1+R_2.
$$
Clearly, we have 
$$
\lim_{\delta \rightarrow 0} \sum_n \sum_k R_1 = 0
$$
and as Lemma \ref{lem:5.3} shows,
$$
\lim_{\delta \rightarrow 0} \sum_n \sum_k R_2 = \sum_n \sum_k \lim_{\delta \rightarrow 0} R_2 .
$$
So we get
$$
	 \lim_{\delta \to 0} R(k, n, \delta, \sigma, x, y) = R(k,n, \sigma, x, y) 
		   =  \frac{2(x + 2k)}{\sqrt{2\pi}} f_{\delta'}(\sigma) \Big[ \frac{1}{P_1} - \frac{1}{P_2} \Big].
$$
and hence
\begin{equation}
\label{eq:I11lim}
I^{\infty,0}_{11} = \frac{\brc K}{B} 
	 \int_{\delta'}^{1-\delta'}  \sum_{k=-\infty}^{\infty}  \sum_{n=-\infty}^{\infty} R(k, n, \sigma, x, y) d\sigma.
\end{equation}
To complete Step 1, it remains to verify

\begin{lemma}
\label{lem:5.3}
Let $\mathfrak u(z) = \exp (-z) / z$. And let $P_1$ and $P_2$ be as defined above.  Then for $\delta \in \mathbb{R}$, $x \in [0, 1]$, $\sigma \in [0, 1]$, and $k, n$ not both $0$, the following sum converges uniformly in $\delta$, $x$, and $\sigma$ as $M \to \infty$.
$$
\sum_{k=-M}^{M} \sum_{n=-M}^{M}  (2k + x) \delta [\mathfrak u(P_1 \delta) - \mathfrak u(P_2 \delta)].
$$
\end{lemma}
\begin{proof}

Let us write 
$$
P_3 = (2k + x)^2 + (y-\sigma + 2n)^2, \quad P_4 = (2k + x)^2 + (y + \sigma - 2n)^2.
$$
We will show
\begin{equation}
\label{lemmasum1}
\lim_{M \rightarrow \infty} 
 \Bigg\{ \sum_{k: |k| > M } \sum_{n = 1}^{\infty} + \sum_{k =-M }^M \sum_{n = M}^{\infty} \Bigg\}
  \mathcal |S|= 0,
\end{equation}
where
$$
  \mathcal S = \mathcal S(k,n,\delta, \sigma, x,y) = (2k + x) \delta [\mathfrak u(P_1 \delta) -  \mathfrak u(P_2 \delta) 
+\mathfrak  u(P_3\delta) - \mathfrak  u(P_4 \delta)],
$$
and the convergence is uniform in $\delta, x, \sigma$. 
First, observe that
$$
P_1 - P_2 = -4 (\sigma + 2n) y, \quad  P_3 - P_4 = 4 ( 2n -  \sigma) y
$$
By the mean value theorem,
$$
\mathfrak u(P_1 \delta) - \mathfrak  u(P_2 \delta) = 
\mathfrak u'(P_1' \delta)(P_1 - P_2) \delta, \quad 
\mathfrak u(P_3 \delta) -  \mathfrak u(P_4 \delta) = 
\mathfrak u'(P_3' \delta)(P_3 - P_4)\delta
$$
for some $P_1' \in (P_1, P_2)$ and $P_3' \in (P_4, P_3)$.
Using the mean value theorem again, we conclude
\begin{align*}
&\mathfrak u(P_1 \delta) -  \mathfrak u(P_2 \delta) + 
\mathfrak u(P_3\delta) -  \mathfrak u(P_4 \delta) 
= - 4 \sigma y \delta [\mathfrak u'(P_1' \delta) + 
\mathfrak u'(P_3' \delta)] - 8ny \delta^2 (P_1' - P_3') 
\mathfrak u''(P_1'' \delta)
\end{align*}
for some $P_1'' \in (P_4, P_2)$.
In the sequel, $C$ denotes a universal constant (independent 
of $k,n,x,y,\delta, \sigma, L$ or any other parameters), whose
value is unimportant and can even change from line to line.
Now using the estimates $|\mathfrak u'(z)| < C/z^2$,  
$|\mathfrak u''(z)| < C/z^3$ for any real number $z$, we have
$$
|\mathcal S| \leq C \left( \frac{|k|}{(k^2 + n^2)^2} + \frac{|k|n^2}{(k^2 + n^2)^3} \right)
$$
Thus we conclude
$$
\sum_{k = M}^\infty \sum_{n= 1}^k |\mathcal S| \leq C \sum_{k = M}^\infty \sum_{n= 1}^k \frac{1}{k^3} \leq C/M
$$
and likewise
$$
\sum_{n = M}^\infty \sum_{k= 0}^{n-1} |\mathcal S| \leq C \sum_{n = M}^\infty \sum_{k= 0}^{n-1}  \frac{1}{n^3} \leq C/M
$$
We have verified \eqref{lemmasum1}. The lemma follows.
\end{proof}

{\bf Step 2: $I^{\infty,0}_{11}$ satisfies \eqref{eq:laplace3}}

We give two independent proofs for Step 2. The first proof is shorter and 
easily generalizes to the case of finite $t$. The second proof shows that the formulas
derived above are tractable (at least in case $t = \infty$). 

{\it Proof 1}: Step 1 shows that for any stochastic process $\cZ_t$ satisfying (H1) - (H3),
the limit \eqref{eq:I11lim} is the same. Recalling Remark \ref{rem}, 
we already have examples when (H1)-(H3) as well as the conclusion
of the theorem holds. Thus $I^{\infty,0}_{11}$ 
has to satisfy \eqref{eq:laplace3}. To finish the first proof, we identify the
constant $\varsigma$.

Let us consider the simplest possible random walk, called
the simple symmetric random walk. That is,
$w_1 = (0,-\sqrt 2)^T$, $w_2 = (0,\sqrt 2)^T$, $w_3 = (-\sqrt 2,0)^T$, $w_4 = (\sqrt 2,0)^T$
and
$$\cP(w_i) =  
 \frac14 \text{ for } i=1,...,4.
$$
In this case,
$\cL = (\sqrt 2 \integers)^2$ and by the central limit theorem, 
$\cZ_t / \sqrt t$ converges
to a $2$ dimensional standard normal random variable (we chose the normalization
$\sqrt 2$ so that the limiting covariance matrix is identity and so $\cZ$ fits into
the framework of Proposition \ref{prop:RW}).
In this case,
we clearly have $K=1$, $B= \sqrt 2$,
$\bA_1 = 1/4$ (and $\bB = 1$). Thus 
$\brc = c_1/4$. 
Next we claim that now $c_1 = 4 /\sqrt{\pi}$. To prove the claim, first note that
\begin{equation}
\label{eq:constRW1}
\lim_{T \to \infty} \sqrt T
\prob (\tau_0^{\cX} > T | \cX_0 = 0) = \frac{2}{\sqrt{\pi}} 
\end{equation}
(this follows from e.g.,
\cite[Proposition 5.1.2]{LL}). 
The proof of (H2) is based on the fact that under the assumption 
$\tau_0^{\cX} > T$, $\cZ_{\lfloor tT \rfloor} / \sqrt T$, $0 \leq t \leq 1$ converges
to a stochastic process whose first coordinate is a Brownian meander and the second coordinate
is a Brownian motion. Furthermore, the local limit theorem
also holds
under the assumption
$\tau_0^{\cX} > T$ which gives (H2) (see the details in Section
\ref{sec6}). This local limit theorem combined with
\eqref{eq:constRW1} gives 
$c_1 = \frac{2}{\sqrt{\pi}} \mbox{covol} (\cL) = 4 /\sqrt{\pi}$ which proves
the claim.

Thus in case
$K=1$, $B=\sqrt 2$, $\brc = 1 / \sqrt{\pi}$, \eqref{eq:I11lim} satisfies \eqref{eq:laplace3} with 
$\varsigma = 1$.  
Since \eqref{eq:I11lim} is linear in $\brc K / B$, we conclude that
in case of general $K,B$ and $\brc$, \eqref{eq:I11lim}
satisfies \eqref{eq:laplace3} with 
\begin{equation}
\label{eq:varsigma}
\varsigma = \frac{ \sqrt{2 \pi }\brc K}{ B}.
\end{equation}

{\it Proof 2}:

{\bf Step 2'a: $I^{\infty,0}_{11}$ is harmonic}

An elementary computation shows that $R(k, n, \delta, \sigma, x, y)$,
as a function of $x,y \in (0,1)^2$ is harmonic for any $k$ and $n$.  
Since the derivatives of $R(k, n, \sigma, x, y)$ with respect to $x$ and $y$ converge uniformly
in a neighborhood of $x,y$, the Laplacian can be taken inside the sum
in \eqref{eq:I11lim}. It follows that 
$I^{\infty,0}_{11}$ is harmonic.


{\bf Step 2'b: $I^{\infty,0}_{11}$ satisfies the boundary conditions
of \eqref{eq:laplace3}}

Recall \eqref{eq:I11lim} from Step 1. Let us first consider the case when $|n| + |k| >0$.  In this case, there is uniform convergence in $x,y$ and $\sigma$ so we can write the limit
inside the sum and the integral:
$$
\frac{\brc K}{B} 
 \sum_{k,n \in \integers; |n| + |k| > 0}
\int_{\delta'}^{1-\delta'}  \lim_{(x,y) \to (0,y_0)} 
R(k, n, \sigma, x, y) d\sigma.
$$
We can directly compute this limit as
\begin{align*}
& \int_{\delta'}^{1-\delta'} 
\lim_{(x,y) \to (0,y_0)} R(k,n, \sigma, x, y) d\sigma
		= \int_{\delta'}^{1-\delta'} R(k,n, \sigma, 0, y_0) d\sigma \\
		&= \int_{\delta'}^{1-\delta'}  f_{\delta'}(\sigma) \left[ \frac{16ky_0(\sigma + 2n)}{[(2k)^2 + (y_0-\sigma-2n)^2 ] [ (2k)^2 + (y_0+\sigma+2n)^2 ]}  \right] d\sigma.
\end{align*}
We see that for each $n$, these terms are antisymmetric in $k$, so that summing over $k$ and $n$, with $|n|+|k|>0$, all of the terms cancel. 
Now we consider the case $n=k=0$. This term gives: 

\begin{align*}
\lim_{(x,y) \to (0,y_0)} I_{11}^{\infty,0}
=  \frac{\brc K}{B} \frac{8}{\sqrt{2\pi}} 
\lim_{(x,y) \to (0,y_0)}
\int_{\delta'}^{1-\delta'}  f_{\delta'}(\sigma) \left[ \frac{\sigma x y}{[x^2 + (y-\sigma)^2 ] [ x^2 + (y+\sigma)^2 ]}  \right] d\sigma.
\end{align*}
To compute the above integral assume first that $\delta' < y_0 < 1-\delta'$, and decompose it as 
$$
\int_{\delta'}^{1-\delta'} ... d\sigma= 
\int_{y_0 - A x}^{y_0 - A x} ... d\sigma+ \int_{y \in [\delta', 1-\delta'] 
\setminus [y_0 - A x, y_0 + A x] } ... d\sigma =: I_{111} + I_{112}$$
for some large constant $A$.

First, we compute $I_{111}$. For $y_0$ and $A$ fixed, and
for $x$ and $|y-y_0|$ small, $yf_{\delta'}(\sigma)/[x^2 + (y + \sigma)^2]$ is close to 
$f_{\delta'}(y_0)/(4y_0)$
uniformly in $\sigma$ as in $I_{111}$. Indeed, this follows from the continuity of 
$f_{\delta'}$. Thus we can write this term 
in front of the integral. 
Now it remains to compute
$$\int_{y_0-Ax}^{y_0+A x} x \sigma / [x^2 + (y_0-\sigma)^2] d\sigma.$$
Let us apply the substitution $\rho = (\sigma - y_0 ) / x$. Then the previous integral becomes
$$\int_{-A}^A x \rho/(1 + \rho^2) d \rho + \int_{-A}^A y_0 /(1 + \rho^2) d \rho.$$
The first integral here is zero as the integrand is an odd function. The second integral is 
$\pi y_0 (1 + o_A(1))$. We conclude
\begin{equation}
\label{eq:I111}
\lim_{(x,y) \to (0,y_0)} I_{111} = \frac{\pi}{4} f_{\delta'}(y_0) (1 + o_A(1)).
\end{equation}

Next, we claim
\begin{equation}
\label{eq:I112}
\lim_{(x,y) \to (0,y_0)} I_{112} =  o_A(1).
\end{equation}
To prove \eqref{eq:I112}, we compute
\begin{align*}
 & \int_{y_0+Ax}^{1 - \delta'}  f_{\delta'}(\sigma) \frac{\sigma x y_0}{[x^2 + (y_0-\sigma)^2 ] [ x^2 + (y_0+\sigma)^2 ]}   					d\sigma \\
			&\leq \| f\|_{\infty} \sum_{i=1}^{\infty} \int_{y_0+A x i}^{y_0+A x (i+1)}  \frac{\sigma x y_0}{[x^2 + (y_0-\sigma)^2 ] 							[ x^2 + (y_0+\sigma)^2 ]}  d\sigma \\
			&\leq \| f\|_{\infty} \sum_{i=1}^{\infty} \int_{y_0+A x i}^{y_0+A x (i+1)}   \frac{\sigma x y_0}{								[ x^2 + (Ax i)^2 ][ 2y_0 \sigma ] 		}  d \sigma\\
			&\leq \frac{\| f\|_{\infty}}{2} \sum_{i=1}^{\infty} \int_{y_0+A x i}^{y_0+A x(i+1)}   \frac{x}{x^2 												[1 + A^2 i^2 ]}  d\sigma \\
			&= \frac{\| f\|_{\infty}}{2}  \sum_{i=1}^{\infty}  \frac{A}{ 1 + A^2 i^2 } 
\leq \frac{\pi^2\| f\|_{\infty}}{12} \frac{1}{A} .
\end{align*}
This estimate, combined with a similar computation for the domain $[\delta', y_0 - A \delta']$, verifies
\eqref{eq:I112}. Next, if $y_0 < \delta'$ or $y_0> 1-\delta'$, then clearly $I_{111} = 0$ and 
$I_{112} = o_A(1)$. 
Now combining \eqref{eq:I111} and \eqref{eq:I112}, we obtain the boundary conditions
of \eqref{eq:laplace3} on the "West side" (that is when $x = 0$) with the constant
$$
\varsigma = \frac{ \sqrt{2 \pi }\brc K}{ B}.
$$
which coincides with \eqref{eq:varsigma}.

Checking the boundary conditions on the other three sides is easier. First, recall that
$$
R(n,k,\sigma, x,y) = \frac{2(x + 2k)}{\sqrt{2 \pi}}
\frac{ (y+\sigma+2n)^2 - (y-\sigma-2n)^2 }{[(2k+x)^2 + (y - \sigma-2n)^2 ][(2k+x)^2 + (y + \sigma+ 2n)^2 ]}.
$$
Thus for every $k=0,1,2,...$, we have $R(n,k,\sigma, 1,y) = -R(n,-k-1,\sigma, 1,y)$ and so
$\sum_{k \in \integers}R(n,k,\sigma, 1,y)  = 0$ for every $n$. It follows that
$\lim_{x \to 1} I_{11}^{\infty,0} = 0$.
Clearly, $R(n,k,\sigma, x,0)  = 0$ for every $n$ and $k$ and so $\lim_{y \to 0} I_{11}^{\infty,0} = 0$.
Finally, to prove $\lim_{y \to 1} I_{11}^{\infty,0} = 0$, let us write
$$
\lim_{y \to 1} I_{11}^{\infty,0} = \sum_k \frac{2(x + 2k)}{\sqrt{2 \pi}} \sum_n  \frac{1}{P_1(n)}
- \frac{1}{P_2(n)},
$$
where
$P_1(n) = (2k+x)^2 + (1-\sigma-2n)^2$ and $P_2(n) =  (2k+x)^2 + (1+\sigma+2n)^2$.
Now observe that $P_2(n) = P_1(n+1)$. Thus the sum over $n$ is telescopic and so by absolute convergence, $\lim_{y \to 1} I_{11}^{\infty,0} = 0$.
We have finished the proof of Step 2'b. 
\end{proof}

Now we finish the proof of \eqref{eq:thm11}.
First note that Proposition \ref{prop1} implies
\eqref{eq:I11}. Thus it remains to verify \eqref{eq:I12}. Consider the following Dirichlet problem:
\begin{equation}
\label{eq:defU}
\begin{cases}
&\Delta U = 0 \text{ in } (0,1) \times (-1,2), \\
 &U(0,y) = \varsigma(f(y) - f_{\delta'}(y)),
U(1,y) = U(x,-1) = U(x,2) = 0,
\end{cases}
\end{equation}
where $f$ and $f_{\delta'}$ are identically zero on $[-1,0] \cup [1,2]$.
Now the proof of Proposition \ref{prop1} applied on the domain $(0,1) \times (-1,2)$
with boundary condition given by $f - f_{\delta'}$ implies that for any $\delta',x,y$ fixed,
$$
\lim_{\delta \to 0} \lim _{L \to \infty} I_{12} \leq U(x,y).
$$
Indeed, on the one hand if the particles are only killed upon leaving $(0,L) \times (-L,2L)$, 
then we obtain an upper bound on the number of surviving particles in case
when particles are killed upon leaving $(0,L) \times (0,L)$. On the other hand, the proof of 
Proposition \ref{prop1} is applicable on the larger domain since the boundary condition
is identically zero in a neighborhood of the corners.

Now since the function $f -  f_{\delta'}$ is supported on the union of two intervals
with total length $4 \delta'$ and is bounded uniformly in $\delta'$, we have
$\lim_{\delta' \to 0} U(x,y) = 0$ for all $x,y$ fixed. Thus
\eqref{eq:I12} follows and the proof of \eqref{eq:thm11} is complete.

The proof of \eqref{eq:thm12} is similar, so we only explain the differences.
First, the decomposition \eqref{eq:dec}  now reads
$$
\int_{\delta L^2}^{ t L^2 } ... dt + \int_{0}^{\delta L^2} ... dt  = : I_1 + I_2.
$$
In particular, $I_3$ is missing and $I_2$ is negligible as before.
We decompose $I_1 = I_{11} + I_{12}$ as before.
Proceeding as in Step 1 of the proof of Proposition
\ref{prop1}, we obtain
$$
\lim_{\delta \to 0} \lim_{L \to \infty} I_{11}
 = \frac{\brc K}{B} 
	 \int_{\delta'}^{1-\delta'}  \sum_{k=-\infty}^{\infty}  \sum_{n=-\infty}^{\infty} R(t,
k, n, \sigma, x, y) d\sigma,
$$
where
$$
	  R(t,k,n, \sigma, x, y) 
		   =  \frac{2(x + 2k)}{\sqrt{2\pi}} f_{\delta'}(\sigma) 
\left[ \frac{1}{P_1} \exp \left(- \frac{P_1}{2t}\right)- \frac{1}{P_2} \exp \left(- \frac{P_2}{2t}\right) \right].
$$
The first proof of Step 2 in Proposition
\ref{prop1} is the same as before. 
We prefer not to give a second proof of Step 2 as in the time dependent case, the 
formulas in Step 2'a become substantially longer.
Finally, the proof of \eqref{eq:I12} is again analogous to the previous case with $U$
as in \eqref{eq:defU} replaced by the unique solution $V(t,x,y): \reals_{\geq 0} \times
(0,1) \times (-1,2) \to \reals$ of 
$$
\begin{cases}
&V_t = \frac12 [V_{xx} + V_{yy}], \\
& V(t,0,y) = \varsigma(f(y) - f_{\delta'}(y)),
V(t,1,y) = V(t,x,-1) = V(t,x,2) = 0, V(0,x,y) = 0.
\end{cases}
$$
We have finished the proof of Theorem \ref{thm1}.

\section{Background on Lorentz gas}
\label{sec:billb}


\subsection{Preliminaries}

Here, we review some results for the 
Lorentz gas that are necessary to the proof of
Theorem \ref{thm2}. 
We refer the reader to \cite{ChM} for an in depth discussion.
Let us use the notation of Section \ref{sec:bil}. 

The map $\cF_0$ is hyperbolic in the sense that there are stable and
unstable cone fields 
$\cC^{u/s}_x \subset T_x \cM_0$ so that 
$D_x \cF_0 (\cC_x^u) \subset \cC^u_{\cF_0(x)}$ and 
$D_x \cF_0^{-1} (\cC_s^u) \subset \cC^2_{\cF_0^{-1}(x)}$
and for all $v \in \cC^u_x$, $\|D_x \cF_0(v)\| \geq \Lambda \|v\|$
(and likewise 
for all $v \in \cC^s_x$, $\|D_x \cF^{-1}_0(v)\| \geq \Lambda \|v\|$).
Furthermore, stable and unstable manifolds exist through almost every 
point, but not through every point because of singularities due to grazing
collisions. In fact, the presence of these singularities makes the study
of billiards particularly peculiar. 

Let us use the coordinates $(r,\varphi)$ on $\cM_0$ where $r$ is the 
arc length parameter of $\partial \cD_0$ and $\varphi \in [- \pi/2, \pi/2]$ 
is the angle between the postcollisional velocity and the normal vector to 
$\cD$. 
A curve $W \subset \cM_0$ is called unstable
if for every $x \in W$, $T_xW$ is in the unstable cone. Furthermore, 
an unstable curve $W$ is called weakly homogeneous if 
it does not intersect any singularity and 
there exists 
$k = 0,k_0,k_0+1,...$ so that for all $x = (r,\varphi)$ $\varphi \in
[(k+1)^{-2},k^{-2}]$ if $|k|> k_0$ or $|\varphi| < k_0^{-2}$. In other words, weakly
homogeneous unstable curves are required to be disjoint from the real singularities of $\cF_0$ as well as secondary singularities $\varphi = \pm k^{-2}$ for $|k| \geq k_0$. 
A weakly homogeneous unstable curve is called homogeneous
if it satisfies certain extra regularity properties whose exact form are not
needed for us (see the distortion and curvature bounds in \cite[Section 4.3]{CD09}).

A pair $\ell =(W, \rho)$ is called a {\it standard pair} if $W$ is a homogeneous
unstable curve and $\rho$ is a probability measure on $W$ so that
$$
\left|
\log  \frac{d \rho}{d \mbox{Leb}} (x) - \log  \frac{d \rho}{d \mbox{Leb}} (y)
\right|
\leq C_0 \frac{|W(x,y)|}{|W|^{2/3}}
$$
where $C_0$ is universal constant and $|.|$ stands for arc length.
Here and in the sequel $\log$ stands for logarithm with base $e$. We will
also use the notation $\log_2$ for the logarithm with base $2$.
Given $\ell$, we denote by $\nu_{\ell}$ the probability measure generated 
by $\rho$ and $\mbox{length}(\ell) = \mbox{length}(W)$. 
Due to the singularities, an image of a homogeneous unstable curve
will be a collection of unstable curves. Furthermore, the regularity
of $\rho$ is chosen in a way that is preserved by $\cF_0$. Thus the image
of a standard pair under $\cF_0$ is the weighted average of standard pairs.
Thus it is convenient to introduce the notion of a {\it standard family}:
a weighted average of standard pairs. Specifically, let us say that
$\cG = \{ \{ \ell_a = (W_a, \rho_a) \}_{a \in \mathfrak A}, \lambda \}$ is a standard family if $\ell_a$ are standard pairs, $W_a$'s are disjoint
and $\lambda$ is a probability measure on the index set $\mathfrak A$.
The standard family $\cG$ induces a measure $\nu_{\cG}$ on $\cM_0$ by
$$
\nu_{\cG}(B) = \int_{\mathfrak A} \nu_{\ell} (B \cap W_a) d \lambda(a)
$$
for Borel sets $B \subset \cM_0$.
For a given homogeneous unstable curve $W$, and for
$x \in W$, we denote by $r(x)$ the distance from $x$ to the closest endpoint of $W$, measured along $W$. We denote by $r_n(x)$ 
the distance from $\cF_0^n(x)$ to the closest endpoint of $W'$, where
$W'$ is the maximal homogeneous curve in the image $\cF^n(W)$
containing $\cF_0^n (x)$. We define the $Z$ function of a standard family
by 
$$
Z_{\cG} = \sup_{\eps >0} \frac{\nu_{\cG} (r < \eps)}{\eps}.
$$
Note that we assumed that the curves in a standard family are disjoint and 
so the function $r$ is well defined. Now we are ready to state the
last missing technical piece of Theorem \ref{thm2}: 
$\cG$ is any standard family with a finite $Z$ function. Examples include 
any standard pair or the invariant measure $\nu_0$.

A fundamental property of Sinai billiards is that the expansion wins over fragmentation. That is, most of the weight carried by the image of a standard pair is concentrated on long curves. The precise statement,
called Growth lemma is the following (see 
\cite[Prop. 4.9, 4.10]{CD09}):

\begin{lemma}
For any standard pair $\ell = (W, \rho)$ and any $n \in \integers_+$,
\begin{equation}
\label{eq:Markov}
\nu_{\ell} (A \circ \cF_0^n) = \sum_{i} c_{n,i} \nu_{\ell_{n,i}} (A),
\end{equation}
where $c_{n,i} >0$, $\sum_{i} c_{n,i} = 1$ and $\ell_{n,i} = (W_{n,i},
\rho_{n,i})$
are standard pairs so that $\cup_{i} W_{n,i} = \cF_0^n(W)$
and $\rho_{n,i}$ is a constant times the push-forward of $\rho$ by $\cF_0^n$.
Furthermore, there are universal constants $\varkappa,C$ so that
for any $n > \varkappa \log \mbox{length} (\ell)$ and for any $\eps >0$
$$
\sum_{i: \mbox{length}(\ell_{n,i}) < \eps} c_{n,i} < C \eps.
$$ 
\end{lemma}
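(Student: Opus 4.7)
The plan is to follow the by-now standard Chernov--Dolgopyat strategy for the Growth Lemma, in three stages: (i) construct the decomposition in \eqref{eq:Markov}; (ii) establish a one-step contraction inequality for the $Z$-function; (iii) iterate and translate the $Z$-bound into the desired bound on short pieces.

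First I would construct the decomposition. The image $\cF_0^n(W)$ is cut into countably many smooth pieces by two types of discontinuity: the singularity set of $\cF_0^n$ (arising from grazing/tangential collisions) and the boundaries $\varphi=\pm k^{-2}$ of the homogeneity strips. Let $\{W_{n,i}\}$ be the resulting maximal homogeneous unstable curves. For each $i$, let $V_{n,i}=\cF_0^{-n}(W_{n,i})\cap W$, set $c_{n,i}=\rho(V_{n,i})$, and define $\rho_{n,i}$ as the normalized push-forward $(c_{n,i})^{-1}(\cF_0^n)_*(\rho|_{V_{n,i}})$. Clearly $\sum_i c_{n,i}=1$ and $\cup_i W_{n,i}=\cF_0^n(W)$. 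The only nontrivial point is that $\ell_{n,i}=(W_{n,i},\rho_{n,i})$ is again a \emph{standard pair}, i.e.\ the density $\rho_{n,i}$ still satisfies the log-H\"older bound with the universal constant $C_0$; this follows from the distortion and curvature estimates for homogeneous unstable curves (cf.\ \cite[Sec.~4.3]{CD09}), which are preserved under $\cF_0$ precisely because the exponent $2/3$ in the definition is adapted to the one-step expansion.

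The core is the quantitative bound, for which the plan is to prove the one-step growth inequality
\begin{equation*}
Z_{\cF_0\cG}\ \leq\ \theta\,\Lambda^{-1}\,Z_{\cG}\ +\ C_*,
\end{equation*}
for universal constants $\theta\in(0,1)$ and $C_*>0$, valid for every standard family $\cG$. The $\Lambda^{-1}$ factor reflects uniform hyperbolicity: a point within distance $\eps$ of an endpoint of $W_{n+1,i}$ pulls back to a point within distance $\Lambda^{-1}\eps$ of some preimage endpoint. The multiplier $\theta<1$ is the genuinely difficult part, and is the main obstacle: it requires a complexity estimate showing that the number $K_{n_0}$ of smooth components of $\cF_0^{n_0}(W)$ meeting at any single point grows slower than $\Lambda^{n_0}$ for some $n_0$, so that $K_{n_0}\Lambda^{-n_0}<1$. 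One then works with the iterate $\cF_0^{n_0}$ (absorbing $n_0$ into $\varkappa$ at the end). The finite-horizon hypothesis is used crucially here to keep the combinatorial complexity of the singularity set and the homogeneity-strip partition bounded; new short pieces created near singularities or near $\varphi=\pm k^{-2}$ contribute only a uniformly bounded amount $C_*$ to $Z_{\cF_0\cG}$ (the tails of the homogeneity decomposition are summable thanks to the $k^{-2}$ spacing).

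Finally I would iterate and conclude. For a single standard pair $\ell=(W,\rho)$, the log-H\"older density gives $Z_\ell\leq C/\length(W)$. Iterating the one-step bound,
\begin{equation*}
Z_{\cF_0^n\ell}\ \leq\ (\theta\Lambda^{-1})^n Z_\ell\ +\ \frac{C_*}{1-\theta\Lambda^{-1}}.
\end{equation*}
Choosing $\varkappa$ so that $(\theta\Lambda^{-1})^{\varkappa|\log\length(W)|}\leq\length(W)$, the first term is bounded by a universal constant once $n>\varkappa|\log\length(W)|$, hence $Z_{\cF_0^n\ell}\leq C'$ uniformly. Since any $x$ on a piece of length $<\eps$ satisfies $r(x)<\eps$, we obtain
\begin{equation*}
\sum_{i:\ \length(\ell_{n,i})<\eps} c_{n,i}\ \leq\ \nu_{\cF_0^n\ell}(r<\eps)\ \leq\ Z_{\cF_0^n\ell}\cdot\eps\ \leq\ C\eps,
\end{equation*}
which is the claimed bound. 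The only real work is in Step (ii); Steps (i) and (iii) are essentially bookkeeping once the distortion estimates and the one-step contraction are in hand.
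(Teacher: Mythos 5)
The paper does not prove this lemma; it quotes the Growth Lemma verbatim from Chernov--Dolgopyat \cite[Prop.~4.9, 4.10]{CD09}, so there is no in-paper proof to compare against. Your sketch is a faithful high-level reproduction of the standard Chernov--Dolgopyat/Chernov--Markarian argument, and the overall architecture is right: the Markov decomposition at singularities and homogeneity-strip boundaries, preservation of the standard-pair class under pushforward by the distortion bounds, a one-step contraction for the $Z$-function, and iteration together with $Z_\ell\leq C/\mathrm{length}(\ell)$ and the inclusion $\{\mathrm{length}(\ell_{n,i})<\eps\}\subset\{r<\eps\}$. (You are also right to read ``$n>\varkappa\log\mathrm{length}(\ell)$'' as $n>\varkappa|\log\mathrm{length}(\ell)|$; the absence of absolute values is evidently a typo, consistent with the paper's usage elsewhere.)

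The one real gap is that Step (ii), which you yourself flag as the entire content, is asserted rather than proved, and the way you formulate it is slightly off for this class of systems. Writing the one-step bound as $Z_{\cF_0\cG}\leq\theta\Lambda^{-1}Z_\cG+C_*$ with an independent $\theta\in(0,1)$ suggests a clean factoring into ``complexity'' $\theta$ and ``expansion'' $\Lambda^{-1}$; but the complexity (number of pieces sharing an endpoint) is $\geq 1$, not $<1$, and what one actually needs is the \emph{one-step expansion estimate}: $\liminf_{\delta\to0}\sup_{|W|<\delta}\sum_i\Lambda_i^{-1}<1$, where the sum runs over the homogeneous pieces of $\cF_0 W$ and $\Lambda_i$ is the minimal expansion on the $i$th piece. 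The sum is finite and $<1$ because the finite-horizon hypothesis bounds the number of ``primary'' pieces (those created by grazing singularities), while the ``secondary'' pieces in the $k$th homogeneity strip have expansion $\gtrsim k^2$ and hence contribute a summable tail $\sum_k k^{-2}$. Crucially, for dispersing billiards with finite horizon this works at $n_0=1$; the device of passing to a higher iterate $\cF_0^{n_0}$ to beat subexponential complexity growth is the fallback used for billiards with corners or other degeneracies and is not needed here. None of this undermines your Steps (i) and (iii), which are correct bookkeeping, but Step (ii) as written is a gesture at the right theorem rather than an argument, and the $\theta\Lambda^{-1}$ phrasing could mislead a reader into thinking the contraction factor is cheaper to obtain than it is.
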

We will refer to \eqref{eq:Markov} as Markov decomposition. A simple consequence of the
Growth lemma is the following lemma, which is proven in e.g. 
\cite[Proposition 7.17]{ChM}

\begin{lemma}
\label{lem:gl2}
There are constants $c_1$, $c_2$ and $\theta <1$ depending only on $\cD_0$
so that for any standard family $\cG$ with finite $Z$ function and for any $n$,
$$
Z_{\cF_0^n (\cG)} \leq c_1 \theta^n Z_{\cG} + c_2.
$$
\end{lemma}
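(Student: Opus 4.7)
The plan is to reduce the statement to a one-step contraction estimate and then iterate. By linearity of $Z$ in the standard family decomposition (i.e., if $\cG$ is obtained from standard pairs $\{\ell_a\}$ with weights $d\lambda(a)$, then $\nu_\cG(r<\eps) = \int \nu_{\ell_a}(r<\eps) d\lambda(a)$, so $Z_\cG \leq \int Z_{\ell_a} d\lambda(a)$), it suffices to prove the bound when $\cG = \ell$ is a single standard pair. So the goal reduces to establishing
\[
Z_{\cF_0^n \ell} \;\leq\; c_1 \theta^n Z_\ell + c_2
\]
for every standard pair $\ell$, every $n \geq 0$.

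First I would establish the single-step estimate $Z_{\cF_0 \ell} \leq \theta_0 Z_\ell + C_0$ for constants $\theta_0<1$ and $C_0$ depending only on $\cD_0$. For a standard pair $\ell=(W,\rho)$, the set $\{r<\eps\}$ in $\cF_0(W)$ splits into two contributions: preimages of points which were already within distance $\eps/\Lambda$ of the boundary of $W$ along an unstable direction, and points which landed within $\eps$ of a newly created endpoint, i.e. close to a singularity of $\cF_0$ or to a homogeneity strip boundary. The first contribution is bounded by the uniform expansion property of the unstable cone: it is at most $\nu_\ell(r<\eps/\Lambda) \leq (\eps/\Lambda) Z_\ell$, contributing a factor $1/\Lambda$. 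The second contribution requires bounding the number of singular branches meeting $W$ in an $\eps$-neighborhood; since $\partial \cD_0$ is piecewise smooth and the homogeneity strips are standard, one can show this is bounded by $C_0\eps$ uniformly in $\ell$ (this uses the fact that the singularity set is a finite union of smooth curves transverse to the unstable cone, together with the finite-horizon hypothesis). Combining these gives $\nu_{\cF_0 \ell}(r<\eps) \leq (K/\Lambda)\eps Z_\ell + C_0 \eps$ with $K$ the maximal number of singular pieces, and the delicate point is to choose parameters $k_0$ in the definition of homogeneity and possibly to consider $\cF_0^{n_0}$ instead of $\cF_0$ so that $K/\Lambda^{n_0} = \theta_0 < 1$, as in \cite{CD09}; the Growth lemma quoted right before (with its threshold $\varkappa \log \length(\ell)$) is precisely what allows such $n_0$ to exist independent of $\ell$.

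Then I would iterate. Working with $\cF_0^{n_0}$ first, repeated application of the one-step bound gives
\[
Z_{\cF_0^{jn_0}\ell} \;\leq\; \theta_0^j Z_\ell + C_0 \sum_{i=0}^{j-1} \theta_0^i \;\leq\; \theta_0^j Z_\ell + \frac{C_0}{1-\theta_0}.
\]
For intermediate $n = jn_0 + r$ with $0\leq r < n_0$, I use that a single application of $\cF_0$ inflates $Z$ by at most a multiplicative constant (which again follows by the same decomposition argument, without needing $\theta_0 < 1$). Absorbing these finitely many constants into $c_1$ and $c_2$, and setting $\theta = \theta_0^{1/n_0}$, yields the stated inequality for all $n$.

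The hard part is the one-step estimate, specifically the control of short pieces created by singularities and homogeneity-strip boundaries. The argument requires the homogeneity strips to be fine enough that the singularities intersect any unstable curve in a controlled number of points, and finite horizon to keep the free flight bounded. Once the cone-expansion constant $\Lambda$ is combined with the bound on the number of new pieces per iteration and the iterate $n_0$ is chosen appropriately, the geometric contraction is immediate and the iteration above concludes the proof.
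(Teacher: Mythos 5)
The paper does not prove this lemma; it cites \cite[Proposition 7.17]{ChM}, and your sketch is essentially the argument that appears there (one-step contraction followed by iteration), so the route is the right one. Two places in your write-up, however, are imprecise enough to constitute real gaps. First, the reduction to single standard pairs via ``linearity of $Z$'' is stated in the wrong direction: the inequality you record, $Z_\cG \leq \int Z_{\ell_a}\,d\lambda(a)$, only gives $Z_{\cF_0^n\cG} \leq c_1\theta^n\int Z_{\ell_a}\,d\lambda(a) + c_2$, and the integral on the right can be much larger than $Z_\cG$ (indeed infinite even when $Z_\cG$ is finite, since a family can carry tiny weight on arbitrarily short curves whose individual $Z_{\ell_a}$ blow up). What actually works is to establish the $n_0$-step estimate at the level of $\nu(r_{n_0} < \eps)$, i.e.\ $\nu_\ell(r_{n_0}<\eps) \leq \theta_0\,\nu_\ell(r_0<\eps) + C_0\eps$: this is linear in $\nu_\ell$, integrates over the family to yield the same inequality for $\nu_\cG$, and only then do you divide by $\eps$ and take the supremum to get the contraction for $Z_\cG$.

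Second, and more seriously, the claim that the contribution of newly created endpoints is bounded by $C_0\eps$ \emph{uniformly in $\ell$} is not correct as stated. That contribution is controlled by the density $h$ of $\rho$ near the new cut points, divided by the local expansion; for a standard pair on a curve of length $|W|$ the density satisfies $\sup h \sim 1/|W|$ (forced by the distortion bound together with $\int h = 1$), so a naive bound gives a term of order $\eps/(|W|\Lambda)$, which blows up as $|W|\to 0$. The bound that actually closes the argument is the \emph{one-step expansion estimate} (ChM, \S5.10): the weighted complexity $\sum_i \Lambda_i^{-1}$ over the singular branches is $<1$ for sufficiently short $W$ and sufficiently fine homogeneity strips, and it is this cancellation between branching and expansion (including the very strong expansion near grazing collisions, which makes the pushed-forward density small precisely where cuts accumulate) --- not a uniform count-times-$\eps$ bound --- that produces the contraction factor $\theta_0<1$. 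You do gesture at this (``the delicate point is to choose $k_0$\ldots and possibly consider $\cF_0^{n_0}$''), but in the form written the assertion ``one can show this is bounded by $C_0\eps$ uniformly in $\ell$'' is false, and replacing it with the one-step expansion estimate is the entire technical content of the lemma.
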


Let 
 $\kappa: \cM_0 \to \reals^2$ be the free flight vector and 
$\check \kappa: \cM_0 \to \reals^2$ be the 
discrete free flight vector, that is 
$\check \kappa(q,v) = \Pi_{\integers^2}(\mathcal F_0(q,v)) - \Pi_{\integers^2}(q,v)$.
Let us also write $\bar \kappa = \int | \kappa| d \nu_0 \in \mathbb R_+$.

Let 
\begin{equation}
\label{def:Zcheck}
\check{\cZ}_n(q,v) = \sum_{j=0}^{n-1} \check \kappa(\cF_0^j(q,v)).
\end{equation}
Similarly to the flow, we write $\check{\cZ}_n = (\check{\cX}_n, \check{\cY}_n)$. Put
$$\tau_{0}^{\check{\cX}} = \min\{ n>0:  \check{\cX}_n < 0 \}$$ 
and for $x \neq 0$, put
$$\tau_{x}^{\check{\cX}} = \min\{ n>0:  \check{\cX}_n = x \}$$
(and likewise with $\check{\cX}$ replaced by $\check{\cY}$).  

The next result is the extension of the central limit theorem to a functional variant
in both discrete and continuous times (see e.g. \cite{C06}).

\begin{theorem} [Invariance principle]
Fix a standard pair $\ell$ and consider the stochastic processes $\tilde {\cZ}_{t}$,
$\check{\cZ}_n$ induced by the initial condition $\ell$. Then 
\begin{itemize}
\item[(a)] $\tilde{\cZ}_{tT} / \sqrt T$,
$t \in [0,1]$ converges weakly as $T \to \infty$ to a planar Brownian motion with 
zero mean and covariance
matrix $\Sigma$ (introduced in Section \ref{sec:bil}) uniformly 
for $\ell$ satisfying $| \log \mbox{length} (\ell) |> T^{1/4}$.
\item[(b)]
With the notation $\check \Sigma  = \bar \kappa \Sigma$ we have 
$\check{\cZ}_{\lfloor tN \rfloor} / \sqrt N$,
$t \in [0,1]$ converges weakly as $N \to \infty$ to a planar Brownian motion with 
zero mean and covariance
matrix $\check \Sigma$
uniformly 
for $\ell$ satisfying $| \log \mbox{length} (\ell) |> N^{1/4}$.
\end{itemize}
\end{theorem}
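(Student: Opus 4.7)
The plan is to establish part (b) first and then derive part (a) by a time-change argument relating the billiard flow $\Phi^t$ to its Poincar\'e section $\cF_0$.

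For part (b), I would combine Gordin's martingale approximation with the standard-pair technique of Chernov-Dolgopyat. The observable $\check\kappa$ is bounded (finite horizon), has zero mean with respect to $\nu_0$ (by the time-reversal symmetry $\cI$), and enjoys exponential decay of correlations against H\"older test functions. These ingredients yield the CLT and, via the classical functional CLT for stationary martingale differences, the invariance principle with respect to the invariant measure. To upgrade from $\nu_0$ to an arbitrary standard pair $\ell$, I would invoke the Markov decomposition \eqref{eq:Markov}: after $n$ iterations, $\nu_\ell \circ \cF_0^{-n}$ is a weighted average of standard pairs whose $Z$-function is uniformly bounded by Lemma \ref{lem:gl2}. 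Choosing $n \sim \varkappa |\log \mbox{length}(\ell)|$ large enough to regularize the initial pair yet small compared to $\sqrt N$, the first $n$ collisions contribute only $O(n/\sqrt N) = o(1)$ to $\check\cZ_N/\sqrt N$, while the remainder inherits the invariance principle from a good standard family.

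For part (a), write $\tilde\cZ_t = \check\cZ_{\mathfrak n(t)}$ plus an $O(1)$ error, where $\mathfrak n(t)$ is the number of collisions of $\Phi^s(q,v)$ for $s \in [0,t]$. Birkhoff's ergodic theorem applied to $|\kappa|$ under $\cF_0$, strengthened by the CLT for $|\kappa|$, gives $\mathfrak n(t) = t/\bar\kappa + O_P(\sqrt t)$. Consequently $\tilde\cZ_{tT}/\sqrt T$ and $\check\cZ_{\lfloor tT/\bar\kappa\rfloor}/\sqrt T$ agree up to an error that vanishes in probability, and part (b) with $N = T/\bar\kappa$ yields convergence to a Brownian motion with covariance $\check\Sigma/\bar\kappa = \Sigma$, as required. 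The uniformity in $\ell$ transfers through the same decomposition.

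The main obstacle is the quantitative control needed to make the invariance principle uniform in a stretched-exponential window of initial lengths. Lemma \ref{lem:gl2} yields only geometric decay of the $Z$-function, so for initial curves near the length threshold one needs roughly $\log(1/\mbox{length}(\ell))$ iterations before the standard distortion bounds take effect; the $T^{1/4}$ threshold in the hypothesis is calibrated so that this preliminary phase remains negligible against the diffusive scale $\sqrt T$. A cleaner route would invoke the spectral gap of the Chernov transfer operator on a suitable anisotropic Banach space (as in Demers-Zhang or Baladi-Demers-Liverani), which furnishes a quantitative CLT directly and bypasses the explicit bookkeeping on initial regularization.
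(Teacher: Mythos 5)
The paper does not supply a proof of this statement; it is quoted as a background result with the citation \cite{C06}, so there is no in-house argument to compare yours against. What you have written is a reasonable sketch of the kind of proof one would find there, but two points deserve comment.

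First, your derivation of part (a) from part (b) by a random time change and your covariance bookkeeping ($\check\Sigma/\bar\kappa = \Sigma$) are both correct, and you are right that the $T^{1/4}$ threshold is there so that the regularization phase of length $\sim\varkappa|\log\mbox{length}(\ell)|$ is $o(\sqrt T)$. (Note that the condition as printed in the theorem, $|\log\mbox{length}(\ell)| > T^{1/4}$, is almost certainly a typo for $<$, as in Theorem \ref{thm:MLLT}; your reading tacitly uses the corrected inequality, since otherwise the preliminary phase would \emph{not} be negligible.)

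Second, the transfer step in part (b) is thinner than it looks. The Markov decomposition \eqref{eq:Markov} together with Lemma \ref{lem:gl2} turns $\cF_0^n\ell$ into a proper standard family, but that by itself does not give the invariance principle for that family; one still has to prove the WIP for proper standard families, which is precisely the content of the theorem one is trying to establish. In the Chernov--Dolgopyat framework the missing ingredient is the coupling/equidistribution estimate for pairs of standard families (or, equivalently, exponential mixing \emph{of standard families}, not merely of H\"older observables against $\nu_0$), which is the actual mechanism in \cite{C06}. Your closing remark about the spectral-gap route on anisotropic Banach spaces is the cleanest way to plug this hole; as written, though, the phrase ``inherits the invariance principle from a good standard family'' presupposes the conclusion for such families rather than deriving it, so the Gordin-plus-Markov-decomposition route needs that extra coupling input spelled out before it closes.
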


Another extension of the central limit theorem is the so called mixing local limit
theorem, which we discuss next.

\subsection{Mixing local limit theorem}

Recall \eqref{def:Zcheck}. Let us also define
$$
F_n (q,v) = \sum_{j=0}^{n-1} | \kappa( \cF_0^j (q,v))|.
$$
Given $\bx \in \reals^2$, $y \in \reals$ and 
a standard pair $\ell$ let us denote by $\vartheta_n$ the push-forward of $\nu_{\ell}$
by the map
$$
(q,v) \mapsto \left(
\check \cZ_n (q,v) - \langle \bx \sqrt n \rangle,
F_n(q,v) - n \bar \kappa - y \sqrt n,
\cF_0^n (q,v)
\right).
$$
That is, $\vartheta_n = \vartheta_n(\ell, \bx, y)$ is a measure on $\integers^2 \times \reals
\times \cM_0$. Fix an open set $A \subset \Omega_0$ 
as in \eqref{def:A} and define $\cA 
\subset \integers^2 \times \reals
\times \cM_0$ so that $((k,l), -t, (q,v)) \in \cA$ if and only if
$\Pi_{\integers^2}(q,v) = (k,l)$,
$\Pi_{\integers^2}(\Phi^t(q,v)) = 0$, $\Phi^t(q,v) \in A$ and $| \kappa(q,v) | >t$.
That is, $\cA$ contains phase points $(q +(k,l),v) $ and corresponding flight times $t$
so that a flight of length $t$ from $(q +(k,l),v) $ is free and arrives in the set $A$. By the
finite horizon assumption, $\cA$ is bounded. 

Let $\fg_{\Sigma}$ denote the Gaussian density with zero mean and covariance matrix
$\Sigma$.
The version of the mixing local limit theorem (MLLT) that we consider here is the following

\begin{theorem}
\label{thm:MLLT}
There is a positive definite $3 \times 3$ matrix $\tilde \Sigma$ whose top left $2\times 2$
minor is $\check \Sigma$ and constants $C, C_1, C_2$ so that for any standard
pair $\ell$ with $|\log \mbox{length} (\ell) | < n^{1/4}$ the following hold
\begin{enumerate}
\item[(a)] for any $(\bx, y) \in \reals^3$ and for any $A$ as in \eqref{def:A},
$$
\lim_{n \to \infty}
n^{3/2} \vartheta_n(\cA) 
= \mu_0(A) \bar \kappa \fg_{\tilde \Sigma} (\bx, y)
$$
uniformly for $\bx, y $ in compact subsets of $\reals^3$.
\item[(b)] for any $(\bx, y) \in \reals^3$ and for any $A$ as in \eqref{def:A} and for any 
positive integer $n$,
$$
n^{3/2} \vartheta_n(\cA) < C_1 \fg_{C \tilde \Sigma} (\bx, y) + C_2 ^{-1/2}.
$$
\end{enumerate}
\end{theorem}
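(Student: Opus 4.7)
The plan is to upgrade the mixing local limit theorem of \cite{DN16} — which handles the two-dimensional lattice observable $\check{\cZ}_n$ alone — by adjoining the one-dimensional continuous observable $F_n - n\bar\kappa$, and then unfolding the last free flight to convert the collision measure $\nu_0$ on $\cM_0$ into the flow measure $\mu_0$ on $\Omega_0$. The $n^{-3/2}$ scaling reflects the extra continuous dimension, and the $3 \times 3$ matrix $\tilde\Sigma$ is the covariance of the joint per-step observable $(\check\kappa, |\kappa| - \bar\kappa)$.

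First, I would set up the twisted transfer operator $\cL_\xi$ of $\cF_0$ on the anisotropic Banach space used in \cite{DN16}, with $\xi = (\xi_1, \xi_2, \xi_3) \in \bbT^2 \times \reals$ dual to $(\check\kappa, |\kappa|)$. A Nagaev--Guivarc'h perturbation near $\xi = 0$ yields a simple leading eigenvalue $\lambda(\xi) = 1 - \tfrac12 \xi^T \tilde\Sigma \xi + O(|\xi|^3)$; positive-definiteness of $\tilde\Sigma$ reduces to showing that $|\kappa|$ is not cohomologous over $\cF_0$ to an affine combination of the coordinates of $\check\kappa$, a standard aperiodicity condition that holds on finite-horizon Lorentz tables by a periodic-orbit argument. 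Away from $\xi = 0$, Dolgopyat-type oscillatory-integral bounds give a spectral gap $\|\cL_\xi^n\| \le C\theta^n$ with $\theta < 1$; Fourier inversion then produces the density $\fg_{\tilde\Sigma}(\bx, y)$ with the correct $n^{-3/2}$ prefactor, uniformly on compact sets in $(\bx, y)$.

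The mixing half of the statement — decoupling of the observable from the endpoint $\cF_0^n(q,v)$ — is built into the same framework: for test functions $\phi$ on $\cM_0$ one writes the joint characteristic functional as $\la \cL_\xi^n \rho_\ell, \phi \ra$, and the spectral decomposition factors the leading term as $\bigl(\int \phi\, d\nu_0\bigr) \lambda(\xi)^n$. The Growth Lemma and standard-pair machinery of \cite{CD09} control the regularity of $\rho_\ell$ in the required norm, and the hypothesis $|\log \mbox{length}(\ell)| < n^{1/4}$ enters here to keep the twisted cocycle bounded throughout the recovery phase. To pass from $\nu_0$ to $\mu_0(A)$, I would invoke the Ambrose--Kakutani suspension identity $\mu_0 = \bar\kappa^{-1}(\nu_0 \otimes \mbox{Leb})$ on the suspension $\{(x,t) : 0 \le t < |\kappa(x)|\}$: integrating the MLLT density over the slab parametrizing $\cA$ (i.e.\ over the free-flight time $t$ with endpoint in $A$) produces precisely the prefactor $\mu_0(A)\bar\kappa$. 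Part (b) then follows from the same spectral bounds applied for all finite $n$ rather than asymptotically: outside a neighborhood of $\xi = 0$ the operator norm decays exponentially, while inside it the perturbative expansion yields a Gaussian upper bound on the density; the additive $C_2 n^{-1/2}$ absorbs boundary contributions from small free flights and moderate-deviation corrections.

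The main obstacle is establishing the Dolgopyat-type spectral gap for $\cL_\xi$ uniformly in the new continuous direction $\xi_3 \in \reals$. One must rule out continuous families of unimodular eigenvalues arising from joint cohomological relations among the coordinates of $(\check\kappa, |\kappa|)$, and adapt the oscillatory cancellation estimates to accommodate the piecewise-smooth roof $|\kappa|$, which has jumps at the collision singularities. The existing lattice-direction estimates in \cite{DN16} provide a template, but the extension to a non-compact $\xi_3$-axis requires separate decay-at-infinity bounds, typically obtained by isolating long free-flight corridors where the cocycle oscillates rapidly.
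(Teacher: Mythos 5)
Your outline takes a genuinely different route from the paper, and it has two real gaps worth flagging.

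\textbf{Different route.} The paper does not prove the MLLT from scratch. It quotes \cite[Lemma 2.8]{DN16}, which already establishes the joint local CLT for $(\check\cX_n, F_n - n\bar\kappa)$ together with the mixing (equidistribution of $\cF_0^n(q,v)$) via the standard-pair / coupling machinery of \cite{CD09}, and notes that the same proof applies verbatim with $\check\cX$ replaced by the two-dimensional $\check\cZ$ and with the restriction to a set $A$ in the flow. The entire content of the paper's proof is the verification that the $\reals^3$-valued observable $f = (\check\kappa, |\kappa|-\bar\kappa)$ is \emph{minimal}, i.e.\ $M(f) = \integers^2 \times \reals$. That is done by: (i) quoting minimality of $\check\kappa$ from \cite[Theorem 5.1]{SV04}, so that the projection of $M(f)$ onto $\integers^2$ is surjective; (ii) identifying explicit generators $e_1 = (0,0,\alpha), e_2 = (1,0,\beta), e_3 = (0,1,\gamma)$ of a putative proper subgroup and eliminating the $\check\kappa$-coordinates to obtain a cohomological relation $|\kappa| - \tilde r - \beta\check\kappa_1 - \gamma\check\kappa_2 = m\alpha + \tilde h - \tilde h\circ\cF_0$; and (iii) crucially, composing this with the involution $\cI(q,v) = (q,-v)$, which anticommutes with the flow, to cancel the $\beta\check\kappa_1 + \gamma\check\kappa_2$ term and reduce to the already-known contradiction from \cite[Lemma A.3]{DN16}. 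You propose instead to rebuild the MLLT via a twisted transfer operator $\cL_\xi$ on an anisotropic Banach space with Nagaev--Guivarc'h perturbation and Dolgopyat estimates in $\xi_3$. This is a legitimate strategy in principle, but it is not the one used in the source; the reference \cite{DN16} (and hence the present paper) deliberately avoids anisotropic-space machinery precisely because the Dolgopyat-type estimates you would need for the non-compact dual variable $\xi_3$ of the piecewise-smooth roof $|\kappa|$ are delicate for dispersing billiards. Your own closing paragraph concedes this is the main obstacle and does not resolve it, so as written this branch of the argument is incomplete.

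\textbf{The gap that matters.} You relegate the aperiodicity question to a single clause (``a standard aperiodicity condition that holds on finite-horizon Lorentz tables by a periodic-orbit argument''). But this is precisely the nontrivial step that the paper's proof is dedicated to, and it is not quite what you describe. What one must rule out is that $|\kappa|$ is a coboundary \emph{modulo an arithmetic progression} after subtracting an $\reals$-linear combination of the $\check\kappa$-coordinates; eliminating the $\check\kappa$-terms so as to land in the form treated by \cite[Lemma A.3]{DN16} uses the time-reversal symmetry of the billiard flow in an essential way. A generic ``periodic-orbit argument'' does not produce that cancellation; without the involution step you are left with an extra $\beta\check\kappa_1 + \gamma\check\kappa_2$ term that you have no means to control. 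If you want to run your spectral route, you still need a version of this minimality argument to show that the only unimodular eigenvalue of $\cL_\xi$ near $\xi = 0$ occurs at $\xi = 0$, and the symmetrization trick via $\cI$ is the concrete device that makes it work here.

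The parts of your outline that do match the paper's structure: the interpretation of $\tilde\Sigma$ as the covariance of the joint per-step observable, the role of the suspension (Ambrose--Kakutani) identity in converting the collision-measure density to the $\mu_0(A)\bar\kappa$ prefactor, and the $n^{-3/2}$ scaling from one discrete plus one continuous extra dimension are all correctly identified.
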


A variant of Theorem \ref{thm:MLLT} was proved in \cite[Lemma 2.8]{DN16}. Specifically,
\cite[Lemma 2.8]{DN16} covers the case when $\check \cZ$ is replaced by $\check \cX$ 
and
$A = \Omega_0$ in the definition of $\vartheta_n$ (we included
the more general case of $A$ to accommodate for 
(DLE) as in Theorem \ref{Cor1}).
Since the proof directly applies here as well 
(except for one minor adjustment), we only discuss this minor adjustment and don't repeat
the entire proof.

\begin{proof}
First, we need some definitions. For a bounded H\"older function $f: \cM_0 \to \reals^d$,
we define $S(f)$ as the smallest closed additive subgroup of $\reals^d$ that supports the
values of $f-r$ for some $r \in \reals^d$. 
Let us write $f \sim g$ if $f$ and $g$ are cohomologous. That is,
$f(x) = g(x) + h(x) - h(\cF_0 (x))$ for a measurable $h$ and for all $x \in \cM_0$. We say that 
$f$ is minimal if $M(f) = S(f)$, where
$$
M(f) = \cap_{g \sim f} S(g).
$$
The only minor adjustment that is needed in the proof of \cite[Lemma 2.8]{DN16} is 
that we need to show that 
$$
f := ( \check \kappa, |\kappa| - \bar \kappa): \cM_0 \to \reals^3
$$ 
is minimal. That is, $M(f) = \integers^2 \times \reals$. 
(Heuristically, there is a clear obstruction to the MLLT in its present form 
if $M(f)$ is a proper subgroup of
$\integers^2 \times \reals$. It turns out that, similarly to the case of IID random variables,
this is the only possible obstruction.)
This generalizes 
\cite[Lemma A.3]{DN16}, which shows that 
$$
\tilde f := (\check \kappa_1, |\kappa| - \bar \kappa)
$$
is minimal, that is 
\begin{equation}
\label{eq:tf}
M(\tilde f) = \integers \times \reals.
\end{equation}
To establish the minimality of $f$, it is enough to prove the following. If $M(f)$ is a proper 
subgroup of $\integers^2 \times \reals$, then there are real numbers 
$\alpha,r$ and two measurable functions $h: \cM_0 \to \reals$, $g: \cM_0 \to \integers$
so that
\begin{equation}
\label{eq:A3}
|\kappa (q,v)| = h(q,v) - h(\cF_0(q,v)) + r + \alpha g(q,v).
\end{equation}
Indeed, a contradiction follows from \eqref{eq:A3} as in \cite{DN16}.
To prove \eqref{eq:A3}, 
we first recall that by \cite[Theorem 5.1]{SV04}, $\check \kappa$ is minimal. 
Thus the projection of $M(f)$ to the first two coordinates needs to be $\integers^2$. 
In particular, there exist $e_1 = (0,0,\alpha)^T$, $e_2 = (1,0,\beta)^T$ $e_3 = 
(0,1,\gamma)^T$ in $M(f)$. If $M(f)$ is a proper subgroup of $\integers^2 \times \reals$, 
then there exists a minimal $\alpha >0$ with the property that $e_1 \in M(f)$. Now we claim 
that $e_1,e_2,e_3$ generate $M(f)$. Indeed, by the choice of $\alpha$,
$e_1$ generates $M(f) \cap \{(0,0,z), z \in \reals \}$ and so $e_1, e_2, e_3$ generate
$$M(f) \cap \{ (x,y,z): (x,y) \in \{(1,0), (0,1) \} \}.$$ 
Since the projection of $M(f)$ to the first two coordinates in $\integers^2$, the claim follows.

Thus there are constants $r_1,r_2,r_3$ so that for every $(q,v) \in \cM_0$ there are integers
$m,n,k$ (depending on $(q,v)$) so that
\begin{align}
&(\check \kappa_1(q,v), \check \kappa_2(q,v), |\kappa (q,v)|)^T - 
(r_1, r_2, r_3)^T \nonumber \\
= &m e_1 + n e_2 + k e_3 + (h_1,h_2,h_3)^T(q,v) - 
(h_1,h_2,h_3)^T(\cF_0(q,v)) \label{eq:vec}
\end{align}
From the first coordinate of \eqref{eq:vec} we have
$$
n = \check \kappa_1 - r_1 - h_1 + h_1 \circ \cF_0
$$
and likewise from the second coordinate we we have
$$
k = \check \kappa_2 - r_2 - h_2 + h_2 \circ \cF_0
$$
Substituting these to the third coordinate of the equation \eqref{eq:vec}, we find
\begin{equation}
\label{eq:|kappa|}
|\kappa(q,v)| - \tilde r - \beta \check \kappa_1(q,v) - \gamma \check \kappa_2(q,v) = 
m \alpha 
+\tilde h(q,v)
- \tilde h ( \cF_0(q,v)),
\end{equation}
where $\tilde r = r_3 - r_1 \beta - r_2 \gamma$ and $\tilde h = h_3 - \beta h_1 - \gamma h_2$.
Fix now $(q,v)$ and write $\cF_0(q,v) = (q_1,v_1)$. Note that by reverting the free flight,
we have $\cF_0(q_1, - v_1) = (q, -v)$. Applying \eqref{eq:|kappa|} to $(q_1, -v_1)$, we obtain
\begin{equation}
\label{eq:|kappa|'}
|\kappa(q,v)| - \tilde r + \beta \check \kappa_1(q,v) + \gamma \check \kappa_2(q,v) = 
m' \alpha 
+\tilde h(q_1,-v_1)
- \tilde h ( \cF_0(q_1,-v_1)).
\end{equation}
Finally, adding \eqref{eq:|kappa|} to \eqref{eq:|kappa|'}, we obtain \eqref{eq:A3} with
$r = \tilde r$, $h(q,v) = \frac{1}{2} [\tilde h (q,v) + \tilde h (q_1 - v_1)]$ and
$g(q,v) = \frac{m + m'}{2}$. This completes the proof of \eqref{eq:A3}.
\end{proof}


\section{Proof of Theorem \ref{thm2}}
\label{sec6}

\subsection{Change of coordinates}
\label{sec:change}

Since $\cL$ is rational, we have 
$\mathfrak M := \Sigma^{1/2} \fl^{(K_1)} \in \integers^2$ and 
$\mathfrak N := \Sigma^{1/2} \fl^{(K_2)} \in \integers^2$. 
Furthermore,
$\mathfrak M$ and $\mathfrak N$ are primitive lattice vectors (i.e.
their coordinates are coprime due to the definition of $(K_1), (K_2)$).
Now we introduce an enlarged fundamental domain for the Lorentz gas. Let $Z'$
be the subset of $\integers^2$ containing the origin and those points of
$\integers^2$ that are in the interior of the parallelogram with vertices $0, \fM,
\fN,\fN+\fM$. Let 
$T' = \cup_{z \in Z'} [z-1/2, z+1/2]^2  / \sim$, where
$P \sim Q$ if $P-Q$ is in the lattice generated by $\fM, \fN$. That is, $T'$ is a
union of unit squares and $\sim$ is a pairing of all parallel sides on the boundary
of $T'$. In particular, $T'$ is a flat torus.
Now we put $\mathcal D_0' = T' \setminus \cup_{z \in Z'} 
\cup_{i=1}^{\mathfrak k} (B_i + z)$. See Figure \ref{fig:T'} for the special case
$\fM = (1,3)$ and $\fN = (2,1)$. $T'$ is the polygon with bold boundary
(modulo the identification). 

\begin{figure}
\begin{center}
\begin{tikzpicture}

\foreach \i in {0,...,3} 
\foreach \j in {0,...,4} {

            \draw[dotted] (\i-1/2,\j-1/2) -- (\i-1/2,\j+1/2) -- (\i+1/2,\j+1/2) -- (\i+1/2,\j-1/2) --(\i-1/2,\j-1/2);
\node at (\i, \j) [circle,fill,inner sep=0.5pt]{};

}

\draw[dashed] (0,0) -- (1,3) -- (3,4) -- (2,1) -- (0,0);

\draw[thick] (-1/2,-1/2) -- (1/2,-1/2) -- (1/2,1/2) -- (-1/2,1/2) -- (-1/2,-1/2);
\draw[thick] (1/2,1/2) -- (1/2,5/2) -- (3/2,5/2) -- (3/2,7/2) -- (5/2,7/2)
-- (5/2,3/2) -- (3/2,3/2) -- (3/2,1/2) -- (1/2,1/2);

\end{tikzpicture}
\caption{Enlarged fundamental domain} \label{fig:T'}
\end{center}
\end{figure}
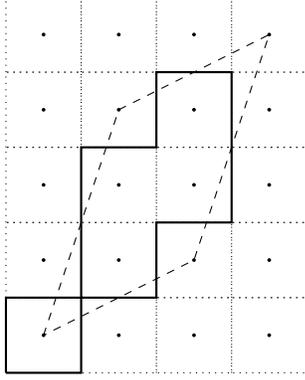

We are going to study the Sinai billiard in $\cD_0'$ and so we define
$\Phi_0'^t$, $\Omega_0'$, $\mu_0'$, $\cM_0'$, $\cF_0'$, $\nu_0'$ 
exactly as before using the larger configuration
space $\cD_0'$. 
Note that 
$\Phi_0^t$ is a factor of $\Phi_0'^t$ by the map $\iota: \Omega_0' \to \Omega_0$,
$\iota: (q,v) \mapsto (\bar q,v)$, where 
$ q \in \cD_0'$, $\bar q \in \cD_0$ and 
$\bar q = q (mod \integers^2)$. Also note that
$\Phi^t$ is an extension of both $\Phi_0^t$ and $\Phi_0'^t$. 

Given $(q,v) \in \Omega$, we write $\Pi'_{\integers^2}(q,v) = (m,n)$ if $q \in (m \fM, n \fN) + T'$
and $\Pi'_{\cD'_0}(q,v) = q_0$ if $q = q_0 + \Pi'_{\integers^2} (q,v)*(\fM, \fN)$,
where $*$ means multiplication coordinate-wise. 
Let us write
$\cZ'_t(q,v) = \Pi'_{\integers^2}(\Phi^t(q,v))$.

Note that for any $(k,l) \in \integers^2$ we can find a unique
$(k_0,l_0) \in Z'$ with $(k,l) \sim (k_0, l_0)$ and a unique $(m,n)$ so that
$(k,l) = (m \fM, n \fN) + (k_0, l_0)$. Let us write $[(k,l)] = (k_0,l_0)$
and $[[ (k,l)]] = (m,n)$. Note that 
\begin{equation}
\label{eq:largelattice}
[[\tilde{\cZ}_t(q,v)]] = \Sigma^{-1/2} (\tilde{\cZ}_t(q,v) - [\tilde{\cZ}_t(q,v)])
= {\cZ}'_t(q,v).
\end{equation}
Given $(q,v) \in \Omega$, we write $\Pi_{Z'}(q,v) = [\Pi_{\integers^2}(q,v)]$.
Let $\mathcal E_t(q,v)
= \Pi_{T'}(\Phi^t(q,v))=[\tilde{\cZ}_t(q,v)]$
($\mathcal E$ stands for extension). 
We will also write 
$[[ \ell' ]] = [[ \Pi_{\integers^2} (q,v) ]]$ 
for any $(q,v)$ in the support of $\nu_{\ell'}$ (we assume that the standard pairs are 
supported in one cell) and likewise $[[\cG']]$ for standard families.
All definitions and results in Section
\ref{sec:billb} extend to $\Phi_0'$. We will use those notations and results with 
a prime in the superscript.

\subsection{Proof of (H2)}
\label{sec:6.2}

We claim that (H2) follows from
\begin{enumerate}
\item[(H2')]

{\it
For any proper standard family $\cG'$ there is some $C_{\cG'}$ so that
for any $0 < \alpha < \beta$ and for any 
$0 < \eta, \gamma < \xi$ and for any $z' \in Z'$, if $[[\cG']] = 
(0, \lfloor \eta \sqrt T \rfloor )$, then
$$
\lim_{T \to \infty}  T^{3/2}
\nu_{\cG'} \left( 
\cZ'_T = \langle (\alpha, \gamma) \sqrt T \rangle, \mathcal E_T = z',
\min \{
 \tau_0^{\cY'},
\tau_{\xi \sqrt T }^{\cY'}, 
\tau_0^{\cX'},
\tau_{\beta \sqrt T }^{\cX'} \} > T
 \right)
$$
$$
= C_{\cG'} \psi(\alpha, \beta) \phi(\eta, \gamma, \xi)
$$
Furthermore, for any $\eps >0$, the convergence is uniform 
for $\eps < \alpha < \alpha + \eps < \beta < 1/\eps$,
$\eps < \eta < \eta + \eps < \xi < 1/\eps$,
$\eps < \gamma < \gamma + \eps < \xi$.
}
\end{enumerate}

To prove the claim, first recall that by \eqref{eq:largelattice},
$\cZ_t = \cZ_t' + \Sigma^{-1/2} \cE_t$.
To compare the initial conditions in (H2) and (H2'), 
note that given any standard family $\cG$ on $\cM_0$, there are exactly
$\bZ  := |Z'|$ corresponding standard families $\cG'_1,...,\cG'_{\bZ}$ on $\cM_0'$
that project to 
$\cG$ along $\iota$. Indeed, for any point $(q,v) \in \Omega_0$,
$\iota^{-1}((q,v)) = \{ (q + z',v), z' \in Z'\}$. Recall that the free flight is bounded by
$1$ and so the initial condition in (H2), i.e. $\cZ_0 = \fl$ and $\prob$
being induced by a standard family $\cG$, corresponds to an initial condition given
by $\cG'_{z'}$ for some $z' = 1,..., \bZ$ in (H2'). Indeed, the type of $\fl$ uniquely
defines $z'$. 
Thus $\cG$ and the type of $\fl$ in (H2')
is replaced by $\cG'$ in (H2). Since $\cE_t$ is bounded, the claim follows.

Note that for a given standard family $\cG$ and two lift ups 
$\cG'_{z'_1}$, $\cG'_{z'_2}$, $z'_1 \neq z'_2 \in Z'$,
the constants $C_{\cG'_{z'_1}}$, $C_{\cG'_{z'_2}}$ can be different. 
As we will see later, 
\begin{equation}
\label{eq:defCl}
C_{\cG'_{z'}} = \lim_{T \to \infty} \nu_{\cG'_{z'}}(\tau_0^{\cX'} > T)/ \sqrt T.
\end{equation}
Thus e.g. in Figure \ref{fig:T'}, 
$C_{\cG'_{(1,1)}} \geq C_{\cG'_{(1,2)}}$ for all standard families $\cG$. This inequality
is strict in case of some standard pairs. 
To prove this, note that in case of Figure \ref{fig:T'},
$\tau_0^{\cX'}(q,v) >T$ is equivalent to
$(\tilde{\cZ}_t)_2 \leq 3 (\tilde{\cZ}_t)_1$ for all $t \leq T$. Now observe that
$\tau_0^{\cX'}(q_0+(1,2),v) >T$ implies $
\tau_0^{\cX'}(q_0+(1,1),v) >T$, but the converse implication does not hold.

We will prove (H2'). The proof is build upon the results of \cite{DSV08,DN16}. 
In particular, 
\cite[Proposition 3.8]{DN16} gives that under the assumptions of (H2'),
\begin{equation}
\label{Prop3.8}
\lim_{T \to \infty}  T
\nu_{\cG'} \left( 
\cX'_T = \lfloor \alpha \sqrt T \rfloor, \mathcal E_T = z',
\min \{
\tau_0^{\cX'},
\tau_{\beta \sqrt T }^{\cX'} \} > T
 \right)
= C_{\cG'} \psi(\alpha, \beta) 
\end{equation}
with $C_{\cG'}$ defined by \eqref{eq:defCl}. 
Furthermore, \cite[Proposition 3.9]{DN16} gives that
under the assumptions of (H2'),
\begin{equation}
\label{Prop3.9}
\lim_{T \to \infty}  \sqrt T
\nu_{\cG'} \left( 
\cY'_T = \lfloor \gamma \sqrt T \rfloor, \mathcal E_T = z',
\min \{
\tau_0^{\cY'},
\tau_{\xi \sqrt T }^{\cY'} \} > T
 \right)
=  \phi(\eta, \gamma, \xi) 
\end{equation}
We interpret \eqref{Prop3.8} as the one dimensional version of (H2').
If the events on the left hand sides of \eqref{Prop3.8} and \eqref{Prop3.9}
were independent, then (H2') would follow immediately. By the invariance principle, 
$\cX_T'$ and $\cY_T'$ are asymptotically independent 
(since by the change of coordinates, the covariance
matrix is identity) but this yet is not enough to 
conclude (H2') as the events considered here have small probabilities. Thus we cannot
derive (H2') directly from \eqref{Prop3.8} and \eqref{Prop3.9}; we instead have to revisit their 
proofs. Since we only need to make minor changes to their proofs, we  
give details only at places where changes are needed and otherwise refer to 
\cite{DN16} (and sometimes give a sketch).


First we need some lemmas. Recall the notations introduced for the billiard ball
map in Section \ref{sec:billb}.
To simplify some notations a little, we will write
$$
\tau^{|\check \cX'|}_a =  \min \{ 
\tau^{\check \cX'}_a, \tau^{\check \cX'}_{-a} \}.
$$
and likewise for $\check \cX'$ replaced by $\check \cY'$.

\begin{lemma}
\label{lemma:highrec}
There are constant $C_3, C_4$ depending only on $\cD$ so that for every standard pair $\ell'$ with $[[ \ell' ]] = (0,0)$, for every $m > C_3 \log \mbox{length} (\ell)$ and for every $L$,
\begin{equation}
\label{eq:lem6.2}
\nu_{\ell'} \left(  \tau_{Lm}^{|\check{\cY'}|} <
 \tau_{m}^{|\check{\cX'}|}
\right) < 0.51^L + \frac{C_4 L}{m^{500}}.
\end{equation}
\end{lemma}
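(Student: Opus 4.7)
I would prove the lemma by a telescoping argument over $L$ epochs, using the stopping times $\sigma_i = \tau^{|\check\cY'|}_{im}$ for $i = 0, 1, \dots, L$ and the nested events $\cE_i = \{\sigma_i < \tau^{|\check\cX'|}_m\}$, which satisfy $\cE_L \subset \cE_{L-1} \subset \dots \subset \cE_1$. My aim is to establish the one-step recursion
$$
\nu_{\ell'}(\cE_i) \;\le\; 0.51 \cdot \nu_{\ell'}(\cE_{i-1}) + \frac{C_4'}{m^{500}}, \qquad i \ge 1,
$$
which, combined with $\nu_{\ell'}(\cE_0)=1$ and iteration, gives $\nu_{\ell'}(\cE_L) \le 0.51^L + C_4 L/m^{500}$ as the geometric sum of error terms is controlled by $C_4'/(1-0.51) \cdot L \cdot m^{-500}$.

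For the recursion I would condition on $\sigma_{i-1}$. For each deterministic value $n$ of $\sigma_{i-1}$, the Markov decomposition \eqref{eq:Markov} writes $\cF_0'^n(\ell')$ as a convex combination of standard pairs $\ell_{n,j}$; intersecting with $\cE_{i-1} \cap \{\sigma_{i-1}=n\}$ and summing over $n$ produces a (sub-probability) standard family $\{\ell^*_k, w^*_k\}$ with $\sum_k w^*_k = \nu_{\ell'}(\cE_{i-1})$, supported on states at time $\sigma_{i-1}$ satisfying $|\check\cX'|<m$ and $|\check\cY'|=(i-1)m+O(1)$. Thus
$$
\nu_{\ell'}(\cE_i) \;=\; \sum_k w^*_k \cdot \nu_{\ell^*_k}(\cE^\sharp),
$$
where $\cE^\sharp$ is the event that, starting from the current position, $|\check\cY'|$ reaches $im$ before $|\check\cX'|$ reaches $m$.

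I would bound $\nu_{\ell^*_k}(\cE^\sharp)$ in two cases. If $\mbox{length}(\ell^*_k) \ge e^{-\sqrt m}$, the invariance principle of Section \ref{sec:billb} applies on the natural time scale $m^2$. Since the change of coordinates in Section \ref{sec:change} makes the limiting process two \emph{independent} standard Brownian motions $X,Y$, the problem reduces to an elementary estimate: for such independent BMs started at $(u,y_0)$ with $|u|<1$ and $y_0$ close to one of $\pm 1$ (after rescaling by $m$),
$\Prob(\tau^{|Y|}_1 < \tau^{|X|}_1) \le \tfrac12$. This follows by a coupling/symmetry argument — the hitting time of $Y$ from $y_0$ to the nearer endpoint is in law equal to the hitting time of $1$ by an independent BM from $0$, which is stochastically at least $\min(\tau^X_1,\tau^X_{-1})$ — and the opposite endpoint contributes a super-exponentially small term in $L$. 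With the invariance-principle error, this gives $\nu_{\ell^*_k}(\cE^\sharp) \le 0.51$ for all $m$ large. If $\mbox{length}(\ell^*_k) < e^{-\sqrt m}$, I would bound $\nu_{\ell^*_k}(\cE^\sharp)\le 1$ trivially; by Lemma \ref{lem:gl2} (iterated from a burn-in time $\varkappa \log \mbox{length}(\ell') \le \varkappa C_3 m \ll m^2 \le \sigma_{i-1}$, which is exactly where the hypothesis $m > C_3\log\mbox{length}(\ell')$ is used), the total unconditional weight of such short pairs is at most $(c_2+1)e^{-\sqrt m} \ll m^{-500}$.

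The main technical obstacle I anticipate is the coherent interaction of the Markov decomposition at a \emph{stopping} time with the $Z$-function / Growth control: one must decompose at each level set $\{\sigma_{i-1}=n\}$ and then reassemble, and must work with the \emph{unconditional} weight of short pairs (the conditional weight could blow up if $\nu_{\ell'}(\cE_{i-1})$ is small), which is why the $m^{-500}$ error appears as an additive, not multiplicative, correction in the recursion. The two-dimensional Brownian estimate itself, together with its uniformity in the starting point of each long standard pair, is elementary; obtaining $0.51$ instead of the sharp $e^{-\pi/2}$-type rate leaves comfortable slack for the invariance-principle errors.
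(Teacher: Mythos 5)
Your proposal is correct and follows essentially the same route as the paper: a one-step recursion via Markov decomposition at the level-crossing stopping times, with a dichotomy between short standard pairs (discarded via the Growth lemma at an additive $O(m^{-500})$ cost per step) and long ones (controlled by the invariance principle, using that the change of coordinates in Section~\ref{sec:change} makes the limiting coordinate processes independent Brownian motions). Two points of care in the writeup. First, the Brownian estimate as you state it is off: if $y_0$ is close to $\pm 1$ then $\tau^{|Y|}_1$ is nearly zero and $\Prob(\tau^{|Y|}_1<\tau^{|X|}_1)$ is close to $1$, not $\le 1/2$. After recentering the previously reached level at the origin, so that the rescaled $Y$ starts at $O(1/m)$ and the event that $|\check{\cY'}|$ reaches $im$ forces $Y$ to exit $(-1,1)$, the stochastic-dominance comparison you sketch does give $\le 1/2+o(1)$ (the paper instead fixes a slack $\eta$ with $|(W_0)_2|<\eta$ and bounds the exit-through-North-or-South probability by $0.505$). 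Second, since the first-passage time $\sigma_{i-1}$ has infinite mean, summing the per-time-step Growth-lemma error over the level sets $\{\sigma_{i-1}=n\}$ requires first capping $\sigma_{i-1}$ by $m^3$ (as the paper does via \cite[Lemma 5.1]{DN16}) and bounding the tail $\{\sigma_{i-1}>m^3\}$ separately; without this cap the unconditional weight of short pairs you invoke is not automatically $O(m^{-500})$.
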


\begin{proof}
Let us fix a positive constant $\eta$ so that the probability that a 
standard planar Brownian motion
$W_t$ leaves the the box $[-1,1]^2$ through the North or South side (and not through
the East or West side) is at most $0.505$ whenever the $y$-coordinate of
$W_0$, denoted by $ (W_0)_2$, satisfies $|(W_0)_2| < \eta$. 
We are going
to use the invariance principle and the above estimate inductively $L$ times to derive the lemma. Each time the
North or South side is reached, we apply a Markov decomposition and discard
too short curves (hence the second term on the right hand side of \eqref{eq:lem6.2}).
Key to this argument is the fact that the limiting Brownian motion has a diagonal
covariance matrix, which is guaranteed by the change of coordinates from Section
\ref{sec:change}. Now we give the details of the proof.

Choosing $C_3$ large and using Lemma \ref{lem:gl2}, 
we can guarantee that the standard family
$\cG := \cF_0^{\eta m} (\ell')$
has a bounded $Z$ function (e.g. $Z_{\cG} < 2 c_2$, where $c_2$ is defined in Lemma
 \ref{lem:gl2}. Such standard families are sometimes called proper). Recall that we assumed
that the free flight is bounded by $1$. Thus for any standard pair $\ell'' = (W'', \rho'')$ in
$\cG$, $\| [[ \ell'']] \| \leq \eta m$. If $\mbox{length} (\ell'') < m^{-500}$, then we
estimate $\nu_{\ell''}(\cC) \leq 1$, where 
$\cC = \{ \tau_{Lm}^{|\check{\cY'}|} <
 \tau_{m}^{|\check{\cX'}|} \}$. By the Growth lemma, the measure carried by such standard pairs in $\cG$
is bounded by $C_4 m^{-500}$. Let us now assume that
$\mbox{length} (\ell'') > m^{-500}$. Then by the choice of $\eta$ and by 
invariance principle (assuming as
we can that $m$ is large enough), 
$$
\nu_{\ell''} (\tau_{m}^{|\check{\cY'}|} <
 \tau_{m}^{|\check{\cX'}|} ) \leq 0.51.
$$
Now let $\ell''' = (W''', \rho''')$ be a standard pair in the standard family 
$\cG_1 := \cF_0^{\tau_{m}^{|\check{\cY'}|}}(\ell'')$. 

Note that there
exists a constant $T_{\ell'''}$ so that 
for any $x\in W''$ with $\cF_0^{\tau_{m}^{|\check{\cY'}|}} \in W'''$,
$\tau_{m}^{|\check{\cY'}|} = T_{\ell'''}$. Indeed, this follows from the definition of 
homogeneous unstable curves. Now we distinguish two cases. Let us
say that $\ell'''$ is of type 1 if
$T_{\ell'''} > m$ or
$\mbox{length}(\ell''') < m^{-750}$. For type 1 standard pairs 
$\ell'''$, we use
the trivial bound $\nu_{\ell'''}(\cC) \leq 1$. By \cite[Lemma 5.1]{DN16},
the measure carried by standard pairs $\ell'''$ with $T_{\ell'''} >m^3$ 
is bounded by $C m^{-999}$. Thus by the growth lemma, the 
measure carried by standard pairs $\ell'''$ with $T_{\ell'''} \leq m^3$
and $\mbox{length}(\ell''') < m^{-750}$
is bounded by $C m ^{-747}$. Thus the total contribution of type 1
standard pairs is bounded by $C_4 m^{-500}$. Let us say that 
$\ell'''$ is of type 2 if not of type 1. By the invariance principle and 
by the definition of $\eta$, for every type 2 standard pair $\ell'''$, we have
$$
\nu_{\ell'''} (\tau_{2m}^{|\check{\cY'}|} <
 \tau_{m}^{|\check{\cX'}|} ) \leq 0.51.
$$
Thus we have derived
$$
\nu_{\ell'} (\tau_{2m}^{|\check{\cY'}|} <
 \tau_{m}^{|\check{\cX'}|} ) \leq 0.51^2 + \frac{2 C_4}{m^{500}}.
$$
Following the above procedure inductively, we obtain the lemma.

\end{proof}

\begin{lemma}
\label{lem:11.1}
For every $\delta >0$ and
for every $\xi >0$ there exists $M_0$ and $\brL$ so that for every standard pair $\ell'$ with $[[ \ell' ]] = (0,0)$
and $\mbox{length} (\ell') > \delta$, and for every $M>M_0$,
$$
\nu_{\ell'} \left( \tau_{\brL M}^{|\check{\cY'}|} <
\tau_{M}^{\check{\cX'}} \Big| \tau_{M}^{\check{\cX'}}  < \tau_{0}^{\check{\cX'}} 
\right) < \xi
$$
\end{lemma}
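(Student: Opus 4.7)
My plan is to bound the numerator $\cN := \nu_{\ell'}(\tau_{\brL M}^{|\check\cY'|} < \tau_M^{\check\cX'},\; \tau_M^{\check\cX'} < \tau_0^{\check\cX'})$ and the denominator $\cD := \nu_{\ell'}(\tau_M^{\check\cX'} < \tau_0^{\check\cX'})$ of the conditional probability separately, with targets $\cD \geq c_\delta/M$ and $\cN \leq C(\brL)/M$ where $C(\brL) \to 0$ as $\brL \to \infty$. Dividing then yields the lemma uniformly for $M > M_0$, upon choosing $\brL$ sufficiently large.

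For the denominator, the plan is to invoke \eqref{Prop3.8} twice. Taking $\beta \to \infty$ (using $\psi(\alpha,\beta) \to \alpha e^{-\alpha^2/2}$) and integrating over $\alpha$ yields the one-sided survival asymptotic $\nu_{\ell'}(\tau_0^{\check\cX'} > N) \sim c_{\ell'}/\sqrt{N}$, where $c_{\ell'}$ admits a positive lower bound depending only on $\delta$ via the uniform-in-$\ell'$ convergence for standard pairs with $\mbox{length}(\ell') > \delta$. Then with $\beta > 0$ small but fixed and $N = M^2/\beta^2$, integrating \eqref{Prop3.8} over $\alpha \in (0,\beta)$ shows that the probability of $\check\cX'$ staying in $(0,M)$ throughout $[0,N]$ is asymptotically $c_{\ell'}\beta \cdot \mathbb{P}(\mathfrak{M}(1) < \beta)/M$, which is negligible compared to $c_{\ell'}\beta/M$ as $\beta \to 0$. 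Subtracting the two estimates yields $\cD \geq c_\delta/M$ for $M$ large.

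For the numerator, observe that on $\{\tau_M^{\check\cX'} < \tau_0^{\check\cX'}\}$ the trajectory $\check\cX'$ stays in $[0,M]$ until time $\tau_M^{\check\cX'}$, so $\tau_M^{\check\cX'} = \tau_M^{|\check\cX'|}$. Setting $\sigma := \tau_{\brL M}^{|\check\cY'|}$ and $A := \{\sigma < \tau_M^{|\check\cX'|} \wedge \tau_0^{\check\cX'}\}$, I would apply the Markov decomposition \eqref{eq:Markov} at time $\sigma$, together with the gambler's-ruin bound $\nu_{\ell_{\sigma,i}}(\tau_M^{\check\cX'} < \tau_0^{\check\cX'}) \leq C\,\check\cX'_{\sigma,i}/M$ (a consequence of \eqref{Prop3.8} applied to each post-$\sigma$ standard pair starting from $\check\cX'_{\sigma,i} \in (0,M)$), to obtain
\[
\cN \leq \frac{C}{M}\, E_{\nu_{\ell'}}\!\bigl[\check\cX'_\sigma\, \mathbf{1}_A\bigr].
\]

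The main obstacle is bounding this expectation by $C/\brL$: the naive estimate via $\check\cX'_\sigma \leq M$ together with Lemma~\ref{lemma:highrec} only delivers $M \cdot 0.51^{\brL}$, which is not uniform in $M$. What is actually needed is the joint bound $\nu_{\ell'}(A) \leq C/(\brL M)$, which I plan to obtain by conditioning on the value of $\sigma$: on $A$, the Gaussian first-passage estimate for $\check\cY'$ forces $\sigma$ to be typically of order $\brL^2 M^2$, and this in turn forces $\check\cX'$ to survive above $0$ for a correspondingly long time, contributing the factor $1/\sqrt{\sigma} \sim 1/(\brL M)$ via the one-sided survival asymptotic established in the denominator step. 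Making this joint estimate rigorous in the billiard setting---where $\check\cX'$ and $\check\cY'$ are not independent---is the most delicate part, and the plan is to combine Lemma~\ref{lemma:highrec} with \eqref{Prop3.8} via the Growth Lemma and, if necessary, the mixing local limit theorem (Theorem~\ref{thm:MLLT}) for joint control of $(\check\cX',\check\cY',\sigma)$.
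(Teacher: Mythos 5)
Your reduction to bounding the numerator $\cN$ and denominator $\cD$ separately is a legitimate alternative framing, and the denominator bound $\cD \geq c_\delta/M$ is fine (it is essentially \eqref{eq:brc}/\eqref{eq:thm8}, or what you extract from \eqref{Prop3.8}). However, the numerator step has a genuine gap, and you flag it yourself. You correctly observe that the naive bound $E[\check\cX'_\sigma \mathbf 1_A] \leq M \cdot 0.51^{\brL}$ from Lemma~\ref{lemma:highrec} fails to be uniform in $M$, and that what is needed is something like $\nu_{\ell'}(A) \lesssim 1/(\brL M)$. But the argument you propose for this---``condition on $\sigma$, use the Gaussian first-passage to force $\sigma \sim \brL^2 M^2$, then use the one-sided survival asymptotic $1/\sqrt{\sigma}$''---requires a joint estimate on (the first-passage time $\tau_{\brL M}^{|\check\cY'|}$, the survival of $\check\cX'$), and this is precisely the quantity that none of the cited tools delivers. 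The mixing LLT (Theorem~\ref{thm:MLLT}) gives pointwise estimates for $(\check\cZ'_n, F_n)$ at a fixed $n$, not joint control of a random first-passage time and a one-sided barrier; \eqref{Prop3.8} gives the survival asymptotic on its own, but you cannot condition on $\sigma$ and then apply it, because $\sigma$ is $\check\cY'$-measurable and $\check\cX'$, $\check\cY'$ are only asymptotically (not exactly) independent in the billiard. The phrase ``if necessary, the mixing local limit theorem'' signals that this step is a plan, not a proof.

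The paper's proof gets around exactly this obstruction by avoiding a direct upper bound on $\nu_{\ell'}(A)$. Instead it revisits the dyadic induction from \cite[Lemma 11.1]{DSV08}: it proves a \emph{lower} bound on the probability $\cP$ of the good event (``$\check\cX'$ doubles from $2^k$ to $2^{k+1}$ at every scale $k = k_0, \dots, \log_2 M - 1$ without $\check\cY'$ ever leaving the budgeted window''), using \cite[Lemma 11.2]{DSV08} together with Lemma~\ref{lemma:highrec} at each scale. The crucial technical device, which your proposal does not anticipate, is the choice of per-scale $\check\cY'$-deviation budgets $L_k$: geometrically growing ($L_k = 2^k$) for $k \leq \tfrac12\log_2 M$ and then decaying as $K\cdot 1.5^{\log_2 M - k}$, chosen precisely so that the total budget $\tL M = (1 + M^{-1}\sum_k L_k 2^k)M$ stays bounded by a constant $\brL(\xi) M$ while the multiplicative errors $\sum_k (0.51^{L_k} + C' 2^{-k\zeta} + \dots)$ remain summable and small. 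The resulting lower bound $\cP > (\brc - \xi')/M$, combined with $\cD \sim \brc/M$, gives the lemma. This multiplicative bookkeeping over dyadic scales is what replaces (and rigorizes) the joint first-passage/survival estimate that your numerator bound would require, and it is the piece that is missing from your proposal.
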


\begin{proof}
\cite[Lemma 11.1(a)]{DSV08} says that
\begin{equation}
\label{eq:brc}
\brc = \brc(\ell') = \lim_{M \to \infty} 
M \nu_{\ell'} (\tau_{M}^{\check{\cX'}} <\tau_{0}^{\check{\cX'}} )
\end{equation}
is finite. We will use the proof of that
lemma to prove our lemma. Let us recall the main steps of the proof.

Let $\bt_k = \tau_{2^k}^{\check{\cX'}}$ and 
$$
s_k = \min \{ n > \bt_k: \check{\cX'}_n < 0 \text{ or } \check{\cX'}_n  = 2^{k+1}\}.
$$
Let now $\ell''$ be a standard pair with 
\begin{equation}
\label{eq:stpaircond}
[[\ell'']]_1 = 2^k \text{ and } \mbox{length}(\ell'') > 2^{-100k}
\end{equation}
(we will consider $\ell''$ in the image of $\ell'$ under the map 
$(\mathcal F')^{\bt_k}$). The proof of \cite[Lemma 11.1(a)]{DSV08} is based on the following
identity (see \cite[Lemma 11.2]{DSV08}):
\begin{equation}
\label{eq:lemma11.2id}
\nu_{\ell''} \left(
\bt_{k+1} < \tau_{0}^{\check{\cX'}} \text{ and } 
r'_{\bt_{k+1}} \geq 2^{-100(k+1)}
\right) = \frac12 + O(2^{-k\zeta})
\end{equation}
with a universal positive constant $\zeta$.
Fixing an arbitrary $\eps >0$, one can choose $k_0$ large enough
so that 
an induction on $k=k_0,...,\log_2 M$ 
using \eqref{eq:lemma11.2id}
gives that 
\begin{equation}
\label{eq:approx}
|M \nu_{\ell'} (s_k = \bt_{k+1}, r'_{s_k} \geq 2^{-100(k+1)} \text{ for }k=k_0,...,\log_2 M) 
- \brc | < \eps,
\end{equation} 
which implies 
\eqref{eq:brc} (by the Growth lemma, the measure
of the points where $r_{s_k} < 2^{-100(k+1)}$ for some $k < \log_2 M$ can be neglected). We refer the reader to \cite{DSV08} for more details.

Now we turn to the proof of our lemma. Let us put 
$m_k = 2^k$, $\tilde k = (\log_2 M ) - k$ and
$$
L_k = 
\begin{cases}
2^k &\text{ if } k_0 \leq k < \frac12 \log_2 M \\
K 1.5^{\tilde k} &\text{ if } \frac12 \log_2 M \leq k < \log_2 M 
\end{cases}
$$
with some $K = K(\xi)$ to be specified later. Assuming that $k_0$ is
bigger than a universal constant (as we can), we have 
$m_k > 100 C_1 \log (1/m_k)$. Thus Lemma 
\ref{lemma:highrec} imply that for all standard pairs satisfying 
\eqref{eq:stpaircond}:
$$
\nu_{\ell''} \left( \min \{ \tau_{[[\ell'']]_2-L_km_k}^{\check{\cY'}},  \tau_{[[\ell'']]_2 + L_km_k}^{\check{\cY'}}\} <
\min \{ \tau_{0}^{\check{\cX'}},  \tau_{2m_k}^{\check{\cX'}}\}
\right) < 0.51^{L_k} + \frac{C_4 L_k}{m_k^{100}},
$$
which combined with \eqref{eq:lemma11.2id} gives
\begin{equation}
\label{eq:lemma11.2ext}
\nu_{\ell''} \left(
\bt_{k+1} < \min \{ \tau_{0}^{\check{\cX'}},
\tau_{[[\ell'']]_2 -L_k 2^k}^{\check{\cY'}},  \tau_{[[\ell'']]_2 + L_k 2^k}^{\check{\cY'}}
\}
\text{ and } 
r'_{\bt_{k+1}} \geq 2^{-100(k+1)}
\right) = \frac12 
+ E_{k, \ell''},
\end{equation}
where 
\begin{equation}
\label{eq:Ek}
-C' 2^{-k \zeta}-0.51^{L_k} - \frac{C_4 L_k}{m_k^{1000}} 
< E_{k, \ell''} \leq C' 2^{-k \zeta},
\end{equation}
with a universal constant $C'$.
Now we revisit the inductive proof of \eqref{eq:approx}. 
Let us write
\begin{equation}
\label{eq:cP}
\cP = 
\nu_{\ell'} \left( s_k = \bt_{k+1}, r'_{s_k} \geq 2^{-100(k+1)},
\tau^{|\check \cY'|}_{M + \sum_{j=k_0}^k L_j 2^j} > s_k
\text{ for }k=k_0,...,\log_2 M -1\right).
\end{equation}
Using \eqref{eq:lemma11.2ext} inductively, we find
$$
\cP = \nu_{\ell'} (\tau^{\check \cX'}_{2^{k_0}} < 
\min \{ \tau^{\check \cX'}_0,
\tau^{|\check{\cY'}|}_M \}) 
\prod_{k=k_0}^{\log_2 M -1}  \frac12 (1 + E_k),
$$
where $E_k$ satisfies the same inequalities \eqref{eq:Ek} as $E_{k, \ell''}$. As
before, choosing $k_0$ and $M$ large, we can guarantee
\begin{equation}
\label{eq:1}
 \cP > \frac{\brc - \xi'/10}{M}\prod_{k=k_0}^{\log_2 M -1}  (1 + E_k),
\end{equation}
where $\xi' = \xi \brc /2$.
Let us write
\begin{equation}
\label{eq:2}
 \prod_{k=k_0}^{\log_2 M -1}  (1 + |E_k|)
= \exp
\left(
\sum_{k=k_0}^{\log_2 M -1} \log (1 + |E_k|)
\right) \leq 
\exp
\left(
\sum_{k=k_0}^{\log_2 M -1}  |E_k|
\right) .
\end{equation}
Later we will show that 
\begin{equation}
\label{eq:xiin}
\sum_{k=k_0}^{\log_2 M} \left( 
C' 2^{- k \zeta} +
0.51^{L_k} + 
\frac{C_4 L_k}{m_k^{500}} \right) < \frac{\xi'}{10 \brc} = \frac{\xi}{20}.
\end{equation}

Before proving \eqref{eq:xiin}, let us show how it implies the lemma.
Combining \eqref{eq:1}, \eqref{eq:2} and \eqref{eq:xiin}, we find
\begin{equation}
\label{Plowerbd}
 \cP>  \frac{\brc - \xi'}{M}.
\end{equation}
Next observe that the event in \eqref{eq:cP}
implies that 
$$
\tau_{\tL M}^{|\check{\cY'}|}>
\tau_{M}^{\check{\cX'}},$$
where $\tL = 1+ \frac{1}{M} \sum_{k=k_0}^{\log_2 M} L_k 2^k$.
The next computation shows that 
$\tL$ is bounded by a constant $\brL = \brL(\xi)$ uniformly in $M$:
\begin{align*}
1 + \frac{1}{M} \sum_{k=k_0}^{\log_2 M} L_k 2^k &= 
1+ \frac{1}{M} \sum_{k=k_0}^{\frac12 \log_2 M} 4^k +
\frac{K}{M} \sum_{k=\frac12 \log_2 M}^{\log_2 M} 1.5^{\tilde k} 2^k \\
& \leq
5 + 
\frac{K}{M} \sum_{\tk =0}^{\frac12 \log_2 M} 1.5^{\tk} 2^{\log_2 M - \tilde k}
\leq 5 + K \sum_{\tk =0}^{\infty} \left( \frac34 \right)^{\tk} = 5 + 4K =:\brL.
\end{align*}
Thus we find
$$
\nu_{\ell'} \left( \tau_{\brL M}^{|\check{\cY'}|} <
\tau_{M}^{\check{\cX'}} \Big| \tau_{M}^{\check{\cX'}}  < \tau_{0}^{\check{\cX'}} 
\right) < 1 - \frac{ \cP}{\nu_{\ell'}(\tau_{M}^{\check{\cX'}}  < \tau_{0}^{\check{\cX'}} )} \leq 1 - \frac{\brc - \xi'}{\brc + \xi'} 
\leq \xi, $$
where the 
penultimate inequality uses \eqref{Plowerbd} and the last one uses the definition of $\xi'$. This proves the lemma. It remains to 
verify \eqref{eq:xiin}.

To prove \eqref{eq:xiin}, first choose $K = K(\xi)$ large, so that 
$$
\sum_{k=\frac12 \log_2 M}^{\log_2 M} 0.51^{L_k} < \sum_{\tilde k=0}^{\infty} 0.51^{K 1.5^{\tilde k}} < \frac{\xi}{100}.
$$
Then we compute
$$
\sum_{k=k_0}^{\log_2 M} C' 2^{-k \zeta} < \frac{\xi}{100},
$$
$$
\sum_{k=k_0}^{\frac12 \log_2 M} 0.51^{L_k} < \sum_{ k=k_0}^{\infty} 0.51^{2^k} < \frac{\xi}{100}
$$
and
$$
\sum_{k=k_0}^{\frac12 \log_2 M} \frac{L_k}{m_k^{500}} < 
\sum_{ k=k_0}^{\infty} 2^{-499k} < \frac{\xi}{100}.
$$
(Note that we can ensure the last inequality in all of the three displayed
formulas above by
increasing $k_0= k_0(\xi)$ if necessary.) Finally, we have
$$
\sum_{k=\frac12 \log_2 M}^{\log_2 M} \frac{L_k}{m_k^{500}} < 
\log_2 M \frac{1.5^{ \log_2 M}}{2^{250 \log_2 M}} = o(M^{-249}) < \frac{\xi}{100},
$$
which completes the proof of \eqref{eq:xiin}.
\end{proof}

\begin{lemma}
\label{lem:lastH2}
For every $\eta_1, \eta_2>0$ there exists $\eps_0$ so that for every $\eps< \eps_0$ and for every 
$\delta >0$ there is some $N_0$ so that for all $N > N_0$ and for all 
standard pair $\ell'$, with $[[ \ell' ]] = (0,0)$, $\mbox{length} (\ell) > \delta$, we have
$$
\nu_{\ell'} \left(
\tau_{\eps \sqrt N}^{\check{\cX'}} <  \min \{ \tau_{\eta_1 \sqrt N}^{\check{\cY'}} , 
\tau_{- \eta_1 \sqrt N}^{\check{\cY'}}, \eps N \}
\Big| \
 \tau_{0}^{\check{\cX'}} > N
\right) > 1 - \eta_2.
$$
\end{lemma}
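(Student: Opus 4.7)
The plan is to decompose the bad event $A^c$ as $A_1 \cup A_2$, where
\[ A_1 := \{\tau_{\eps\sqrt N}^{\check{\cX'}} \geq \eps N\}, \qquad A_2 := \{\tau_{\eta_1 \sqrt N}^{|\check{\cY'}|} \leq \tau_{\eps\sqrt N}^{\check{\cX'}} < \eps N\}, \]
and bound $\nu_{\ell'}(A_i \mid B)$ separately for $i = 1,2$, where $B = \{\tau_0^{\check{\cX'}} > N\}$. In both cases the denominator is handled uniformly: $\nu_{\ell'}(B) \geq c/\sqrt N$ for some $c=c(\ell')>0$ and all large $N$, which follows from \eqref{Prop3.8} by summing over $\alpha$ and sending $\beta \to \infty$.

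For $A_2$: on $A_2 \cap B$ the $x$-coordinate stays positive on $[0,N]$ and reaches $\eps\sqrt N$ at time strictly less than $N$, hence $\tau_{\eps\sqrt N}^{\check{\cX'}} < \tau_0^{\check{\cX'}}$ automatically. I will apply Lemma \ref{lem:11.1} with $M = \eps\sqrt N$ and a small parameter $\xi > 0$. As long as $\brL(\xi)\eps \leq \eta_1$ (which is compatible with $\xi$ small because inspection of the proof of Lemma \ref{lem:11.1} shows $\brL(\xi) = O(\log(1/\xi))$), the inclusion $\{\tau_{\eta_1\sqrt N}^{|\check{\cY'}|} < \tau_{\eps\sqrt N}^{\check{\cX'}}\} \subset \{\tau_{\brL M}^{|\check{\cY'}|} < \tau_M^{\check{\cX'}}\}$ holds. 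Combining Lemma \ref{lem:11.1} with the asymptotic \eqref{eq:brc} gives
\[ \nu_{\ell'}(A_2 \cap B) \leq \xi \cdot \nu_{\ell'}(\tau_{\eps\sqrt N}^{\check{\cX'}} < \tau_0^{\check{\cX'}}) \leq \frac{2\xi\brc}{\eps\sqrt N}, \]
so $\nu_{\ell'}(A_2 \mid B) \leq 2\xi\brc/(c\eps)$. Choosing $\xi = c\eta_2 \eps/(4\brc)$ forces this below $\eta_2/2$, and since $\brL(\xi)\eps = O(\eps \log(1/\eps)) \to 0$, all constraints are met by taking $\eps < \eps_0 = \eps_0(\eta_1, \eta_2)$.

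For $A_1$: on $A_1 \cap B$ the $x$-coordinate remains in $[0,\eps\sqrt N)$ throughout $[0,\eps N]$ while still surviving to time $N$. I will apply the Markov (standard-pair) decomposition at the discrete time $n$ corresponding to continuous time $\eps N$: writing $\cF_0^n(\ell') = \sum_i c_i \ell_i$ and using Lemma \ref{lem:gl2} to discard short curves with negligible total weight, each contributing pair $\ell_i$ must satisfy $[[\ell_i]]_1 \in [0, \eps\sqrt N]$. A ballot-type survival estimate
\[ \nu_{\ell_i}(\tau_0^{\check{\cX'}} > N - n) \leq C\,[[\ell_i]]_1/\sqrt N \leq C\eps \]
(the one-dimensional analogue of \eqref{Prop3.8} starting from a nonzero $x$-coordinate) then yields $\nu_{\ell'}(A_1 \cap B) \leq C\eps \cdot \nu_{\ell'}(\tau_0^{\check{\cX'}} > n) \leq C'\sqrt\eps/\sqrt N$, and dividing by $\nu_{\ell'}(B)\geq c/\sqrt N$ produces $\nu_{\ell'}(A_1 \mid B) = O(\sqrt\eps) < \eta_2/2$ for sufficiently small $\eps$.

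The main obstacle is the ballot-type survival estimate invoked for $A_1$: one must verify that for a standard pair $\ell_i$ with $[[\ell_i]]_1 = k \in [0,\eps\sqrt N]$, the survival probability $\nu_{\ell_i}(\tau_0^{\check{\cX'}} > N-n)$ is at most $Ck/\sqrt N$ uniformly in $\ell_i$. This is a technical but routine adaptation of the one-dimensional reflection arguments behind \cite[Proposition 3.8]{DN16}, with the usual short-curve pathologies of the standard-pair formalism handled via Lemma \ref{lem:gl2}.
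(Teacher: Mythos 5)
Your decomposition of the complementary event into $A_1 = \{\tau_{\eps\sqrt N}^{\check\cX'} \geq \eps N\}$ and $A_2 = \{\tau_{\eta_1\sqrt N}^{|\check\cY'|} \leq \tau_{\eps\sqrt N}^{\check\cX'} < \eps N\}$ matches the split used in the paper, and the $A_2$ part is essentially correct, but the two parts differ from the paper's proof in ways worth flagging.

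For $A_1$, the paper does not re-derive the bound at all: it invokes \cite[Lemma 5.2]{DN16} directly, which gives $\nu_{\ell'}\bigl(\tau_{\eps\sqrt N}^{\check\cX'} < \eps N \mid \tau_0^{\check\cX'} > N\bigr) > 1 - \eta_2/2$ as a black box. You instead attempt to rebuild it from a Markov decomposition at time $\approx \eps N$ together with a ballot-type survival estimate $\nu_{\ell_i}(\tau_0^{\check\cX'} > N-n) \leq C\,[[\ell_i]]_1/\sqrt N$ uniform over the contributing pairs $\ell_i$. That estimate is not available for free: it is essentially the content of \cite[Lemma 5.2]{DN16} (and underlies \eqref{Prop3.8} and \cite[Theorem 8]{DSV08}), and proving it uniformly over all standard pairs produced by the decomposition — short curves included — requires the same dyadic induction machinery as Lemma \ref{lem:11.1}. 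You label this "the main obstacle" without closing it, so this is a genuine gap relative to the paper, which sidesteps it entirely by citing the one-dimensional result.

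For $A_2$, your bookkeeping is correct but coarser than the paper's. You bound $\nu_{\ell'}(A_2\cap B) \leq \nu_{\ell'}(\cA\cap\cD) \leq 2\xi\brc/(\eps\sqrt N)$ and absorb the resulting $1/\eps$ by taking $\xi \sim c\eta_2\eps/\brc$, which forces you to verify $\brL(\xi) = O(\log(1/\xi))$ from inside the proof of Lemma \ref{lem:11.1} so that $\eps\,\brL(\xi) \to 0$. The paper instead writes $\nu_{\ell'}(\cA\cB\cC) \leq \nu_{\ell'}(\cA\cD)\,\nu_{\ell'}(\cC\mid\cA\cB\cD)$ and shows via a Markov decomposition at $\tau_{\eps\sqrt N}^{\check\cX'}$ and the invariance principle that the conditional factor is $\leq \hat c\,\eps$, which cancels the $1/\eps$ from $\nu_{\ell'}(\cD)$. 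This lets $\xi$ be chosen as a function of $\eta_2$ and $\ell'$ alone, with $\eps_0 < \eta_1/\brL(\xi)$ set once and for all, whereas your route only closes because $\brL$ grows slowly; the paper's argument is robust to the actual growth rate of $\brL$.
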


\begin{proof}
\cite[Lemma 5.2]{DN16} implies that 
$$
\nu_{\ell'} \left(
\tau_{\eps \sqrt N}^{\check{\cX'}} < \eps N,
\Big| 
 \tau_{0}^{\check{\cX'}} > N
\right) > 1 - \frac{\eta_2}{2}.
$$
and \cite[Theorem 8]{DSV08} implies that 
\begin{equation}
\label{eq:thm8}
 \lim_{T \to \infty} \nu_{\ell'}(\tau_0^{\check{\cX}'} > N)/ \sqrt N =:\check C_{\ell'}
\end{equation}
is finite for all standard pairs and non-zero for some.
Thus it suffices to prove
\begin{equation}
\label{eq:thm82ndcoord}
\nu_{\ell'} (\cA \cB \cC) < \frac{\eta_2 \check C_{\ell'}}{4 \sqrt N},
\end{equation}
where
$$
\cA = \{ 
\tau_{\eps \sqrt N}^{\check{\cX'}} >  \min \{ \tau_{\eta_1 \sqrt N}^{\check{\cY'}} , \tau_{- \eta_1 \sqrt N}^{\check{\cY'}} \}
\},
\quad \cB = \{ \tau_{\eps \sqrt N}^{\check{\cX'}} < \eps N \}, \quad 
\cC = \{
 \tau_{0}^{\check{\cX'}} > N \}.
$$
To prove \eqref{eq:thm82ndcoord}, let us write 
$$
\cD = \{ 
\tau_{\eps \sqrt N}^{\check{\cX'}} < \tau_{0}^{\check{\cX'}}  \}
$$
and 
$$
\nu_{\ell'} (\cA \cB \cC)
= \nu_{\ell'} (\cA \cB \cC \cD)  \leq \nu_{\ell'} (\cA \cD) \nu_{\ell'} (\cC| \cA \cB \cD) =: I*II.
$$
To estimate $II$, we use Markov decomposition at time $\tau_{\eps \sqrt N}^{\check{\cX'}}$.
By the invariance principle, $II$ is asymptotic (as $N \to \infty$) to the probability that the 
maximum of the
standard Brownian motion before time $1$ is less than $\epsilon$ which is bounded from above
by $\hat c \eps$.
 Let 
$\brc = \brc(\ell') $ as in \eqref{eq:brc}
and let $\xi = \frac{\eta_2\check C_{\ell'}}{4 \brc \hat c}$. Lemma \ref{lem:11.1} gives
$\brL = \brL(\xi)$. Then we choose $\eps_0 < \eta_1 /\brL$. Now Lemma \ref{lem:11.1} implies that
$$I = \nu_{\ell'} (\cA |\cD) \nu_{\ell'} (\cD) \leq \xi \frac{\brc}{\eps \sqrt N}$$
and so \eqref{eq:thm82ndcoord} follows.

\end{proof}

Next, we have
the following extension of \cite[Theorem 3.5]{DN16} to two dimensions.

\begin{proposition}
\label{prop:3.5}
The process $\check{\cZ}'_{tN}/(\sqrt{ \bar \kappa N})$, $0 < t < 1$ induced by the measure $\nu_{\cG'}(.| \tau_0^{\cX'} > N)$
converges weakly as $N \to \infty$ to the planar stochastic process with independent coordinates, whose
first coordinate is a Brownian meander and the second coordinate is a standard Brownian motion.
\end{proposition}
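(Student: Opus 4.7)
The plan is to upgrade the one-dimensional convergence to a Brownian meander \cite[Theorem 3.5]{DN16} to joint convergence of the pair $(\check\cX'_{tN}/\sqrt{\bar\kappa N},\,\check\cY'_{tN}/\sqrt{\bar\kappa N})$ under the measure $\nu_{\cG'}(\,\cdot\mid \tau_0^{\cX'} > N)$. The marginal limit of the first coordinate is already a Brownian meander, and that of the second coordinate (unconditionally) is a standard Brownian motion by the invariance principle of Section \ref{sec:billb}; what must be shown is (i) that conditioning on the first coordinate staying positive does not alter the distributional limit of the second coordinate, and (ii) that the two limits are independent. The crucial structural input is that Section \ref{sec:change} normalizes coordinates so that the limiting covariance $\check\Sigma = \bar\kappa I_2$ is diagonal, which is what makes independence even plausible.

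For the finite-dimensional distributions, I would split the time interval $[0,N]$ into a short escape phase $[0,\epsilon N]$ and a diffusive phase $[\epsilon N, N]$. Lemma \ref{lem:lastH2} guarantees that, conditional on $\tau_0^{\cX'} > N$ and on an event of conditional probability at least $1-\eta_2$, the particle reaches $\check\cX' = \epsilon \sqrt N$ before time $\epsilon N$ while keeping $|\check\cY'|\leq \eta_1 \sqrt N$ with $\eta_1=\eta_1(\epsilon)$ arbitrarily small. Applying a Markov decomposition at the stopping time $\tau_{\epsilon \sqrt N}^{\check\cX'}$ produces a proper standard family whose base points lie in a microscopic region near $(\epsilon \sqrt N, y)$ with $|y|\le \eta_1 \sqrt N$. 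To each standard pair in this decomposition I would apply the invariance principle for the billiard map from Section \ref{sec:billb}: since $\check\Sigma = \bar\kappa I_2$, the unconditional joint limit of $(\check\cX'-\epsilon\sqrt N,\check\cY'-y)/\sqrt{\bar\kappa N}$ on $[\epsilon,1]$ is a planar Brownian motion with independent coordinates. The conditioning $\tau_0^{\cX'} > N$ translates, in this limit, to conditioning only the first-coordinate Brownian motion (started at $\epsilon/\sqrt{\bar\kappa}$) to stay positive, hence leaves the second-coordinate Brownian motion untouched. Letting $\epsilon\to 0$ after $N\to\infty$, and using the characterization of the Brownian meander as the weak limit of a Brownian motion started at a small positive level and conditioned to stay positive (compatible with the definition of $\mathfrak{X}$ recalled in Section \ref{sec2}), identifies the joint limit as a Brownian meander in the first coordinate, independent of a standard Brownian motion in the second.

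For tightness of $\check\cY'_{tN}/\sqrt{\bar\kappa N}$ under the conditional measure, the difficulty is that the conditioning event has probability of order $1/\sqrt N$, so naive unconditional bounds are off by this factor. I would handle this by conditioning again on the good event of Lemma \ref{lem:lastH2}: starting from any standard pair in the Markov decomposition at the escape time, the conditional probability of $\tau_0^{\cX'} > N$ is bounded below by a positive constant depending only on $\epsilon$ (by the invariance principle for the first coordinate, which now starts at height $\epsilon \sqrt N$, combined with classical Brownian persistence estimates). Thus tightness of $\check\cY'/\sqrt N$ on $[\epsilon,1]$ under the conditional measure follows from its unconditional counterpart, at the cost of a bounded constant; on $[0,\epsilon]$ the total oscillation is $O(\sqrt{\epsilon})$ with high conditional probability by Lemma \ref{lem:lastH2} together with a moderate deviation bound. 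The main obstacle is precisely the independence statement: a priori the billiard geometry couples the two coordinates at every collision, and the conditioning on survival is global in time. The decoupling only emerges after passing to the scaling limit and exploiting the diagonal covariance, and it requires uniform (in $\epsilon$ and across standard pairs) quantitative versions of the invariance principle and of Lemma \ref{lem:lastH2}. Making the interchange of the limits $\epsilon\to 0$ and $N\to\infty$ rigorous, and verifying that the error from re-imposing the global survival constraint on the standard family obtained at the escape time is uniformly small, is the technical heart of the argument.
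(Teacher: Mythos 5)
Your proposal matches the paper's proof exactly: both rely on Lemma \ref{lem:lastH2} to control the escape phase up to $\tau_{\eps\sqrt N}^{\check\cX'}$, both apply a Markov decomposition at that stopping time and invoke the invariance principle on the resulting standard pairs (using that $\nu_{\ell''}(\tau_0^{\check\cX'}>N)$ is bounded below once $[[\ell'']]_1 = \eps\sqrt N$), both exploit the diagonal covariance $\check\Sigma = \bar\kappa I_2$ from Section \ref{sec:change} for independence, and both send $\eps\to 0$ after $N\to\infty$. The paper gives this as a sketch (stating that the argument is that of \cite[Theorem 3.5]{DN16} with \cite[Lemma 5.2]{DN16} replaced by Lemma \ref{lem:lastH2}), and your write-up is a more detailed version of the same route, including the tightness discussion which the paper leaves implicit.
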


The proof of Proposition \ref{prop:3.5} is the same as that of 
\cite[Theorem 3.5]{DN16} except that \cite[Lemma 5.2]{DN16}
is replaced by our Lemma \ref{lem:lastH2}.
The sketch of the proof is as follows. 
Under the assumption $\tau_{0}^{\check{\cX'}} > N$,
with high probability, we have
$\tau_{\eps \sqrt N}^{\check{\cX'}} <  \min \{ \tau_{\eta_1 \sqrt N}^{\check{\cY'}} , 
\tau_{- \eta_1 \sqrt N}^{\check{\cY'}}, \eps N \}$. Then we use the 
invariance principle starting at time $\tau_{\eps \sqrt N}^{\check{\cX'}}$.
The invariance principle is applicable since 
$\nu_{\ell''}(\tau_{0}^{\check{\cX'}} > N)$ is bounded from below for
$\ell''$ with $\ell'' > \delta_0$ and $[[\ell'']]_1 = \eps \sqrt N$ for fixed $\eps$. Thus we
obtain a planar 
Brownian motion with identity covariance matrix, whose first coordinate starts from $\eps$ and
does not reach $0$ before time $1$ and whose second coordinate 
starts from a position with absolute value less than $\eta_1$. Choosing 
$\eta_1$ small (and consequently 
$\eps$ small), the distribution of this process is close to the one described
in the lemma. 

(H2') is a local version of Proposition \ref{prop:3.5} in continuous time.
The proof of (H2') is again analogous to the one dimensional case given in 
\cite[Proposition 3.8]{DN16}. Although the
proof is quite lengthy, let us a give a short sketch.
Let $N = T/ \bar \kappa$, $N_1 = (1 -\delta_t) N$ with a small $\delta_t$
and partition the rectangle
$R_T := [0, \beta \sqrt T] \times [0, \xi \sqrt T]$ into boxes $B_k$
with side length $\delta_s \sqrt T$ with some fixed
$\delta_s$ small. 
Proposition
\ref{prop:3.5} gives the asymptotic probability 
(for $T$ large, other parameters fixed)
of arriving in a box $B_k$ after discrete time 
$N_1$. Then 
for any given box $B_k$ and any given a standard pair $\ell'$ in this box
as initial condition
(with $\mbox{length}(\ell') > \delta_0$ for some fixed $\delta_0$), 
we need to find the probability that in the remaining
continuous time before $T$ but after the first $N_1$ collisions, the 
particle arrives in 
the cell $\langle \alpha \sqrt T, \gamma \sqrt T \rangle$.
To give an upper bound, we use the MLLT by simply ignoring the requirement that, 
in the remaining $\approx \delta_t T$ time, the particle has to stay inside $R_T$. Switching from discrete to continuous time is a non-trivial step. 
For "typical" number of collisions, Theorem \ref{thm:MLLT}(a) is used,
whereas the contribution of non-typical number of collisions is negligible
by Theorem \ref{thm:MLLT}(b). This gives the upper bound in (H2').
To prove the lower bound, one needs to verify that the error made 
by ignoring the requirement that the particle has to stay inside $R_T$
for the last $\approx \delta_t T$ time is negligible. If a particle leaves
$R_T$ and returns to $\langle \alpha \sqrt T, \gamma \sqrt T \rangle$,
then in particular it has to travel a distance 
$\min \{ \alpha, 1-\alpha, \gamma, 1-\gamma\} \sqrt T$
during time $\delta_t T$. This has small probability which gives the lower bound
in (H2') (in \cite{DN16} $\delta_t$ is chosen small given $\alpha \in (0,1)$,
now we need to choose it small given $\alpha, \gamma \in (0,1)$). No other
substantial change is required.

\subsection{Proof of (H3)}

As in case of (H2), we use the change of coordinates to reformulate 
(H3) as

{\it (H3')
For any $x \in (0,1)$ and $y \in (-1,1)$, and for any 
proper standard family $\cG'$ with $[[\cG']] = (0,0)$
$$
\lim_{\delta \to 0} \lim_{L \to \infty} \int_{[0, \delta L^2] \cup [L^2/\delta, \infty)}
L \nu_{\cG'}(
\cZ'_t = \langle (xL,yL)  \rangle,
\min \{ 
\tau_0^{\cX'},
\tau_{L }^{\cX'} \} > t
)dt = 0.
$$
}
The fact that (H3') implies (H3) follows the same way as we proved that
(H2') implies (H2). In fact, this case is easier as contrary
to the case of (H2). We only need an upper bound here and so we can ignore
the requirement $\cE_t = z'$ at the cost of losing a constant multiplier.

As in the upper bound of (H2'), we can derive that for given 
$(x,y) \in (0,1)^2$ and $\eps >0$ there exists $\delta$
so that for large enough $L$ and for any $t < \delta L^2$,
$$
\nu_{\cG'}(
\cZ'_t = \langle (xL,yL)  \rangle,
| \tau_{x L /2}^{\cX'} < \tau_0^{\cX'}) < \frac{\eps}{L^2}.
$$
Using this estimate, the proof of (H3') follows as in \cite[Lemma 7.2]{DN16}.

\section*{Acknowledgement}
This work started when both authors were at University of Maryland, College Park.
PN was partially supported by NSF DMS 1800811 and NSF DMS 1952876.

\end{document}